\newcommand{\simplexwater}{\texttt{SimplexWater}}
\newcommand{\heavywater}{\texttt{HeavyWater}}
\newtheorem{Theorem}{Theorem}
\newtheorem{Definition}{Definition}
\newtheorem{Proposition}{Proposition}
\newtheorem{Lemma}{Lemma}
\newcommand{\blfootnote}[1]{%
  \begingroup
  \renewcommand{\thefootnote}{}
  \footnotetext{#1}%
  \addtocounter{footnote}{-1}
  \endgroup
}
\title{\heavywater{} and \simplexwater{}:
Distortion-Free LLM Watermarks for Low-Entropy Next-Token Predictions}
\author{
Dor Tsur$^*$\\
Ben Gurion University\\
\And
Carol Xuan Long$^*$\\
Harvard University\\
\And
Claudio Mayrink Verdun\\
Harvard University\\
\And
Hsiang Hsu\\
JPMorgan Chase\\
\And
Richard Chen\\
JPMorgan Chase\\
\And
Haim Permuter\\
Ben Gurion University\\
\And
Sajani Vithana$^{\dagger}$\\
Harvard University\\
\And
Flavio P. Calmon$^{\dagger}$\\
Harvard University\\
}
\begin{document}

\maketitle
\blfootnote{\textsuperscript{*}Equal contributions. } 
\blfootnote{\textsuperscript{$\dagger$}Equal contributing senior authors. } 

\begin{abstract}
Large language model (LLM) watermarks  enable authentication of text provenance, curb misuse of machine-generated text, and promote trust in AI systems. 
Current watermarks operate by changing the next-token predictions output by an LLM. 
The updated (i.e., watermarked) predictions depend on random side information produced, for example, by hashing previously generated tokens. 
LLM watermarking is particularly challenging in low-entropy generation tasks -- such as coding -- where next-token predictions are near-deterministic. 
In this paper, we propose an optimization framework for watermark design. 
Our goal is to understand how to most effectively use random side information in order to maximize the likelihood of watermark detection and minimize the distortion of generated text. 
Our analysis informs the design of two new watermarks: \heavywater{} and \simplexwater{}. Both watermarks are tunable, gracefully trading-off between detection accuracy and text distortion. They can also be applied to any LLM and are agnostic to side information generation.  We examine the performance of \heavywater{} and \simplexwater{} through several benchmarks, demonstrating that they can achieve high watermark detection accuracy with minimal compromise of text generation quality, particularly in the low-entropy regime. Our theoretical analysis also reveals surprising new connections between LLM watermarking and coding theory. 
%
%
\end{abstract}

\section{Introduction}\label{sec:intro}
Watermarking large language models (LLMs) consists of embedding a signal into the text generation process that allows reliable detection of machine-generated text. Over the past two years, there have been increasing calls for LLM watermarking by both policymakers \cite{ncsl_ai_legislation_2024,rijsbosch2025adoption,eu_ai_act_2024} and industry \cite{whitehouse_ai_commitments_2023}. Watermarks enable authentication of text provenance \cite{chandra2024reducing}, promote trust in AI systems \cite{rand2024watermarking}, and can address copyright and plagiarism issues \cite{sander2024watermarking,panaitescu2025can}. Ideally, watermarks should be detectable directly from text without access to the underlying LLM.
Watermarks should also be tamper-resistant -- i.e., robust to minor edits or paraphrasing of watermarked text \cite{kirchenbauerreliability} -- and incur little degradation in text quality. 

LLMs are watermarked by changing their next-token distributions according to random side information (see Fig. \ref{fig:system} for a visualization). 
Side information is typically generated  by hashing previous tokens and secret keys, then using the hash to seed a random number generator to produce a sample $s$ \cite{kirchenbauer2023reliability,dathathri2024scalable,zhaoprovable}.
The side information $s$ is then used to change the  distribution from which the next token $x$ is sampled. 
Watermarks are detected by mapping a token $x$ and corresponding side information $s$ to a score $f(x,s)$. 
By averaging scores across a sequence of tokens, we can perform statistical tests to decide if the text is watermarked. 
Watermarking is particularly challenging when the next-token distribution has low entropy, as is often the case in tasks such as code generation \cite{kuditipudi2023robust}.




\textbf{Our goal} \emph{is to design watermarks that optimally use  side information to maximize detection accuracy and minimize distortion of generated text.} 
%
%
%
Our starting point is a minimax optimization framework for watermark design. 
This framework allows us to jointly optimize how side information is embedded into next-token distributions and the score function used for watermark detection.
When restricting the optimization to binary-valued scores, we uncover  a surprising new connection between LLM watermarking and coding theory:  \textit{designing minimax-optimal watermarks is equivalent to constructing codes with large Hamming distance between codewords}.
We use this finding to design a new watermark called \simplexwater{} based on Simplex codes \cite{roth2006introduction}. In the low-entropy regime, 
\simplexwater{} is optimal across binary watermarks and empirically outperforms competing methods that use binary scores (cf. Red-Green \cite{kirchenbauer2023watermark}, and Correlated Channel \cite{long2025optimized} in Fig.~\ref{fig:fig1}). 

\begin{wrapfigure}{r}{0.48\linewidth}
  \vspace{-1.6em}  
  \includegraphics[width=\linewidth]{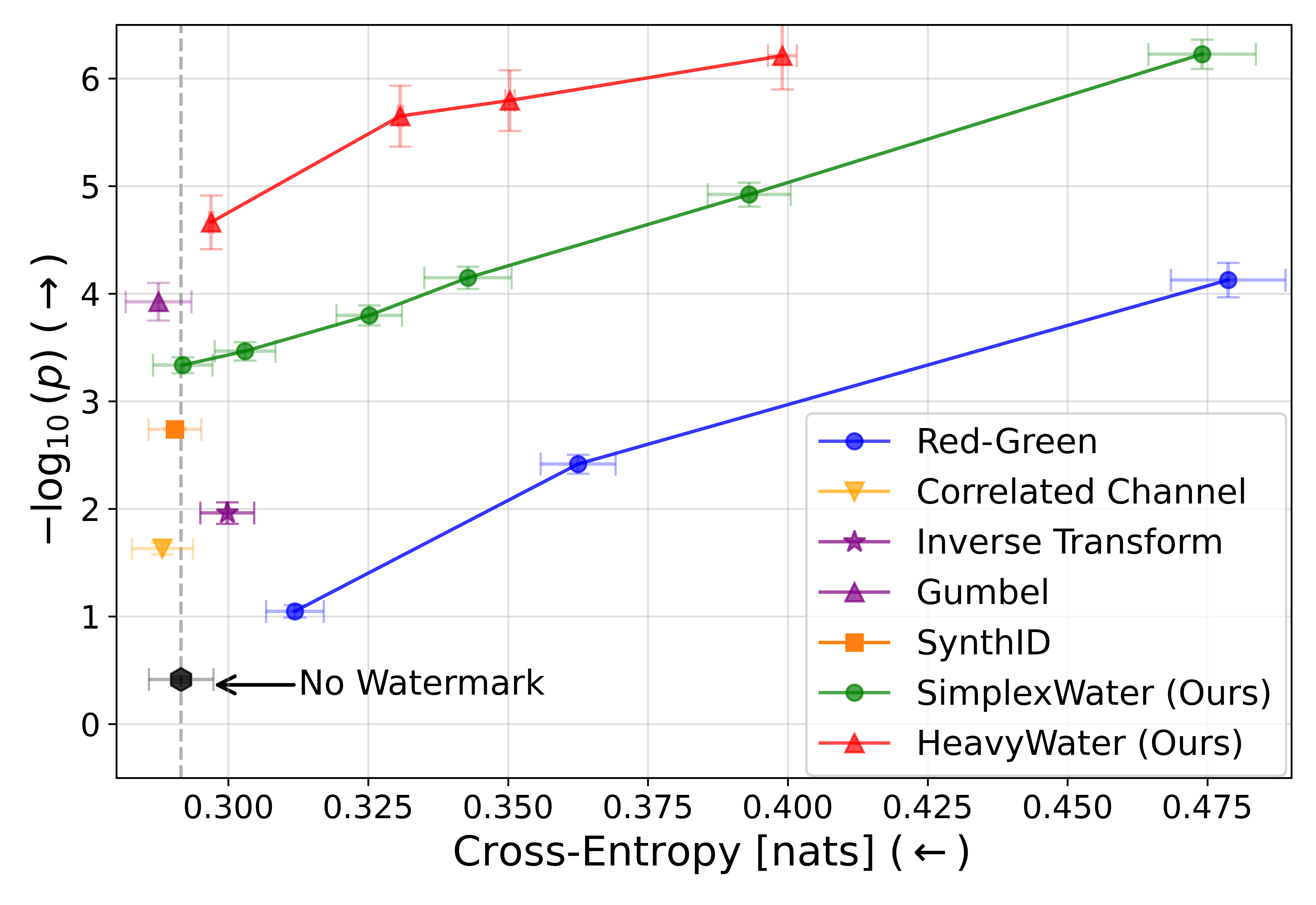}
    \vspace{-1.1em}  
  \caption{\heavywater{} and \simplexwater{} demonstrate favorable detection performance (measured by p-values) with minimal distortion to the base unwatermarked model (measured by Cross-Entropy). See Section \ref{sec:numerics} for details.
  }
    \vspace{-1.4em}
    \label{fig:fig1}
\end{wrapfigure}

Next, we relax the binary  restriction on the score function. We consider score functions whose outputs 
are independently and identically (i.i.d.) drawn from a continuous distribution prior to the start of watermarking.
Interestingly, we show that the Gumbel watermark \cite{aaronson2023watermark} is a specific instantiation of our framework when scores are drawn from a Gumbel distribution.
We prove that, in the low-entropy regime, watermark detection depends on the weight of the tail of the distribution from which scores are sampled.
%
This observation leads to \heavywater{}, 
whose scores are drawn from a heavy-tailed distribution.
\heavywater{} outperforms competing state-of-the-art watermarks in terms of detection-distortion trade-offs (cf. Gumbel \cite{aaronson2023watermark} in Fig.~\ref{fig:fig1}).

In practice, \simplexwater{} and \heavywater{} are implemented by solving an optimal transport (OT) problem \cite{villan2006text} that maximizes the average score across all couplings between the side information and the next-token distributions. We efficiently solve the OT problem using Sinkhorn's algorithm  \cite{cuturi2013sinkhorn}. Since the OT preserves the average next-token distribution, both watermarks are \emph{distortion-free}, and a user without knowledge of the side information would not perceive an average change in generated text.   \simplexwater{} and \heavywater{} are also \emph{tunable}: we provide a simple scheme to further increase watermark detection accuracy at a small distortion cost by upweighting high-score tokens.

Our \textbf{main contributions} are:
\begin{itemize}[leftmargin=1em]
    \item We introduce \simplexwater{}, a binary score watermark rooted in coding theory, and derive optimality guarantees across all binary-score watermarks. 
    \item We analyze watermarks that use randomly generated score functions drawn from a continuous distribution, and show that the Gumbel watermark \cite{aaronson2023watermark} is a special case of this construction.
    \item We prove that the detection power of watermarks with randomly generated score functions depends on the tail of the score function. This leads to \heavywater{}, a new watermark whose scores are randomly drawn from a heavy-tailed distribution.
    \item We demonstrate the favorable performance of \heavywater{} and \simplexwater{} across an array of models, datasets, and tasks, relative to state-of-the-art watermarks.
\end{itemize}

\subsection*{Related Works}
We discuss the related work most closely related to ours next, and provide a broader survey of the watermarking literature in Appendix \ref{apdx:related_work}.
The first LLM watermark was proposed in \cite{kirchenbauer2023watermark} -- referred here as the Red-Green watermark. This method partitions the token vocabulary into two lists, which are then used to reweigh the token distribution via exponential tilting. The Red-Green watermark was extended via different random side information generation schemes in \cite{kirchenbauer2023reliability}.
Watermarking has since been extensively studied, with more recent work introducing new sampling schemes to improve watermarking \cite{zhao2024permute} and multi-draft watermark generation methods \cite{giboulot2024watermax}. The use of threshold tests over average scores to detect watermarks -- a common paradigm in the literature, a design decision we make here -- was questioned in  \cite{fernandez2023three,li2024robust}. \cite{li2024robust}, in particular, shows that more sophisticated statistical tests can boost watermark detection. 
We note that our watermarks can be paired with multi-draft methods such as \cite{giboulot2024watermax} to boost detection, though we do not explore this here.

\textbf{Distortion-free Watermarking.} In watermarking, distortion quantifies how much a watermark distribution differs from the original LLM output and is a proxy for textual quality.
\cite{aaronson2023watermark} uses the Gumbel-max trick to design a distortion-free watermark. This scheme was relaxed to a soft reweigh of the next-token distribution \cite{fu2024gumbelsoft} following the idea from the Red-Green watermark. The Gumbel watermark achieves excellent performance in our benchmarks, though it is outperformed by \heavywater{}. 
Another popular distortion-free watermark is \cite{kuditipudi2023robust}, which uses inverse transform sampling. 
SynthID \cite{dathathri2024scalable}, in turn, produces watermarked tokens via a strategy called tournament sampling. This watermark was extensively evaluated in real-world user tests, and we also select it as a competing benchmark. 
More recently, \cite{long2025optimized} proposed a binary-score watermark based on partitioning the token vocabulary into an arbitrary number of sets, followed by a simple binary test for watermark detection. The method in \cite{long2025optimized} is simple to implement and incurs little runtime overhead. However, it underperformed both \heavywater{} and \simplexwater{} in our benchmarks. Additional distortion-free watermarks include \cite{hetheoretically}, which combines a surrogate model with the Gumbel-max trick to boost watermark detection.

\textbf{Low-Entropy Watermarks.}
Several schemes were proposed to address the challenge of watermarking low-entropy distributions \cite{huang2023towards,kirchenbauer2023reliability}.
\cite{mao2024watermark} proposes a resampling method that is applied to adapt the Red-Green watermark to low-entropy distributions, while \cite{hou2024semstamp} turns to semantic sentence-level watermarks.
\cite{chang2024postmark} uses a logit-free method that injects the texts with words from an input-dependent set, and \cite{lu2024entropy} weights the tokens' scores according to their entropy. We share a similar design goal of optimizing the watermark for the low-entropy regime.

In practice, side information will not be perfectly random: it will depend on previous tokens and a secret shared key through various hashing schemes \cite{kirchenbauer2023watermark,yoo2023advancing}.
Hashing schemes range from simple $\mathsf{max}/\mathsf{min}$ operations to elaborate adaptive context windows \cite{dathathri2024scalable} and semantic hashing \cite{ren2023robust}.
We view side information generation as a separate yet no less important problem. Instead, we focus on \emph{how to optimally use} side information, and we develop a principled and theory-guided approach for watermark design.

\begin{figure}[!t]
    \centering
    \includegraphics[trim={190pt 120pt 100pt 65pt}, clip, width=0.75\linewidth]{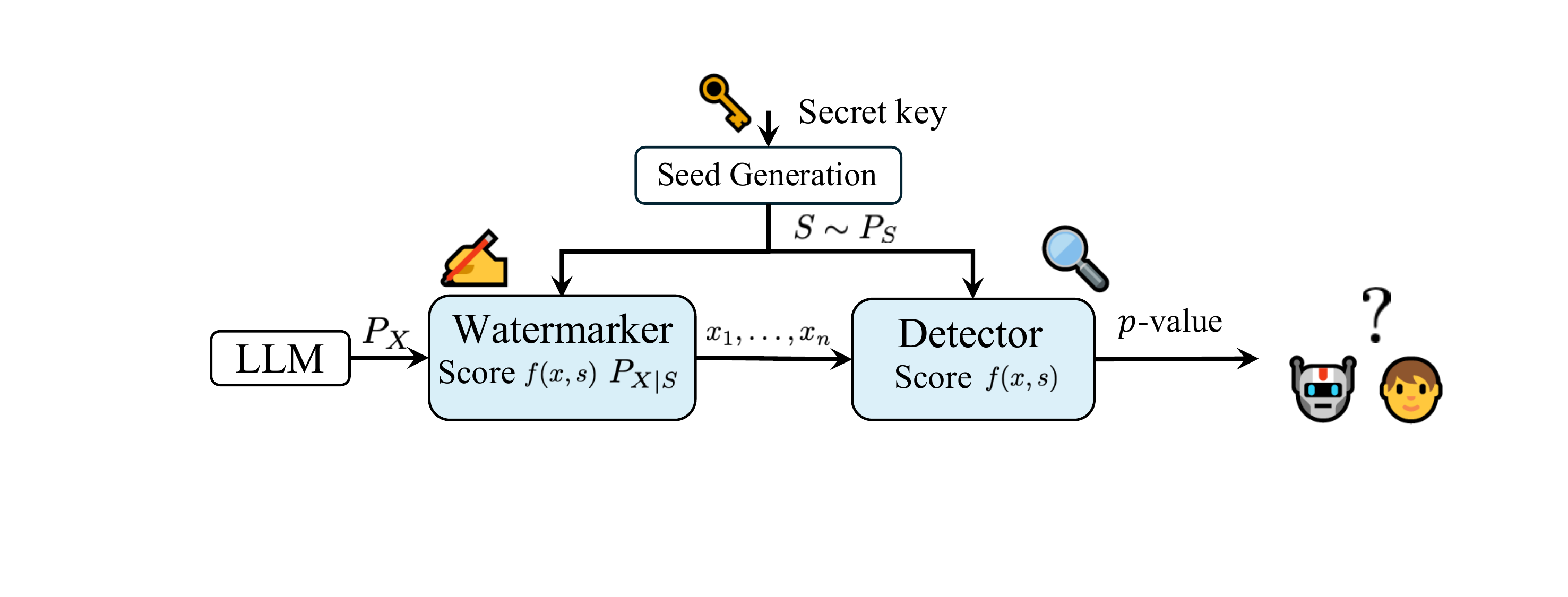}
    \vspace{-0.5em}
    \caption{Visualization of the components of watermarking design.}
    \label{fig:system}
    \vspace{-1em}
\end{figure}

\section{LLM Watermarking Framework}\label{framework}

In this section, we outline an optimization formulation for LLM watermarking. We denote random variables using capital letters (e.g., $X$ and $S$), their realizations by lower-case letters (e.g., $x$ and $s$), and their distributions by sub-scripted $P$ or $Q$ (e.g., $P_X$ and $P_S$). We consider an LLM with token vocabulary $\mathcal{X}=[1:m]\triangleq\{1,\dots,m\}$. During generation, the $t$-th token is drawn auto-regressively from a distribution $Q_{X_t|X_1,\dots,X_{t-1}}$ with support in $\mathcal{X}$. We assume watermarking is performed on a per-token basis, so we denote $P_X=Q_{X_t|X_1,\dots,X_{t-1}}$ for simplicity.

Consider two parties: the \textit{watermarker} and the \textit{detector} (see Fig. \ref{fig:system}). 
For each token generated by the model, both parties share 
random side information, represented by a random variable $S\sim P_S$ over a discrete alphabet $\mathcal{S}=[1:k]$. 
The watermarker has access to the next-token distribution $P_X$. Their goal is to change $P_X$ using the side information $S$. The detector, in turn, identifies the watermark from a sequence of observed tokens and side information pairs $(X,S)$ and does not know $P_X$.
We divide the watermarking procedure into three main components.

The first component is \textbf{random side information generation}. Both the watermarker and the detector must produce the same side information $S$. In practice, this is accomplished by employing hashing strategies: previous tokens and a secret key are combined and hashed to produce a seed for a random number generator \cite{kirchenbauer2023watermark,dathathri2024scalable}. This  generator then serves as the distribution $P_S$ from which side information $S$ is sampled. Popular hashing strategies consider $\mathsf{min}$, $\mathsf{max}$ or $\mathsf{sum}$ operations over some finite context window of previously generated tokens. Note that both the watermarker and the detector can produce the side information since it only depends on previously generated text and shared keys. 


The second component deals with \textbf{how randomness is used}. 
We begin by designing a score function $f:\cX\times\cS\to\RR$ that maps a token $x$ and corresponding side information $s$ onto a score $f(x,s)$.
The score function $f$ is fixed and known to both the watermarker and the detector.
In the watermarking stage, we use $f$ and the shared randomness $s$ to change the next-token distribution. 
Specifically, the watermarker samples from the conditional distribution $P_{X|S=s}$ instead of $P_X$. The distribution $P_{X|S=s}$ is designed to increase the probability of tokens with high scores $f(\cdot, s)$.
We call the altered distribution $P_{X|S=s}$ the \emph{watermarked distribution.}

The last component is \textbf{watermark detection}. In the detection stage, the detector receives a sequence of tokens and side information realizations $\{(x_t,s_t)\}_{t=1}^n$, from which they compute a sequence of scores $\{f(x_t,s_t) \}_{t=1}^n$. We assume the sequence of tokens is declared as watermarked if the averaged score $\frac{1}{n}\sum_{t=1}^nf(x_t,s_t)  \geq \tau$, where $\tau$ is a threshold that trades off between specificity and sensitivity of  watermark detection. See Figure \ref{fig:system} for a visualization of the watermarking procedure.

Most existing watermarks can be instantiated in terms of the three components above. For example\footnote{See Appendix \ref{appendix:instantiation} in which we instantiate additional watermarks within the proposed setting.}, in the Red-Green watermark \cite{kirchenbauer2023watermark}, $S$ is a sequence of $m$ bits ($k=2^m$), assigning one bit per token. A realization $s$ of side information is used to partition $\cX$ into two lists, corresponding to the entries that are drawn as 1 or 0. Tokens marked as 1 (``green list'') are upweighted, and tokens marked as 0 (``red list'') are downweighted, resulting in the watermarked distribution $P_{X|S=s}$ from which the next token is sampled. The score function is binary: $f(x,s)=1$ if $x$ is in the green list determined by $s$, or 0 otherwise. Detection is done by averaging binary scores and performing a  threshold test.
\paragraph{An Optimization Formulation for Watermarking.} Given the framework above, we present an optimization formulation from which we derive and analyze \simplexwater{} and  \heavywater{}. Our focus is on optimizing the watermarked distribution $P_{X|S}$ and the score function $f(x,s)$ when $P_S$ is the uniform distribution and we generate a sample $S\sim P_S$ for each token.
We also assume watermark detection is based on thresholding the average score $\frac{1}{n}\sum_{t=1}^nf(x_t,s_t)$ which, in turn, acts as a proxy for $\EE[f(X,S)]$.
When the text is watermarked, this expectation is with respect to $P_{XS}=P_{X|S}P_S$, and otherwise, it is taken with respect to $P_XP_S$.\footnote{Observe that $P_XP_S$ corresponds to the null hypothesis (non-watermarked text):  the generated tokens are independent of the side information $S$.}  Our goal is to design a pair $(f,P_{X|S})$ that maximizes the gap between average score when text is watermarked relative to when the text is not watermarked. 

Watermark performance will inevitably depend on the entropy of the next-token distribution $P_X$ \cite{kirchenbauer2023reliability,kuditipudi2023robust}. We focus on low-entropy distributions, whose watermarking is considered to be challenging \cite{lu2024entropy}. We adopt the low-entropy constraint given the following definition.

\begin{definition}[Low-entropy distributions]
We use \emph{min-entropy} as our measure of uncertainty for the next-token distribution $P_X$, defined as
\[
H_\text{min}(P_X) = -\log_2 \max_x P_X(x).
\]
A distribution $P_X$ is said to be \emph{low-entropy} if its min-entropy satisfies $H_{\text{min}}(P_X) \le 1$ bit/token, i.e., at least one token has probability mass greater than $\tfrac{1}{2}$.  
\label{def:low-entropy distributions}
\end{definition}

\begin{remark}[Watermarking in Low-Entropy Regime]
Our theoretical guarantees imply favorable watermarking performance in low-entropy scenarios. 
The ‘low-entropy’ constraint used in our analysis is not merely a theoretical assumption. It is based on the empirical observation that LLMs’ next-token distributions are \textbf{inherently low entropy}. Even though coding, Q\&A, and summarization tasks are commonly understood as having varying degrees of ‘entropy’ – with coding assumed to have lower entropy – we still observe $>90\%$ of next-token predictions across these tasks falling well within the low-entropy considered in our theoretical analysis for practical temperature values (see Appendix \ref{apdx:low_ent} Figures \ref{fig:histogram_stats_qa} and \ref{fig:histogram_stats_coding}). Refer to \cite[Page 13]{li2024statistical} and \cite[Page 4]{fairoze2025publicly} for a similar discussion.
\end{remark}



Given the aforementioned assumptions, the optimal score function $f$ and the joint distribution $P_{XS}$ that maximize detection performance for the worst-case distribution in $\cP_\lambda$ are the solution of the optimization  problem  
%
%
%
%
%
\begin{equation}
\Dgap(m,k,\lambda,\mathcal{F})=\underbrace{\max_{f\in\mathcal{F}}}_{\text{Score}}\hspace{0.3cm}\underbrace{\min_{P_X\in\cP_\lambda}}_{\substack{\text{LLM}\\ \text{distribution}}}\underbrace{\max_{P_{XS}}}_{\substack{\text{Watermarked}\\ \text{distribution}}}\Big(\EE_{P_{X|S}P_S}\left[f(X,S)\right]-\EE_{P_{X}P_S}\left[f(X,S)\right] \Big).
    \label{D_gap}
\end{equation}
We call \eqref{D_gap} the \textit{maximum detection gap}. 
%
We restrict the inner maximization in \eqref{D_gap} to couplings between the marginals $(P_X,P_S)$, i.e., $P_{X|S}$ induces a joint distribution $P_{X,S}$ on $\cX\times\cS$ whose marginals are $P_X$ and $P_S$. As a result, the inner maximization is an OT problem \cite{villani2009optimal,peyre2019computational}. 
We denote the class of couplings of $(P_X,P_S)$ as $\Pi_{X,S}$. 
By limiting watermarked distributions to couplings, we ensure that the watermark is \textit{distortion-free}:  we have $\EE_{S}[P_{X,S}]=P_X$. From the perspective of a user who does not know $S$, distortion-free watermarks incur (in theory) no perceptible change in text quality.  
In Section \ref{sec:binary} we propose a simple scheme that distorts the coupling solution to further increase $\Dgap$.

\textbf{Proposed method.} Following the optimization \eqref{D_gap}, we propose two watermarking schemes in Sections \ref{sec:binary} and \ref{sec:beyond_binary}, respectively.
Both watermarks differ on the nature of the score function $f$, but share the same underlying algorithmic structure given in Algorithm \ref{alg:watermark} and described next.

Our watermarks receive as input the next-token distribution $P_X$, side information sample $s\in\cS$, and a score function $f$.
Given a pair $(P_X,f)$, the inner maximization in \eqref{D_gap} amounts to an OT problem between $P_X$ and $P_S$, which is set to be uniform on $[1:k]$. 
The OT problem is given by
\begin{align}\label{eq:OT_problem}
P^*_{XS}=\arg\max_{P_{XS}\in\Pi_{X,S}}\left(\EE_{P_{XS}}\left[f(X,S)\right]-\EE_{P_{X}P_S}\left[f(X,S)\right] \right).
\end{align}
We solve \eqref{eq:OT_problem} using Sinkhorn's algorithm \cite{cuturi2013sinkhorn}. Note that the score function $f$ can be equivalently denoted as the $(m\times k)$-dimensional OT cost matrix $\rC$, which is defined as $\rC_{x',s'}=-f(x',s')$ for $(x',s')\in [1:m]\times[1:k]$. 
Sinkhorn's algorithm yields a coupling $P_{X,S}$, from which the watermarked distribution is obtained via $P_{X|S=s}(\cdot|s)=k\cdot P_{X,S}(\cdot,s)$. The watermarked distribution is used to sample the next token.

\textbf{Tilting.} The watermarked distribution returned by the OT optimization does not change the average next-token probabilities. To further increase watermark detectability, we propose a tilting operation to $P_{X|S=s}$, which we denote by $\mathsf{tilt}$. 
The tilting operation increases the probability of higher scoring tokens, at the cost of incurring distortion to $P_{X|S=s}$.
This is done by increasing the probability of $x\in\cX$ with $f(x,s)>\EE_{P_{X,S}}[f(X,S)]$ and decrease the probability of $x\in\cX$ with $f(x,s)<\EE_{P_{X,S}}[f(X,S)]$.
The amount of tilting (and thereby the level is distortion) is determined by a parameter $\delta\in(0,1)$. 
The structure of tilting depends on the structure of $f$ and is later defined for each watermark separately.
See Algorithm \ref{alg:watermark} for the list of steps.

\textbf{Detection.} Watermark detection is performed as follows: Given a token sequence $(x_1,\dots,x_n)$ and a score $f$ we recover the set of side information samples $(s_1,\dots,s_n)$ and compute $f(x_t,s_t)$ for each observed pair $(x_t,s_t)$. We declare the text watermarked if the average score $\frac{1}{n}\sum_{t=1}^nf(x_t,s_t)$ exceeds a user-defined  threshold $\tau\geq0$.

\textbf{Limitations.} As we will see shortly, \eqref{D_gap} will inform the design of watermarking schemes with favorable empirical performance.
However, this optimization framework is not without limitations.
First, it is restricted to threshold tests applied to the averaged score, which are potentially suboptimal  \cite{xie2024debiasing}.
Second, we assume perfect randomness for side information generation, i.e., we can sample a  uniformly distributed and i.i.d. random variable $S$ for each token. When $X$ is not watermarked, we also assume it is independent of $S$. In sequential hashing schemes this is not necessarily the case, since both $S$ and $X$ will depend on previous tokens. 
We emphasize that generating high-quality side information remains an important practical challenge relevant to all existing watermarks, including ours. Nevertheless, from a design perspective, this challenge is somewhat orthogonal to
our guiding  question of \emph{how to optimally use} side information.
We examine the impact of non-i.i.d. side information on our watermark design in Appendix \ref{apdx:seeding}.


\begin{algorithm}[!t]
\caption{\simplexwater{} and \heavywater{} Watermark Generation}
\begin{algorithmic}[1]
    \State \textbf{Inputs}: Token distribution $P_X$, score function $f$, side information $s$, tilting parameter $\delta$.  
    \State \textbf{Outputs:} Watermarked distribution $P_{X|S=s}$
    \State Calculate OT cost matrix $\rC_{x',s'}=-f(x',s')$, for all $(x',s')\in[1:m]\times[1:k]$
    \State $P_{X,S} = \mathsf{Sinkhorn}\left( \rC, P_X, P_S \right)$, where $P_S=\mathsf{Unif}([1:k])$
    \State $P_{X|S=s}(x|s)= k\cdot P_{X,S}(x,s)$ 
    \If{$\delta>0$}
    \State $P_{X|S=s}\gets \mathsf{tilt}(P_{X|S=s},s,\delta)$
    \State Normalize $P_{X|S=s}$
    \EndIf
    \State \textbf{Return} $P_{X|S=s}$
\end{algorithmic}
\label{alg:watermark}
\end{algorithm}

\section{\simplexwater{}: Watermark Design With Binary Scores}\label{sec:binary}


In this section, we consider the class of binary score functions, i.e.,  $\mathcal{F}=\mathcal{F}_{\mathsf{bin}}\triangleq \{f:\mathcal{X}\times\mathcal{S}\mapsto\{0,1\}\}$. Within this class, we solve \eqref{D_gap} to obtain the optimal binary score function $f$, and present the corresponding optimal watermark, which we call \simplexwater{}. We first present a simplification of the maximum detection gap in \eqref{D_gap} for $f\in \mathcal{F}_{\mathsf{bin}}$ and the low-entropy regime $\lambda\in[1/2,1)$.


\begin{Proposition}\label{prop:simplified_HD}
Let $\lambda\in\left[\frac{1}{2},1\right)$.
For $f\in\fbin$, define the vector $f_i=[f(i,1),\dots,f(i,k)]\!\in\!\{0,1\}^k$ for each $i\in\mathcal{X}$. Then,
\begin{align}\Dgap(m,k,\lambda,\mathcal{F}_{\mathsf{bin}})=\max_{f\in\fbin}\quad\min_{\substack{\substack{i,j\in\mathcal{X},i\neq j}}} \quad \frac{(1-\lambda) d_H(f_i,f_j)}{k},\label{equivalent}
\end{align}
where $d_H(a,b)=\sum_{i=1}^k\mathbf{1}_{\{a_i\neq b_i\}}$ denotes the Hamming distance between $a,b\in\{0,1\}^k$ and $\mathbf{1}_{\{\cdot\}}$ is the indicator function.
\end{Proposition}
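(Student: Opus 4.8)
The plan is to unpack the three nested optimizations in \eqref{D_gap} for the special case $f \in \fbin$ and $\lambda \in [1/2,1)$, working from the inside out.

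\textbf{Step 1: The inner OT maximization.} Fix $f \in \fbin$ and $P_X \in \cP_\lambda$. Since $P_S$ is uniform on $[1:k]$, the objective $\EE_{P_{XS}}[f(X,S)] - \EE_{P_X P_S}[f(X,S)]$ has a constant second term equal to $\frac{1}{k}\sum_{x} P_X(x) \sum_{s} f(x,s) = \frac{1}{k}\sum_x P_X(x)\, w(f_x)$, where $w(\cdot)$ denotes Hamming weight. For the first term, I would argue that maximizing $\EE_{P_{XS}}[f(X,S)]$ over couplings $\Pi_{X,S}$ with binary cost is a transportation problem: we want to route as much mass as possible onto pairs $(x,s)$ with $f(x,s)=1$. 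Because each $s \in [1:k]$ has exactly $1/k$ mass to distribute, and for a given $s$ there are $w(f_{\cdot,s})$ choices of $x$ with $f(x,s)=1$, I expect the optimal value to simplify. I would guess — and then verify — that in the regime of interest the OT optimum is governed by the token with the largest mass: the watermarker concentrates its ``budget'' of side-information values on whichever token $P_X$ puts most weight on, and the achievable first term becomes something like $1 - (\text{fraction of }s\text{ where all high-mass tokens have }f=0)$. The precise accounting here is the routine part.

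\textbf{Step 2: The middle (adversarial) minimization.} Here is where $\lambda \geq 1/2$ does the real work. With the min-entropy constraint $\max_x P_X(x) \le \lambda$ and $\lambda \ge 1/2$, at most one token can have mass exceeding $1/2$, and in fact the worst case for the watermarker will put mass $\lambda$ on one token $i$ and mass $1-\lambda$ on another token $j$ (a two-point distribution), since spreading mass further only helps the watermarker. The adversary then picks the pair $(i,j)$ that is hardest to separate given the columns of $f$. After substituting the Step-1 value for such a two-point $P_X$, the detection gap for fixed $f$ should collapse to $(1-\lambda)$ times a combinatorial quantity measuring how ``distinguishable'' rows $f_i$ and $f_j$ are — namely the number of coordinates $s$ where exactly one of $f_i(s), f_j(s)$ equals $1$, which is precisely $d_H(f_i, f_j)$, divided by $k$. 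I would need to check carefully that the OT value for the two-point distribution $\lambda \delta_i + (1-\lambda)\delta_j$ equals $\frac{(1-\lambda) d_H(f_i,f_j)}{k}$ and not something involving weights of individual rows; the intuition is that the $\lambda$-mass token gets perfectly covered by side-information values where $f_i=1$, and the residual gain comes only from the $(1-\lambda)$-mass token on the $d_H$ coordinates where it differs.

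\textbf{Step 3: The outer maximization.} This step is immediate once Steps 1--2 are done: the outer $\max_{f\in\fbin}$ of the expression $\min_{i\ne j} \frac{(1-\lambda)d_H(f_i,f_j)}{k}$ is exactly the right-hand side of \eqref{equivalent}, so there is nothing further to prove. I would close by noting this is the max-min-Hamming-distance form that sets up the Simplex-code connection.

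\textbf{Main obstacle.} The delicate point is Step 2 — showing that the adversary's optimal $P_X$ is a two-point mass and that the resulting OT value is exactly $(1-\lambda)d_H(f_i,f_j)/k$ rather than an expression that also depends on the individual Hamming weights $w(f_i), w(f_j)$. This is precisely where $\lambda \ge 1/2$ is essential: it forces the ``dominant'' token to absorb a full $1/k$-slice of side information on each of its $f=1$ coordinates without contention, so that those coordinates contribute zero net gain and only the symmetric-difference coordinates survive. I would handle this by writing the OT value explicitly for a general $P_X$, using LP duality or a direct exchange/greedy argument to show the minimizer is supported on two atoms, and then doing the two-point computation by hand. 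The first and third steps are bookkeeping; the adversarial reduction is the crux.
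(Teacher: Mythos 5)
Your high-level skeleton matches the paper's: reduce the adversary's $P_X$ to a two-point ``spike'' distribution $\lambda\delta_i+(1-\lambda)\delta_j$, compute the OT value for such a distribution, and read off the Hamming distance. But there are two places where the proposal has a genuine gap rather than a deferred routine calculation.

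First, the reduction to two atoms. You suggest ``LP duality or a direct exchange/greedy argument,'' and give the informal justification that ``spreading mass further only helps the watermarker.'' The paper's actual argument is that $\Psi(P_X)=\max_{P_{XS}}(\EE_{P_{XS}}[f]-\EE_{P_XP_S}[f])$ is \emph{concave} in $P_X$ (mix two optimal couplings to get a valid coupling for a mixture of marginals), so its minimum over the convex polytope $\cP_\lambda$ is attained at an extreme point, and for $\lambda\ge 1/2$ the extreme points are exactly the spikes $(\lambda,1-\lambda,0,\dots,0)$. Your informal statement is essentially the concavity intuition, but a greedy/exchange argument is a materially different route and you would need to actually carry it out; the concavity proof is the clean one.

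Second, and more seriously: you explicitly flag that you would ``need to check carefully that the OT value for $\lambda\delta_i+(1-\lambda)\delta_j$ equals $\frac{(1-\lambda)d_H(f_i,f_j)}{k}$ and not something involving weights of individual rows,'' and your stated intuition for why it works (``the $\lambda$-mass token gets perfectly covered by side-information values where $f_i=1$, and the residual gain comes only from the $(1-\lambda)$-mass token on the $d_H$ coordinates where it differs'') is not correct. When you actually parametrize couplings by $a_s=Q_{S|X}(s\mid i)$, optimize over $a_s$ under the marginal constraint $\lambda a_s+(1-\lambda)Q_{S|X}(s\mid j)=1/k$, and evaluate, the OT value for the \emph{fixed ordered} pair (with $i$ getting mass $\lambda$) comes out to
\[
\frac{\lambda\,m_-^{ij}+(1-\lambda)\,m_+^{ij}}{k},
\qquad
m_\pm^{ij}:=\bigl|\{s: f(i,s)-f(j,s)=\pm1\}\bigr|,
\]
not $(1-\lambda)d_H/k$. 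The expression $(1-\lambda)d_H(f_i,f_j)/k$ only appears after two further moves that your proposal does not make: (i) minimizing over the \emph{ordered} pair — i.e., letting the adversary also choose which of the two tokens receives mass $\lambda$, which swaps $m_+\leftrightarrow m_-$; and (ii) using $\lambda\ge 1/2$ to argue that this swap minimum, rewritten via $\beta=m_-^{ij}/d_{ij}$ as $(\beta(2\lambda-1)+(1-\lambda))d_{ij}/k$, is driven to $\beta=0$. This is the actual place where $\lambda\ge1/2$ enters — not, as you suggest, because the dominant token ``absorbs a full $1/k$-slice without contention.'' (There are also edge cases when $m_+^{ij}>k\lambda$ or $m_+^{ij}+m_-^{ij}=k$ where even this formula changes, which the paper treats separately.) So the delicate Step 2 you correctly identify as the crux is not merely un-carried-out bookkeeping; the intuition you offer for it would lead to an incorrect intermediate formula, and the role of $\lambda\ge 1/2$ in the final algebra is different from what you describe.
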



Remarkably, \eqref{equivalent} is equivalent to a classical problem in coding theory: the design of distance-maximizing codes! 
In this problem \cite{roth2006introduction}, the goal is to design a set of $m$ binary vectors of length $k$ -- called \emph{codewords} -- with maximum pairwise Hamming distance  \cite{roth2006introduction,van1998introduction}.
There is a one-to-one equivalence between designing a code with maximum distance between codewords and designing a binary score function for watermarking LLMs: for a fixed token $x$, the score function vector $[f(x,1),\dots,f(x,k)]$ can be viewed as a codeword. Conversely, a distance-maximizing code can be used to build a binary score function that maximizes the detection gap. 
This connection with coding theory allows us to use classic coding theory results -- specifically the Plotkin bound \cite{plotkin1960binary} -- to derive an upper bound for $\Dgap$ in  \eqref{equivalent}. This result is stated in the next theorem. 

\begin{Theorem}[Maximum Detection Gap Upper Bound]\label{thm:converse_upperbound_binary} Consider the class of binary score functions $\mathcal{F}_{\mathsf{bin}}$ and uniform $P_S$. Then, for any $\lambda\in\left[\frac{1}{2},1\right)$, the maximum detection gap can be bounded as 
\begin{align}\label{eq:ub_binary}
\Dgap(m,k,\lambda,\mathcal{F}_{\mathsf{bin}})\leq\frac{m(1-\lambda)}{2(m-1)}
\end{align}
\vspace{-1em}
\end{Theorem}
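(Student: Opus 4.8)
The plan is to combine Proposition~\ref{prop:simplified_HD} with the classical averaging argument behind the Plotkin bound. Since $\lambda\in[1/2,1)$, Proposition~\ref{prop:simplified_HD} gives
\[
\Dgap(m,k,\lambda,\mathcal{F}_{\mathsf{bin}}) \;=\; \frac{1-\lambda}{k}\;\max_{f\in\fbin}\ \min_{i\neq j} d_H(f_i,f_j),
\]
because the multiplicative factor $(1-\lambda)/k$ depends neither on $f$ nor on the pair $(i,j)$ and can be pulled outside both optimizations. Hence it suffices to prove the purely coding-theoretic statement that for every collection of $m$ vectors $f_1,\dots,f_m\in\{0,1\}^k$ one has $\min_{i\neq j} d_H(f_i,f_j)\le \frac{mk}{2(m-1)}$; multiplying through by $(1-\lambda)/k$ then yields exactly \eqref{eq:ub_binary}.

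To establish this inequality I would bound the minimum pairwise distance by the average pairwise distance over all $\binom{m}{2}$ pairs and evaluate the average by exchanging the order of summation. Writing $f_i=(f_{i,1},\dots,f_{i,k})$ and letting $n_\ell=\lvert\{i:f_{i,\ell}=1\}\rvert$, the number of unordered pairs $\{i,j\}$ differing in coordinate $\ell$ is $n_\ell(m-n_\ell)$, so
\[
\sum_{i<j} d_H(f_i,f_j) \;=\; \sum_{\ell=1}^{k} n_\ell(m-n_\ell) \;\le\; k\cdot\frac{m^2}{4},
\]
using $n(m-n)\le m^2/4$ coordinatewise. Dividing by the number of pairs $\binom{m}{2}=m(m-1)/2$ shows that the average, and therefore the minimum, pairwise Hamming distance is at most $\frac{km^2/4}{m(m-1)/2}=\frac{mk}{2(m-1)}$, which is what we needed.

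I do not anticipate a genuinely hard step: this is the textbook averaging proof of the Plotkin bound, and the main conceptual point is simply recognizing—via Proposition~\ref{prop:simplified_HD}—that the min-entropy parameter enters only through the $f$-independent factor $(1-\lambda)/k$, so the theorem reduces to a clean statement about binary codes. The only minor care needed is to note that $m\ge 2$ is assumed (so $m-1>0$) and that the bound is trivially true when two of the vectors coincide, in which case the minimum distance is $0$.
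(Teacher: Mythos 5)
Your proposal is correct and follows essentially the same approach as the paper: both invoke Proposition~\ref{prop:simplified_HD} to reduce the detection gap to a minimum-Hamming-distance maximization, then apply the Plotkin bound. The only difference is that the paper cites the Plotkin bound directly from \cite{roth2006introduction}, whereas you supply the standard self-contained averaging proof of it, which is a fine (and slightly more complete) way to present the argument.
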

The proof of Theorem~\ref{thm:converse_upperbound_binary} is given in Appendix~\ref{apdx:proof_thm_1}. For a fixed token vocabulary size $m$, this bound remains the same for any choice of $k$. Therefore, any score function constructed with any  $k$ that achieves this bound is optimal. This observation serves as the starting point for our optimal watermark design, leading to \simplexwater{}.

\paragraph{\simplexwater: An Optimal Binary-Score Watermark.} 
The upper bound in \eqref{eq:ub_binary} is achievable when the score function is constructed using a \textit{simplex code} -- a family of codes 
that attain the Plotkin bound \cite{reed1954class,muller1954application}.
A simplex code is defined as follows:
\begin{Definition}[Simplex Code]\label{simplex}
For any \(x,s \in [0:m-1]\), let \(\mathsf{bin}(x)\), \(\mathsf{bin}(s)\) denote their binary representations respectively using \(\log_2 m\) bits. 
A simplex code $\fsim:[0:m-1]\times[1:m-1]\to\{0,1\}$ is characterized by
\begin{align}\label{simplexdot}
    \fsim(x,s) \triangleq \mathrm{dot}(\mathsf{bin}(x), \mathsf{bin}(s)),
\end{align}
where $\mathrm{dot}(\mathsf{bin}(x), \mathsf{bin}(s)) \triangleq \sum_{i=1}^{\log_2 m} \mathsf{bin}(x)_i \cdot \mathsf{bin}(s)_i$ and \(\mathsf{bin}(v)_i\) denotes the \(i\)th bit in the binary representation of \(v\). 
\end{Definition}

We present next \simplexwater{}, a watermark that employs the simplex code as its score function.
\simplexwater{} operates according to the steps of Algorithm~\ref{alg:watermark}, where the cost matrix is derived from $\fsim$. 
Specifically, given a token distribution, side information and the cost matrix derived from  $\fsim$, \simplexwater{} solves the OT via Sinkhorn's algorithm to compute the watermarked distribution $P_{X|S}$.
The optimality of \simplexwater{} follows directly from its one-to-one correspondence to the Simplex code that attains the Plotkin bound and is stated formally in  the following theorem. 
\begin{Theorem}[\simplexwater{} Optimality]\label{ach} 
For any $\lambda\in\left[\frac{1}{2},1\right)$ the maximum detection gap upper bound \eqref{eq:ub_binary} is attained by \simplexwater{}.
    
\end{Theorem}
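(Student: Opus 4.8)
The plan is to show that the Simplex code achieves equality in the Plotkin bound, and then to translate this back into the language of the detection gap via Proposition \ref{prop:simplified_HD}. First I would recall the key combinatorial property of the construction in Definition \ref{simplex}: taking $m = 2^r$, the map $x \mapsto f_{\mathsf{sim}}(x,\cdot)$ sends each $x \in [0:m-1]$ to the length-$(m-1)$ binary vector whose $s$-th coordinate is $\mathrm{dot}(\mathsf{bin}(x),\mathsf{bin}(s))$, $s$ ranging over $[1:m-1]$. I would verify that for any two distinct tokens $i \neq j$, the coordinatewise difference $f_i \oplus f_j$ equals the vector $s \mapsto \mathrm{dot}(\mathsf{bin}(i)\oplus\mathsf{bin}(j),\mathsf{bin}(s))$, which is a nonzero linear functional on $\mathbb{F}_2^r$ evaluated over all nonzero $s$. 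Such a functional takes the value $1$ on exactly $2^{r-1}$ of the $2^r - 1$ nonzero points (it takes the value $1$ on exactly half of \emph{all} $2^r$ points, and it vanishes at $s=0$). Hence $d_H(f_i,f_j) = 2^{r-1} = m/2$ for every pair $i \neq j$; the Simplex code is equidistant.

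Next I would plug this into the simplified formula \eqref{equivalent} from Proposition \ref{prop:simplified_HD}, which is applicable because $\lambda \in [1/2,1)$ and $f_{\mathsf{sim}} \in \mathcal{F}_{\mathsf{bin}}$. With $k = m-1$ and $d_H(f_i,f_j) = m/2$ for all $i\neq j$, the inner minimization is trivial (all pairwise distances are equal), giving a detection gap of
\[
\frac{(1-\lambda)\,(m/2)}{m-1} = \frac{m(1-\lambda)}{2(m-1)},
\]
which is precisely the upper bound \eqref{eq:ub_binary} of Theorem \ref{thm:converse_upperbound_binary}. Since \simplexwater{} is defined to run Algorithm \ref{alg:watermark} with the cost matrix $\rC = -f_{\mathsf{sim}}$, and the inner OT maximization in \eqref{D_gap} is exactly the quantity optimized by the coupling returned by Sinkhorn, the value attained by \simplexwater{} equals this detection gap. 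Combined with the converse of Theorem \ref{thm:converse_upperbound_binary}, this shows the bound is attained, proving the theorem.

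I expect the main obstacle to be the bookkeeping around the index ranges and the "$\log_2 m$ bits" assumption: the clean equidistance argument requires $m$ to be a power of two so that the Simplex code over $\mathbb{F}_2^r$ with $r = \log_2 m$ has exactly $m$ codewords of length $m-1$. For general $m$ one has to argue that the vocabulary can be embedded (e.g. by padding to the next power of two, or by invoking that the Plotkin-bound expression $\frac{m(1-\lambda)}{2(m-1)}$ is still matched after the reduction), and to check that Proposition \ref{prop:simplified_HD} with the resulting $k$ still yields the stated value. A secondary point worth stating carefully is that the equidistance property is exactly what makes the $\min_{i\neq j}$ in \eqref{equivalent} vacuous, so no codeword pair is "worse" than any other — this is the structural reason the Simplex code, and not merely a good code on average, is optimal for the minimax objective.
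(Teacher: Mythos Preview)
Your proposal is correct and follows essentially the same route as the paper: invoke the fact that the Simplex code attains the Plotkin bound (the paper simply cites this, whereas you supply the standard linear-functional argument showing equidistance $d_H(f_i,f_j)=m/2$), then plug into the Hamming-distance formula of Proposition~\ref{prop:simplified_HD} and match against the converse of Theorem~\ref{thm:converse_upperbound_binary}. Your cautionary note about the power-of-two assumption on $m$ is well taken---the paper's Definition~\ref{simplex} implicitly makes the same assumption (it speaks of ``$\log_2 m$ bits'') and does not address the general case either.
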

The proof of Theorem~\ref{ach} is given in Appendix~\ref{apdx:proof_thm_2}. 
Together, Theorems~\ref{thm:converse_upperbound_binary} and~\ref{ach}~imply that \simplexwater{} is minimiax optimal among all watermarks with binary-valued score functions in the low-entropy regime.

As outlined in Algorithm \ref{alg:watermark}, we  further increase detection by tilting the watermarked distribution obtained by \simplexwater{}. For binary-valued scores, the tilting operation (Alg.~\ref{alg:watermark}, step 7) is 
\begin{equation}\label{eq:tilt_binary}
    \mathsf{tilt}(P^*_{X|S=s},s,\delta) = P^*_{X|S=s}(x,s)\left(1+ \delta\cdot( \mathbf{1}_{\{f(x,s)=1\}}-\mathbf{1}_{\{f(x,s)=0\}})\right).
\end{equation}
In Section \ref{sec:numerics} we show that by adding mild distortion we significantly increase the detection power of \simplexwater{}.

\section{\heavywater{}: Watermarking using Heavy-Tailed Distributions}\label{sec:beyond_binary}



So far, we restricted our analysis to binary scores, i.e., $f(x,s) \in \{0,1\}$. In this case, the detection gap is bounded by $m(1-\lambda)/(2(m-1))$  as shown in Theorem \ref{thm:converse_upperbound_binary} -- and \simplexwater{} achieves this bound.  
However, the binary constraint is quite restrictive. Watermarks such as \cite{aaronson2023watermark} are not restricted to binary scores -- and achieve excellent performance in practice (see Gumbel in Fig. \ref{fig:fig1}). 
In what follows, we relax the binary constraint to improve variation in the score values, enabling us to break the binary detection barrier \eqref{eq:ub_binary} and significantly enhance detection performance.

We consider score functions where each score $f(x,s)$ is sampled independently from a continuous distribution prior to the watermarking process\footnote{A natural first step would be to extend \simplexwater{} by using $q$-ary instead of binary codes. This, however, results in marginal performance gain, as we discuss in Appendix \ref{apdx:qary}.}. Remarkably, under this formulation, the popular Gumbel watermark \cite{aaronson2023watermark} emerges as a special case when  scores are drawn from a Gumbel distribution \cite{huijben2022review}. Our framework allows us to explore a broader class of heavy-tailed distributions that improve upon the Gumbel approach (Figure \ref{fig:fig1}).

\paragraph{\heavywater{}: Heavy-Tailed Score Distributions for Improved Detection.}

Consider the watermarking setting where we draw each score value $f(x,s)$ i.i.d. from a continuous finite variance distribution $P_F$. This score is used in the OT problem in \eqref{eq:OT_problem} and Algorithm \ref{alg:watermark} to obtain the watermarked distribution. 
Next, we show that the Gumbel watermark \cite{aaronson2023watermark} falls into this category of watermarks when $P_F$ is a Gumbel distribution. 

\begin{Theorem}
    [Gumbel Watermark as OT]
\label{thm:gumbel_as_ot}
When the score random variables $f(x, s)$, are sampled i.i.d. from $\text{Gumbel}(0,1)$, the solution to the OT problem in \eqref{eq:OT_problem} converges to the Gumbel watermark \cite{aaronson2023watermark} as $|\mathcal{S}|=k \to \infty$. 
\end{Theorem}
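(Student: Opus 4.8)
The plan is to recall the explicit form of the Gumbel watermark and then show that the optimal transport coupling in \eqref{eq:OT_problem}, when the cost matrix is built from i.i.d.\ $\text{Gumbel}(0,1)$ scores, reproduces exactly the argmax-sampling rule of \cite{aaronson2023watermark} as $k\to\infty$. Recall that the Gumbel-max watermark generates the next token by computing, for each $x\in\mathcal{X}$, a key $r_x\in(0,1)$ derived from the side information and outputting $\arg\max_{x} r_x^{1/P_X(x)}$, or equivalently $\arg\max_x\big(\log P_X(x) + g_x\big)$ where $g_x = -\log(-\log r_x)\sim\text{Gumbel}(0,1)$; crucially, the resulting next-token distribution is exactly $P_X$ (distortion-free), so it is a coupling of $P_X$ and the uniform distribution on the key randomness. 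So the statement to prove is really: the OT-optimal coupling concentrates, in the large-$k$ limit, on the deterministic map $s\mapsto \arg\max_x\big(\log P_X(x) + f(x,s)\big)$, and the induced watermarked distribution $P_{X|S=s}$ converges to the Gumbel rule.

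First I would fix the side information realization $s$ and look at the inner OT problem: maximize $\mathbb{E}_{P_{XS}}[f(X,S)]$ over couplings with marginals $P_X$ (on $\mathcal{X}$) and uniform (on $\mathcal{S}=[1:k]$). Each $s$ has mass $1/k$, so as $k\to\infty$ this becomes a semi-discrete OT problem: transporting the uniform measure on a continuum of ``side information coordinates'' to the discrete measure $P_X$, with cost $-f(x,s)$. The key probabilistic input is that for each $s$, the vector $(f(1,s),\dots,f(m,s))$ is an i.i.d.\ $\text{Gumbel}(0,1)$ vector, independent across $s$. The second step is to characterize the OT solution via its dual / complementary slackness: there exist potentials $\psi(x)$ (one per token, $m$ of them) such that the optimal coupling assigns side-information coordinate $s$ to the token $x^*(s) = \arg\max_x\big(f(x,s) + \psi(x)\big)$, and the potentials are pinned down by the marginal constraint $\Pr_s\big(x^*(s)=x\big) = P_X(x)$ for all $x$. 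The third step is the Gumbel-specific computation: by the classical Gumbel-max / Luce-softmax identity, if $f(x,s)\sim\text{Gumbel}(0,1)$ i.i.d., then $\Pr_s\big(\arg\max_x(f(x,s)+\psi(x)) = x\big) = e^{\psi(x)}/\sum_{x'} e^{\psi(x')}$; setting this equal to $P_X(x)$ forces $\psi(x) = \log P_X(x) + \text{const}$. Hence the optimal assignment is $x^*(s)=\arg\max_x\big(f(x,s)+\log P_X(x)\big)$, which is precisely the Gumbel-max rule, and the induced conditional $P_{X|S=s}$ is a point mass at $x^*(s)$ matching \cite{aaronson2023watermark}.

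The remaining work is to make the $k\to\infty$ limit rigorous: for finite $k$ the marginal constraint $\Pr_s(x^*(s)=x)=P_X(x)$ cannot be met exactly (the empirical frequencies of the $\arg\max$ over $k$ i.i.d.\ draws only converge to the Luce probabilities), so the finite-$k$ OT coupling is a perturbation of the clean assignment. I would argue that by the law of large numbers the empirical distribution of $\big(f(\cdot,s)\big)_{s\in[1:k]}$ converges weakly to $\text{Gumbel}(0,1)^{\otimes m}$, invoke stability of semi-discrete OT solutions under weak convergence of the source measure (the cost is continuous and the target is fixed and finitely supported, so the optimal potentials and the optimal value converge), and conclude that the finite-$k$ watermarked distribution converges to the Gumbel watermark. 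I expect the main obstacle to be precisely this convergence/continuity argument — handling ties in the $\arg\max$ (a measure-zero event under the continuous Gumbel law, so harmless, but it must be noted), controlling the fluctuations of the potentials at rate roughly $1/\sqrt{k}$, and making sure the ``convergence of couplings'' is in a strong enough sense (e.g.\ convergence of $P_{X|S=s}$ for $P_S$-almost-every $s$, or in distribution over the shared randomness) to match the precise claim. The clean algebraic heart of the argument — the Gumbel-max identity forcing $\psi = \log P_X$ — is routine; the analytic limit is where care is needed.
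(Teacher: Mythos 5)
Your proposal is correct and mirrors the skeleton of the paper's proof: both pass to the Kantorovich dual of \eqref{eq:OT_problem}, characterize the optimal coupling as a deterministic argmax map $s\mapsto\arg\max_x\bigl(f(x,s)-\alpha^*_x\bigr)$ whose induced marginal over $s$ must equal $P_X$, and then invoke the Gumbel-max / Luce identity to force $-\alpha^*_x=\log P_X(x)+\text{const}$, recovering the rule of \cite{aaronson2023watermark}. The one genuine divergence is how the $k\to\infty$ limit is rigorized. The paper works directly with the finite-dimensional dual objective $f_k(\alpha)=\sum_x P_X(x)\alpha_x+\tfrac{1}{k}\sum_s\max_j[f(j,s)-\alpha_j]$: it shows the $f_k$ are convex, continuous and equi-lower semicontinuous, uses the LLN for pointwise convergence, and then applies the Salinetti--Wets epi-convergence theorem to get convergence of minimizers, followed by a concentration/decomposition argument to pass the empirical optimality condition $\tfrac1k\sum_s\mathbbm{1}\{i\in\arg\max_j[f(j,s)-\alpha^*_{k,j}]\}\to P_X(i)$ to its population version, carefully handling the dependence of $\alpha^*_k$ on the samples. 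You instead propose weak convergence of the empirical law of $(f(\cdot,s))_{s\in[1:k]}$ to $\mathrm{Gumbel}(0,1)^{\otimes m}$ plus stability of semi-discrete OT under weak convergence of the source measure. Both are valid; the paper's route is more self-contained (it never needs to cite semi-discrete OT stability as a black box, and the random-cost dependence of the potentials is handled explicitly), while your framing makes the geometric picture — a Laguerre-cell partition of the product-Gumbel measure into cells of mass $P_X(x)$ — cleaner and better motivates why the limit coupling is a deterministic argmax. You also carry out the Luce-softmax computation inline to conclude $\psi(x)=\log P_X(x)$; the paper defers this step to a footnote citing \cite{fu2024gumbelsoft}, so making it explicit is a small but genuine improvement, since it is the step that singles out the Gumbel distribution among all score laws. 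Your caveats about ties (null under a continuous law) and potential fluctuations of order $k^{-1/2}$ are exactly the points the paper's final $|\cdot|$-decomposition and $\Delta_k>G_s$ argument are designed to handle, so they would need to be filled in but present no obstruction.
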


This connection situates the Gumbel watermark within our broader framework \eqref{D_gap}. Can the Gumbel watermark be improved by selecting a different score distribution $P_F$? Intuitively, a heavy-tailed $P_F$ (which Gumbel is not) could increase the probability of sampling large values of $f(x,s)$, thus also increasing the  likelihood of watermark detection. 
To make this intuition precise, we analyze the maximum detection gap \eqref{D_gap} when the score function is randomly generated instead of selected from a set $\mathcal{F}$. 

For fixed $|\mathcal{X}|=m$, $|\mathcal{S}|=k$, and $\lambda\in\left[\frac{1}{2},1\right)$, denote the worst case detection gap achieved by some fixed distribution $P_F$ as,
\begin{align}\label{d_gap_random}
    \Dgap^{[P_F]}(m,k,\lambda)=\min_{P_X\in\mathcal{P}_{\lambda}}\max_{P_{XS}}\left(\EE_{P_{XS}}\left[f(X,S)\right]-\EE_{P_{X}P_S}\left[f(X,S)\right] \right)
\end{align} 
Note that $\Dgap^{[P_F]}(m,k,\lambda)$ is a random variable due to the randomness of the $f(x,s)$ samples. We provide high-probability guarantees on the achievable maximum detection gap for any fixed distribution $P_F$ in the regime of large $k$. 

\begin{Theorem}[Detection Gap]\label{thm:detection-lb-informal}
Let $\lambda\in\left[\frac{1}{2},1\right)$, and consider the score difference random variable $\Delta = f(x,s) - f(x',s')$ for some $(x,s)\neq(x',s')$, where $f(x,s)$ and $f(x',s')$ are sampled i.i.d. from $P_F$. Let the cumulative distribution function of $\Delta$ be $F$, and let $Q = F^{-1}$ be its inverse. Then,
\begin{equation}
\lim_{k\to\infty}\Dgap^{[P_F]}(m,k,\lambda) =\int_{1-\lambda}^1 Q(u)du,
\label{eq:Qint}
\end{equation}
\end{Theorem}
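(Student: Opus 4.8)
The plan is to analyze the inner optimization $\max_{P_{XS}}\big(\EE_{P_{XS}}[f(X,S)] - \EE_{P_XP_S}[f(X,S)]\big)$ for a fixed realization of the random scores, then take the worst case over $P_X \in \cP_\lambda$, and finally let $k \to \infty$ so that concentration of the empirical distribution of scores kicks in. First I would observe that since $f$ is fixed once sampled, and $P_S$ is uniform on $[1:k]$, the second term $\EE_{P_XP_S}[f(X,S)] = \frac{1}{k}\sum_{x,s} P_X(x) f(x,s)$ is a constant (in the coupling) that converges — by the law of large numbers over the $k$ i.i.d. columns — to $\EE_{P_F}[F]$, the common mean of the score distribution, call it $\mu$. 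So asymptotically the objective is $\EE_{P_{XS}}[f(X,S)] - \mu$, and we only need to understand the first term.

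Next I would solve the OT problem $\max_{P_{XS} \in \Pi_{X,S}} \EE_{P_{XS}}[f(X,S)]$ explicitly in the large-$k$ limit. The key structural point, already used in the low-entropy analysis behind Theorem \ref{thm:converse_upperbound_binary} and in Proposition \ref{prop:simplified_HD}, is that when $\lambda \in [1/2,1)$ the worst-case $P_X$ concentrates its mass on essentially two tokens: one with mass $\lambda$ and one with mass $1-\lambda$ (all remaining mass can be pushed to tokens that the adversary makes worthless). For such a $P_X$, a coupling with the uniform $P_S$ assigns to the heavy token a $\lambda$-fraction of the $k$ side-information symbols and to the light token the complementary $(1-\lambda)$-fraction; the coupling is free to choose *which* symbols go where. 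To maximize $\EE_{P_{XS}}[f(X,S)]$, for each side-information symbol $s$ we want to route it to whichever of the two tokens has the larger score $f(\cdot,s)$, subject to the mass budget $\lambda$ on the heavy token. Writing $\Delta_s = f(x_{\text{light}},s) - f(x_{\text{heavy}},s)$, the optimal assignment sends the $(1-\lambda)$-fraction of symbols with the *largest* $\Delta_s$ to the light token and the rest to the heavy token; the resulting gain over always using the heavy token is $\frac{1}{k}\sum_{s \in \text{top }(1-\lambda)k} \Delta_s$. As $k\to\infty$, the empirical distribution of the $\Delta_s$ converges to $F$ (the CDF of $\Delta = f(x,s)-f(x',s')$), so this top-$(1-\lambda)$-quantile average converges to $\int_{1-\lambda}^1 Q(u)\,du$ where $Q = F^{-1}$. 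Combining with the base term: $\EE_{P_{XS}}[f(X,S)] \to \mu + \int_{1-\lambda}^1 Q(u)\,du - (\text{correction})$; one checks the bookkeeping so that subtracting $\mu$ leaves exactly $\int_{1-\lambda}^1 Q(u)\,du$. (Concretely: the heavy-token baseline contributes its own column-average $\to \mu$, and the reassignment adds the top-quantile excess of $\Delta$, which by symmetry of $\Delta$ and a change of variables is $\int_{1-\lambda}^1 Q(u)\,du$.)

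The last step is to confirm that this two-point $P_X$ is indeed the minimizer over $\cP_\lambda$, i.e., that no other feasible $P_X$ forces a smaller gap. Intuitively, spreading mass over more than two tokens only gives the watermarker *more* freedom in the coupling (more tokens among which to pick the high-scoring one for each $s$), so the adversary's best move is to concentrate — this monotonicity argument needs to be made precise, likely by a coupling/majorization argument showing the OT value is minimized at the extreme point of $\cP_\lambda$ with a single mass-$(1-\lambda)$ "slack" token, paralleling the binary-score argument. I expect this minimality step — rigorously ruling out all other support patterns and handling the uniform convergence of the empirical quantile function of $\{\Delta_s\}$ to $Q$ uniformly enough to interchange limits with the $\min_{P_X}$ and $\max_{P_{XS}}$ — to be the main obstacle; the rest is the explicit OT solution plus a Glivenko–Cantelli / law-of-large-numbers argument for the order statistics.
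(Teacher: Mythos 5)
Your proposal follows essentially the same route as the paper's proof: reduce to a two-point spike distribution in $\cP_{\mathsf{spike},\lambda}$, solve the inner OT explicitly by sorting the score differences $\Delta_s$ and routing the extreme order statistics to one token, and then pass to the limit via the strong law of large numbers (for the $\bar f$ terms) and a Glivenko--Cantelli / Riemann-sum argument identifying the top-fraction average of $\Delta_s$ with $\int_{1-\lambda}^1 Q(u)\,du$. The one item you flag as ``the main obstacle'' --- that the minimizer over $\cP_\lambda$ is a two-point spike --- is in fact already resolved in the proof of Proposition~\ref{prop:simplified_HD}: $\Psi(P_X) = \max_{P_{XS}\in\Pi_{X,S}}\bigl(\EE_{P_{XS}}[f] - \EE_{P_X P_S}[f]\bigr)$ is shown to be concave in $P_X$ (because the OT value is a pointwise maximum of linear functionals of the coupling), so its minimum over the polytope $\cP_\lambda$ is attained at an extreme point; that argument does not use the binary restriction on $f$ and carries over verbatim to the continuous-score setting, so it is not an additional obstacle here.
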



Theorem~\ref{thm:detection-lb-informal} shows that detection performance improves when the distribution of $\Delta=f(x,s)-f(x',s')$ have heavier tails, where both $f(x,s),f(x',s')\sim P_F$.  We analyze several choices of heavy-tailed  $P_F$ for maximizing $\int_{1-\lambda}^1Q(u)du$ (see details in Appendix \ref{apdx:tails}).


\textbf{\heavywater{}: A watermark with heavy-tailed scores.} We propose \heavywater, a watermark that leverages Theorem~\ref{thm:detection-lb-informal}~and samples the score function  from  a heavy-tailed distribution.
\heavywater{} follows the steps of Algorithm \ref{alg:watermark} but uses scores that are generated by sampling i.i.d. from a heavy-tailed  $P_F$ prior to generation. We explored several options for selecting $P_F$. In theory, the Gamma distribution maximizes \eqref{eq:Qint} across several common heavy-tailed distribution (Appendix \ref{apdx:tails}). However, in practice, we observed that choosing $P_F$ to be lognormal achieved the highest detection accuracy.\footnote{Since detection is evaluated via $z$-scores and $p$-values which standardize by mean and variance, the specific mean and variance of the selected distributions do not affect detection accuracy.}
Assuming that $P_F$ is normalized to have zero-mean, the tilting operation in \heavywater{} is given by 
$$
\mathsf{tilt}(P^*_{X|S=s},s,\delta) =  P^*_{X|S=s}(x,s)\left(1+ \delta \cdot\mathsf{sign}(f(x,s))\right).
$$

\section{Numerical Experiments}\label{sec:numerics}

\paragraph{Experimental Setting.} 
We evaluate the detection-distortion-robustness tradeoff of \heavywater{} and \simplexwater{} following the experimental benchmarking guidelines in  Waterbench \cite{tu2023waterbench} and MarkMyWords \cite{piet2023mark}.
We use two popular prompt-generation datasets:  Finance-QA \cite{maia201818} (covering  Q\&A tasks)  and LCC~\cite{chen2021evaluating} (covering coding tasks). We evaluate on three open-weight models: Llama2-7B~\cite{touvron2023llama}, Llama3-8B~\cite{grattafiori2024llama}, and Mistral-7B~\cite{jiang2023mistral7b}. Implementation details are provided in Appendix \ref{apdx:more_imp}, and full results, including ablation studies, are presented in
Appendix \ref{appendix:results}.

\textbf{Baseline Watermarks.} We benchmark \simplexwater{} and \heavywater{} against the Gumbel watermark \cite{aaronson2023watermark}, the Inverse-transform watermark \cite{kuditipudi2023robust}, the Correlated Channel watermark \cite{long2025optimized}, and the SynthID watermark \cite{dathathri2024scalable} with binary scores and $K=15$ competition layers. We select these methods based on code availability or ease of implementation. These watermarks are distortion-free (i.e., they preserve the average next token distribution). To our knowledge, they also do not have an out-of-the-box method for trading off distortion with detection accuracy.   
We also consider the Red-Green watermark \cite{kirchenbauer2023reliability}, which can be tuned for such trade-offs.  

\textbf{Evaluation Metrics.} 
We use \textbf{p-value} as the primary detection power metric (plotted as $-\log_{10} p$) because it offers a threshold-independent measure of statistical power of a watermark. Moreover, p-values are the metric of choice in  \cite{tu2023waterbench,piet2023mark,kirchenbauer2023watermark,kuditipudi2023robust,dathathri2024scalable, huang2024waterpool, bahri2024watermark}. The p-value measures the probability of obtaining a given test statistic -- or a more extreme one -- under the unwatermarked (null) distribution. This provides a direct connection to the false-alarm rate, avoids the need to choose arbitrary detection thresholds, and is well-suited for comparing different watermarking schemes on equal footing. When the distribution of the test statistic is not available, we use a z-test, which approximates the statistic as Gaussian under the Central Limit Theorem. For methods like Gumbel \cite{aaronson2023watermark}, where the test statistic distribution is known analytically, we use the exact form. At a generation length of $T=300$, we observe that the Gaussian approximation provided by the z-test is very accurate, ensuring that p-values remain reliable even when the exact distribution is unknown.

Distortion is measured using the estimated \textbf{cross entropy} (CE) between the watermarked and unwatermarked distribution, defined as $-\frac{1}{n}\sum_{t=1}^n\log P_X(\tilde{x}_t)$, where $(\tilde{x}_t)_{t=1}^n$ are watermarked tokens and $P_X$ is the LLM's distribution. This is proportional to relative perplexity -- a commonly used proxy for quality evaluation \cite{giboulot2024watermax,fernandez2023three}.
We also measure \textbf{watermark size}, proposed by \cite{piet2023mark} and defined as the number of tokens it takes to detect the watermark with a certain p-value.
We provide results for task-specific metrics for generation quality in Appendix \ref{apdx:waterbench}, including document summarization \cite{fabbri2019multi}, document and long-form QA \cite{zhong2021qmsum,fan2019eli5}, and code completion \cite{chen2021evaluating} as curated by \cite{tu2023waterbench}.

 

\begin{figure*}[!tb]
  \centering
  \begin{subfigure}[b]{0.49\linewidth}
    \centering
    \includegraphics[width=\linewidth]{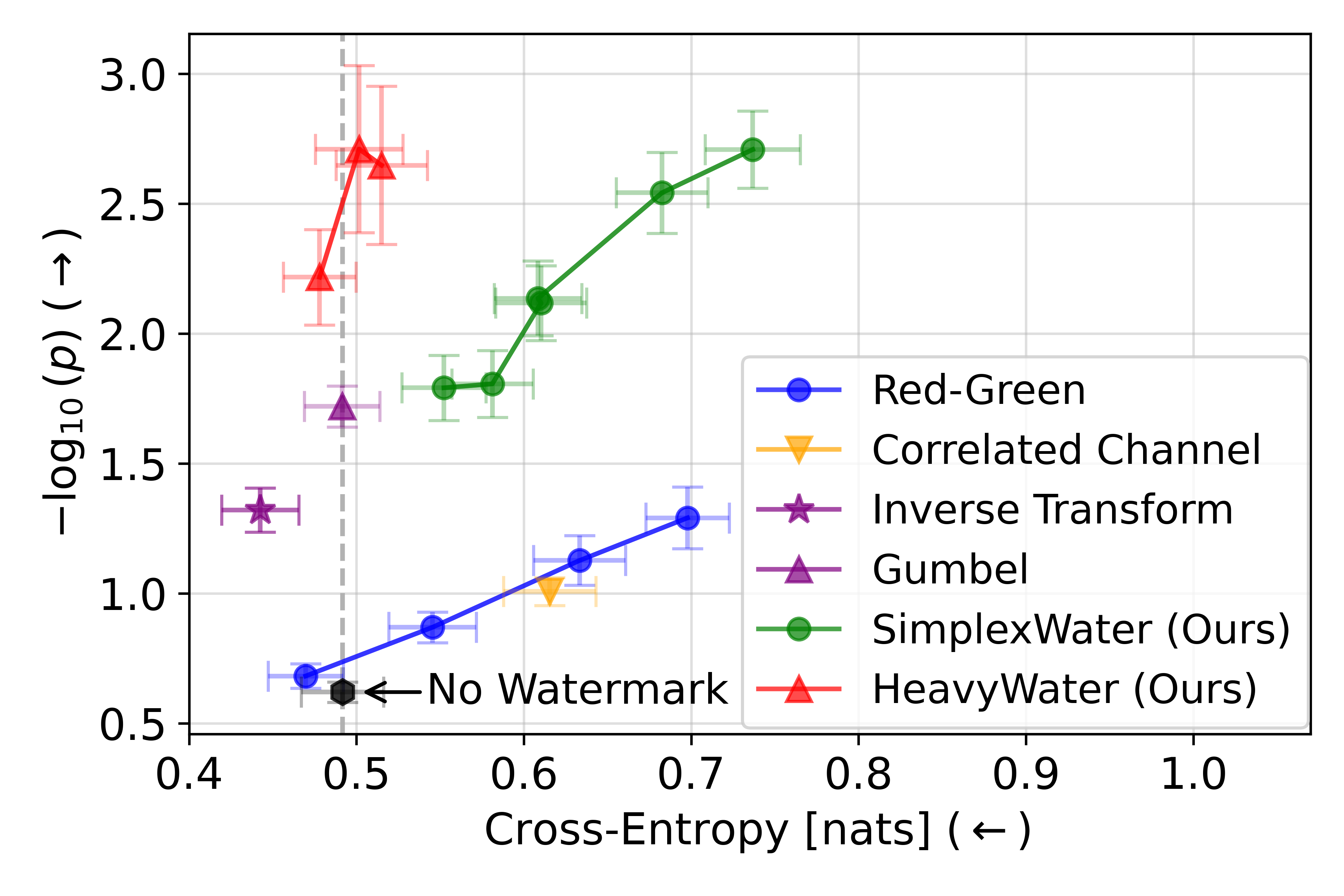}
    \caption{Detection-Distortion Tradeoff. Mistral-7B, Coding. }
    \label{fig:tradeoff_llama2}
  \end{subfigure}
  \hfill
  \begin{subfigure}[b]{0.49\linewidth}
    \centering
    \includegraphics[width=\linewidth]{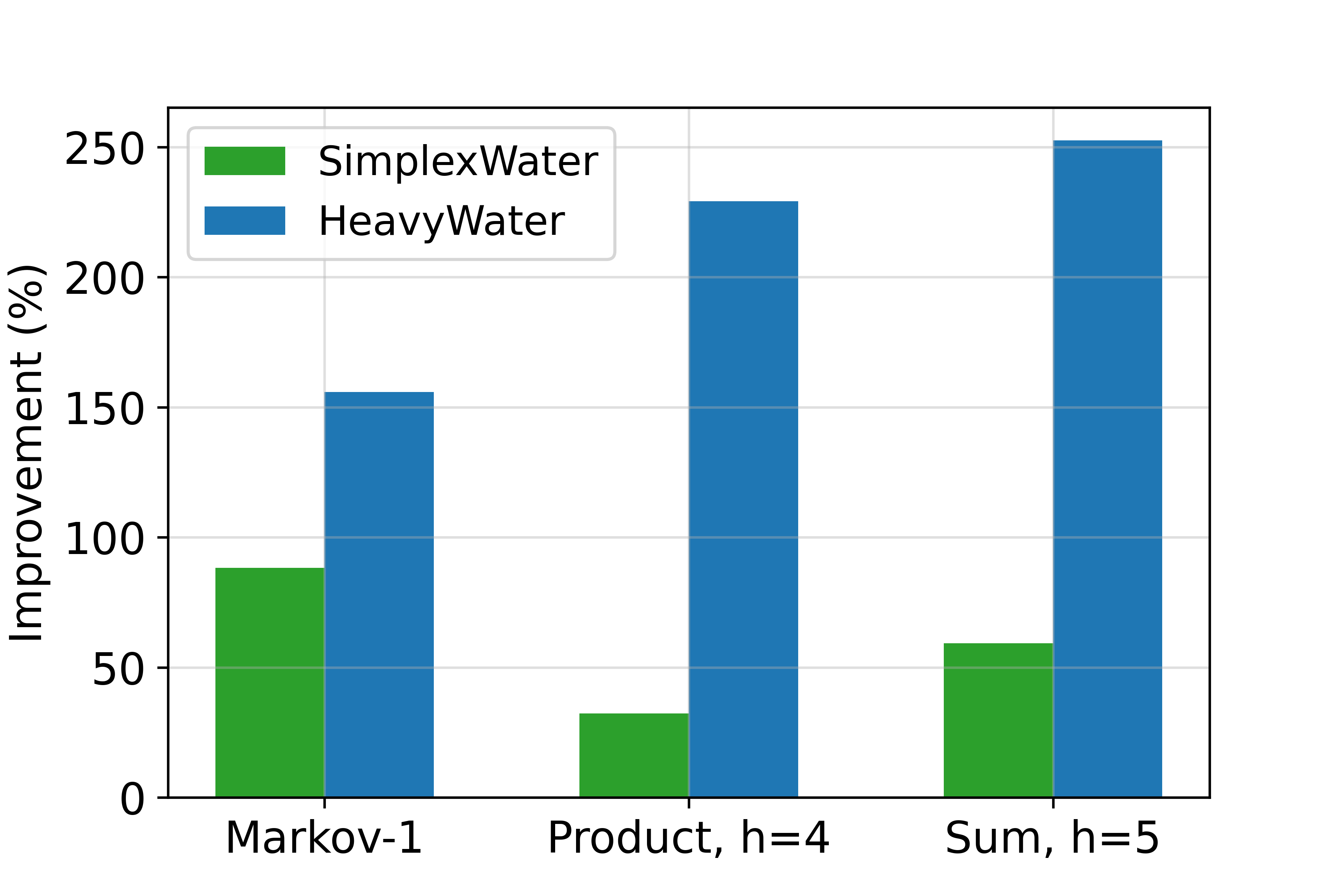}
    \caption{Watermark improvement over Red-Green \cite{kirchenbauer2023reliability} with several hashing strategies, $\delta=2$. Llama2-7B, Coding.}
    \label{fig:watermark_gain}
  \end{subfigure}%
  \caption{\textbf{Left:} Tradeoff between detection (measured by $p$-value) and distortion (measured by Cross-Entropy) --- \simplexwater{} and \heavywater{} achieve higher detection rates while preserving token distributions close to the base unwatermarked model. \textbf{Right:} Detection gained by employing our watermark under various randomness generation schemes and several sliding window sizes $h$. Both \simplexwater{} and \heavywater{} provide a significant improvement of up to $250\%$ and a decrease in distortion.}
  \label{fig:figs_llama2}
  \vspace{-1.5em}
\end{figure*}

\textbf{Detection-Distortion Tradeoff. }
We compare the tradeoff between watermark detection (p-value) against distortion (cross entropy).
The results are given in Figures \ref{fig:fig1} and \ref{fig:tradeoff_llama2}. We make two key observations. First, considering watermarks with binary scores, \simplexwater{} Pareto dominates the red-green method.
Second, \heavywater{} provides the top performance along the baseline watermarks, incurring zero distortion. When we tune the watermarked distribution, both $\simplexwater{}$ and $\heavywater{}$ obtain significant gains in detection, with mild distortion incurred.

\begin{table}[ht]
\centering
\caption{Results on HumanEval Functional Coding Dataset (pass@k, higher is better)}
\label{tab:humaneval-passk}
\begin{tabular}{lccc}
\toprule
\textbf{Watermark Scheme} & \textbf{pass@1 (\%) $\uparrow$} & \textbf{pass@5 (\%) $\uparrow$} & \textbf{pass@10 (\%) $\uparrow$} \\
\midrule
No Watermark            & 14.0 & 20.7 & 28.1 \\
\textbf{HeavyWater (Ours)}   & \textbf{13.1} & \textbf{18.9} & \textbf{27.8} \\
Gumbel                  & 14.3 & 20.5 & 25.6 \\
\textbf{SimplexWater (Ours)} & \textbf{13.7} & \textbf{22.7} & \textbf{25.3} \\
Inverse Transform       & 13.8 & 22.0 & 27.5 \\
Red/Green $\delta=3$    & 11.6 & 19.5 & 23.2 \\
\bottomrule
\end{tabular}
\end{table}

\textbf{Additional Quality Metrics for Textual Tasks.} In addition to cross entropy, we have reported additional performance metrics for text quality: ROUGE-L and F1 scores for four additional tasks in Appendix \ref{apdx:waterbench}. In Table \ref{tab:waterbench}, we assess the impact of watermarking on generation quality across four additional tasks from WaterBench \cite{tu2023waterbench}: LongformQA, Multi-news Summarization, Knowledge Memorization, and Knowledge Understanding. For these, we report ROUGE-L scores (LongformQA and Multi-news) and F1 scores (Knowledge Memorization and Knowledge Understanding). Our methods (SimplexWater and HeavyWater) maintain competitive detection performance across all datasets, with a generation metric comparable to unwatermarked text (see Table \ref{tab:waterbench}). Together, these results provide significant evidence that our watermarks preserve output quality in textual tasks.

\textbf{Additional Quality Metrics for Coding Tasks.}
We report additional tasks and metrics for low-entropy generation in coding. For LCC code completion tasks\cite{chen2021evaluating}, since human-written codes are used as the ground truth, \textbf{Edit\_Similarity} is the generation-quality metric. We benchmark our watermarking methods against prior work using this metric, and HeavyWater achieves the highest Edit\_Similarity score for the code completion task (see Table \ref{tab:lcc-editsim}). 
For the HumanEval dataset \cite{chen2021evaluating}, we take the standard metric \textbf{pass@$K$} with $K$=1,5,10 generation metric. 
As shown in Table \ref{tab:humaneval-passk}, SimplexWater achieves the highest score on pass@5 and HeavyWater performs best on pass@10. 
These results demonstrate that our watermarks preserve both functional correctness and syntactic quality of generated code.

\textbf{Watermark Size.}
We consider $p=10^{-3}$, which corresponds to a $0.1\%$ false-positive rate and report the average generation length to reach this $p$ value.
As seen in Figure \ref{fig:wm_size_figs_llama2}, where Llama2-7B is run on Q\&A dataset, both the \simplexwater{} and \heavywater{} schemes provide the lowest watermark size in terms of the number of tokens required to attain a watermark detection strength, as measured by the p-value.  


\begin{wrapfigure}{r}{0.48\linewidth}
  \vspace{-3em}
  \includegraphics[width=\linewidth]{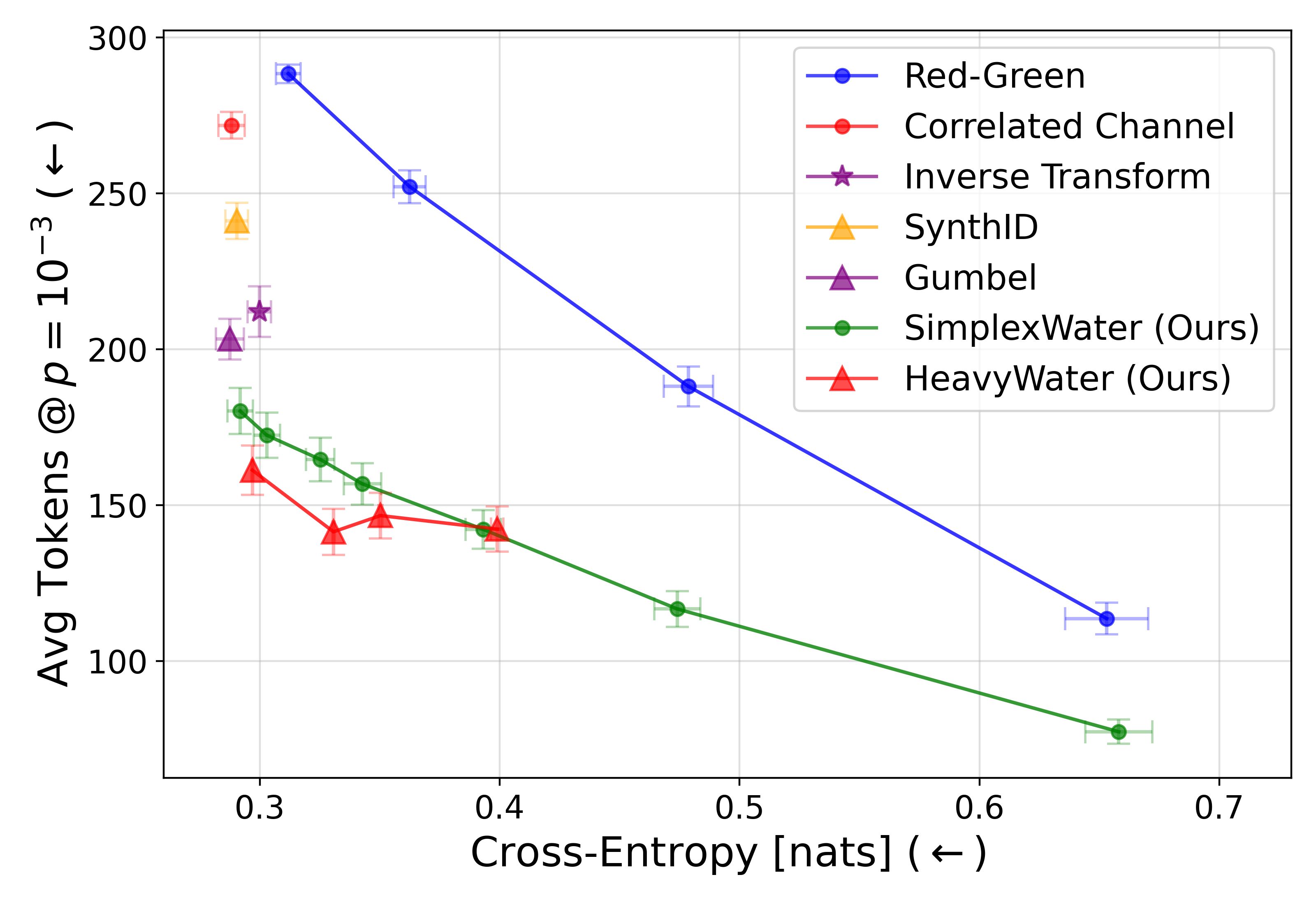}
  \caption{Our watermarks require fewer tokens to reach a given detection strength (p-value) with zero distortion.}
  \vspace{-1.8em}
  \label{fig:wm_size_figs_llama2}
\end{wrapfigure}

\textbf{Gain and Performance Under Hashing.}
We  illustrate how \simplexwater{} and \heavywater{} can be coupled with different side information generation methods to boost watermark detection. 
We consider an experiment in which we replace the Red-Green watermark cost function and watermarked distribution with ours and report the gain in detection under several seed hashing methods adopted in \cite{kirchenbauer2023reliability}.
As seen in Figure \ref{fig:watermark_gain}, simply switching Red-Green ($\delta=2$) with our methods -- while keeping the same hashing scheme -- introduces significant detection gain (up to $250\%$ improvement in the aggregated $-\log_{10} p$ value). 

\textbf{Additional Results.} 
The following experimental results and ablation studies are included in Appendix \ref{appendix:results}:
1) An evaluation on the WaterBench dataset \cite{tu2023waterbench} in which we show that our methods do not degrade textual quality across an array of generation metrics (Appendix \ref{apdx:waterbench}); 2) A robustness study in which evaluate tamper resistance of our watermarks (Appendix \ref{apdx:robustness}), and 3) computational overhead measurements, where we compare generation wall-clock times of various watermarking schemes (Appendix \ref{apdx:overhead}).



\section{Conclusion}
We developed two new watermarking schemes by characterizing and analyzing an optimization problem that models watermark detection in low-entropy text generation.
When restricting attention to binary scores, we draw connections to coding theory, which we then use to develop \simplexwater{}. We prove  that \simplexwater{} is minimax optimal within our framework. 
When we sample the scores i.i.d. at random, we show that watermark detection depends on the tail of the score distribution. This leads to \heavywater{}, where the scores are sampled from a heavy-tailed distribution.
We further show that the Gumbel watermark is a special case of this construction.
Both \simplexwater{} and \heavywater{} demonstrate favorable performance on an array of LLM watermarking experiments when compared to various recent schemes. An interesting direction for future work is further optimizing the score function. This could be done by exploring other heavy-tailed score distributions, or using alternative strategies to optimize $f(x,s)$ such as backpropagating through \eqref{D_gap}. We also restrict our analysis to threshold tests -- more powerful statistical tests could improve detection accuracy (see \cite{li2024robust}). 
Our work contributes to better authentication of text provenance and fostering trust in AI systems. Agnostic to how side information is generated, our watermarks can be paired with new research in side information generation strategies that are more robust to tampering and adversarial attacks.

\textbf{Disclaimer.}
This paper was prepared for informational purposes by the Global Technology Applied Research center of JPMorgan Chase \& Co. This paper is not a product of the Research Department of JPMorgan Chase \& Co. or its
affiliates. Neither JPMorgan Chase \& Co. nor any of its affiliates makes any explicit or implied representation
or warranty and none of them accept any liability in connection with this paper, including, without limitation,
with respect to the completeness, accuracy, or reliability of the information contained herein and the potential
legal, compliance, tax, or accounting effects thereof. This document is not intended as investment research
or investment advice, or as a recommendation, offer, or solicitation for the purchase or sale of any security,
financial instrument, financial product or service, or to be used in any way for evaluating the merits of participating in any transaction.

\textbf{Acknowledgment.} This material is based upon work supported by the National Science Foundation under Grant No FAI 2040880, CIF 2231707, and CIF 2312667, as well as JP Morgan Chase.





\newpage
\bibliographystyle{unsrt}
\bibliography{bibliography}

\newpage

\onecolumn
\appendix

\begin{center}
    \textbf{\LARGE Supplementary material for\\ \heavywater{} and \simplexwater{}: Watermarking Low-Entropy Text Distributions}
\end{center}

\medskip
\medskip
\medskip\medskip
\medskip
\medskip\medskip
\medskip
\medskip
{
\large
\textbf{\underline{Table of contents:}}
}

\begin{itemize}

\item \textbf{\ref{appendix:related} Additional Information on Related Works.}
  \begin{itemize}
    \item \textbf{\ref{apdx:related_work} Additional Information on Related Work.} 
    \item \textbf{\ref{appendix:instantiation} Instantiation of Existing Watermarks as Score, Distribution and Randomness Design.}
    \item \textbf{\ref{apdx:randomness_eff} Discussion on Randomness Efficiency.} 
  \end{itemize}

\item \textbf{\ref{appendix:proofs} Proofs for Theorems from Sections 3 \& 4.}
  \begin{itemize}
    \item \textbf{\ref{app:prop_HD} Proof of Proposition 1.} 
    \item \textbf{\ref{apdx:proof_thm_1} Proof of Theorem 1.} 
    \item \textbf{\ref{apdx:proof_thm_2} Proof of Theorem 2.} 
    \item \textbf{\ref{sec:appendix_proof_gumbel} Proof of Theorem 3.} 
    \item \textbf{\ref{apdx:proof_thm_4} Proof of Theorem 4.} 
  \end{itemize}

\item \textbf{\ref{appendix:our_method} Additional Information and Implementation Details for \heavywater{} and \simplexwater{}.}
  \begin{itemize}
    \item \textbf{\ref{apdx:low_ent} Low-Entropy Distributions in LLMs.} 
    \item \textbf{\ref{apdx:tails} Theoretical Effect of Different Tails of Distributions.} 
    \item \textbf{\ref{apdx:qary} Q-ary Codes.} 
    \item \textbf{\ref{apdx:more_imp} Implementation Details.}
    \item \textbf{\ref{apdx:ot} Information on Optimal Transport and Sinkhorn’s Algorithm.} 
    \end{itemize}

\item \textbf{\ref{appendix:results} Additional Numerical Results and Ablation Study.}
  \begin{itemize}
    \item \textbf{\ref{apdx:ablation} Ablation Study.} 
    \item \textbf{\ref{apdx:seeding} Impact of Non-i.i.d. Side Information Generation.} 
    \item \textbf{\ref{apdx:robustness} Experiment: Robustness to Textual Attacks.} 
    \item \textbf{\ref{apdx:overhead} Computational Overhead.}
    \item \textbf{\ref{apdx:waterbench} Experiment: Alternative Text Generation Metrics.} 
    \item \textbf{\ref{apdx:pdpfa} Alternative Detection Metric.} 
    \item \textbf{\ref{apdx:additional_tradeoff} Additional Detection–Distortion Trade-off Results.} 
  \end{itemize}

\end{itemize}

\newpage

\section{Additional Information on Related Works}\label{appendix:related}
\def\theequation{A.\arabic{equation}}
\def\thetable{A.\arabic{table}}
\def\thefigure{A.\arabic{figure}}
\def\thelem{A.\arabic{lem}}
\def\thedefn{A.\arabic{defn}}
\def\theprop{A.\arabic{prop}}

\subsection{Additional Information on Related Work}\label{apdx:related_work}

\paragraph{Optimization Framework} Several optimization frameworks have been proposed for watermark analysis. \cite{hetheoretically, li2024statistical, huang2023towards, long2025optimized} adopts a hypothesis-testing framework for analyzing the statistical power of watermarking schemes. \cite{li2024statistical} goes beyond the vanilla threshold test to determine the optimal detection rule by solving a minimax optimization program. 
The authors of \cite{hetheoretically} consider an optimization under an additional constraint of controlled false-positive error. The watermarking scheme follows by learning a coupling that spreads the LLM distribution into an auxiliary sequence of variables (with alphabet size greater than $m$). While the proposed scheme is the optimal solution for the considered optimization, the scheme requires access to the (proxy of) LLM logits on both generation and detection ends. The authors of \cite{wouters2023optimizing} consider a multi-objective optimization --- maximization of the green list bias while minimizing the log-perplexity. Building on the hypothesis testing framework, we optimize over classes of score functions, by observing that the detection power of a watermark boils down to the separation of expected scores between the null and alternative hypotheses.

\paragraph{Hashing Schemes in LLM Watermarking}
Current LLM watermarking schemes derive their per‐token pseudorandom seed through five recurring hashing patterns. LeftHash hashes the immediately preceding token and feeds the digest to a keyed Psaudorandom function (PRF), yielding a light‑weight, self‑synchronising seed that survives single‑token edits~\cite{kirchenbauer2023watermark}. Window‑hash generalises this by hashing the last $h$ tokens, expanding the key‑space and hindering brute‑force list enumeration at the cost of higher edit sensitivity~\cite{kirchenbauer2023watermark,sander2024watermarking}. SelfHash (sometimes dubbed right‑hand hashing'') appends the \emph{candidate} token to the left context before hashing, so the seed depends on both history and the token being scored; this hardens the scheme against key‑extraction attacks and is used in multi‑bit systems such as MPAC~\cite{yoo2023advancing}. Orthogonally, counter‑based keystreams drop context altogether and set the seed to $\mathrm{PRF}(K,\text{position})$, a strategy adopted by inverse‑transform and Gumbel watermarks to preserve the original LM distribution in expectation~\cite{aaronson2023watermark,kuditipudi2023robust}. Finally, adaptive sliding‑window hashing—popularised by \emph{SynthID‑Text}—hashes the last $H\!=\!4$ tokens together with the secret key and \emph{skips watermarking whenever that exact window has appeared before} ($K$‑sequence repeated context masking''), thereby avoiding repetition artefacts while retaining the robustness benefits of a short window~\cite{dathathri2024scalable}. Semantic extensions build on these primitives: SemaMark quantises sentence embeddings~\cite{ren2023robust}, while Semantic‑Invariant Watermarking uses a learned contextual encoder~\cite{liu2024adaptive}. Collectively, these hashing families balance secrecy, robustness to editing or paraphrasing, and computational overhead, offering a rich design space for practical watermark deployments. Both \simplexwater{} and \heavywater{} are agnostic to and can be applied on top of any hashing scheme. We provide the detection gain over Red-Green using various hashing schemes in Fig. \ref{fig:watermark_gain}.

\paragraph{Additional Distortion-Free Watermarks}
An array of recent works propose distortion-free watermarks that preserve the original LLM distribution \cite{kuditipudi2023robust,aaronson2023watermark,long2025optimized, huunbiased2024, christ2024undetectable, zhao2024permute, chao2024watermarking, bahri2024watermark}.
In addition to the ones that we have introduced in Section \ref{sec:intro}, 
\cite{christ2024undetectable} constructs undetectable watermarks using one-way functions, which is a cryptography-inspired technique. \cite{chao2024watermarking} proposes a watermark using error-correcting codes that leverages double-symmetric binary channels to obtain the watermarked distribution. 
\cite{bahri2024watermark} designs a distortion-free watermark based on multiple draws from a black-box LLM, which involves fixing a score function, drawing multiple tokens in each step, and outputting the highest-score token.

\subsection{Instantiation of Existing Watermarks As Score, Distribution and Randomness Design}\label{appendix:instantiation}
Existing watermarks can be represented under the proposed outlook of side information, score function and watermarked distribution from Section \ref{framework}.
We next demonstrate that by instantiating several popular schemes through our lenses.

 The \textbf{Red-Green watermark} \cite{kirchenbauer2023watermark} randomly partitions $\cX$ into a green list and a red list. Here, $S$ is a set of $m$ binary random variables, representing the random list assignment.
The function $f$ has binary outputs, which are used to increase the probability of green list tokens though exponential tilting of $P_X$.

The \textbf{SynthID} watermark \cite{dathathri2024scalable} employs tournament sampling: a tournament between a set of $N^m$ token candidates along $m$-layers with $N$ competing token groups.
Each tournament is performed given a sample of shared randomness (denoted $r$ \cite{dathathri2024scalable}).
On the $\ell$th layer the winners are taken to be the token with highest score $f_\ell(x,s)$ within each $N$-sized set.
The value of the score function $f$ is obtained from the side information for each token candidate. Different domains of $f$ induce a different construction of the function (See \cite[Appendix~E]{dathathri2024scalable} for more information).
The watermarked distribution  $P_{X|S=s}$ is the obtained with a closed form in specific cases of $(f,N,m)$ and can be directly used instead of the instantiating a tournament. This form of watermarking is termed \textit{vectorized tournament sampling}, in which the tournament is not applied, but the induced conditional distribution is employed instead.
In this paper we consider vectorized tournament sampling with binary-valued scores, as this is the formulation given in \cite{dathathri2024scalable} with a closed form, and the one that was utilized in their proposed experiments.

In the \textbf{Gumbel watermark} \cite{aaronson2023watermark}, the side information consists of $m$ i.i.d., uniform random variables on $[0,1]$, each one corresponding to a single element in the vocabulary $x\in\cX$.
The score function $f$ is obtained by assigning each $x\in\cX$ with a value $l_x = -\frac{1}{P_X(x)}\log(u_x)$, where $u_x$ the uniform variable corresponding to $x$ and $P_X(x)$ is the probability of $x\in\cX$ under the unwatermarked model.
The watermarked distribution is then given by assigning probability $1$ to $argmax_{x\in\cX}l_x(x)$ and probability $0$ to the rest of $x\in\cX$ (i.e., a singleton distribution).

In the \textbf{Correlated Channel} watermark \cite{long2025optimized}, the side information corresponds to a partition of $\cX$ into $k\geq 2$ lists and a single shared uniform variables $S'$
 that is uniformly distributed on $[1:k]$.
The score function is then an indicator of the matching between the additional variable realization and the token random assignment, i.e., $f(x,s) = \mathbf{1}(B(x) = s)$, where $B(x)\in[1:k]$ is the assignment of $x$ into one of the $k$ lists. 
The watermarked distribution is then given in closed form by solving the maximum coupling problem.

\begin{table}[!t]
    \centering
    \begin{tabular}{|c|c|}
    \hline\textbf{Watermark} & \textbf{$\#$ Bits} \\ \hline
        Red-Green \cite{kirchenbauer2023watermark} & $m$\\ 
        Inverse Transform \cite{kuditipudi2023robust} & $m\mathrm{F}$ \\
        Gumbel \cite{aaronson2023watermark}  & $m\mathrm{F}$\\
        \simplexwater{} (\textbf{ours}) & $\log(m)$ \\
        \heavywater{} (\textbf{ours}) & $\log(k)$\\ \hline
    \end{tabular}
    \vspace{3pt}
    \caption{Amount of random bits generated per step in popular watermarks. $m$ is the vocabulary size, $\mathrm{F}$ is the floating point precision and $k$ is the side information alphabet size.}
    \label{tab:num_bits}
\end{table}

\subsection{Discussion on Randomness Efficiency}\label{apdx:randomness_eff}
Given a hashing procedure that determines the random number generator seed value, we sample the side information $S\sim P_S$ over $\cS$.
The size of $\cS$ determines the amount of side information we are required to sample.
As described in Appendix \ref{appendix:instantiation}, each watermarks corresponds to a different size fo $\cS$. We interpret that as \textit{randomness efficiency}. 
That is, the bigger $\cS$ is, the more bits of randomness we are required to extract from the random seed.
We argue that a byproduct of our method is randomness efficiency, i.e., \simplexwater{} and \heavywater{} require less random bits to sampled from the random seed compared to existing schemes. 
To that end, we compare with several popular watermarks (see Table \ref{tab:num_bits} for a summary).

The Red-Green watermark \cite{kirchenbauer2023watermark} corresponds to sampling a single bit for each element in $\cX$, whose value determines the token's list assignment (red or green), thus resulting in a total of $m$ bits.
The inverse transform watermark \cite{kuditipudi2023robust} requires sampling a single uniform  and sampling a random permutation of $[1:m]$. Thus, it asymptotically requires $\rF \log(m!)$ bits, where $\rF$ is the resolution of the floating point representation used to sample the sampled uniform variable in bits (e.g. $32$ for $\mathsf{float32}$).
For the Gumbel watermark, we sample a uniform variable for each $x\in\cX$, resulting in a total of $m\rF$ bits.

In \simplexwater{}, the side information size is $|\cS|=m-1$. To that end, we required $\log(m)$ bits to sample a single $s\in[1:m-1]$.
Furthermore, \heavywater{} is not constrained to a specific size of $|\cS|=k$, and in the proposed experiments we take $k=1024$ which is significantly smaller than both $m$ and $2^\rF$.
The resulting amount of bits to be sampled from the random seed is $\log(1024)=10$ bits.
Consequently, we observe that \heavywater{} is the most randomness-efficient watermark across considered schemes, while also being the best-performing watermark across considered experiments (see Section \ref{sec:numerics}).

Minimizing the number of random bits extracted from a random number generator directly reduces computational and energy overhead in watermark embedding as less information is to be stored on the GPU. This eases implementation in resource‑constrained hardware by lowering entropy demands and memory usage.
Additionally, it limits side‑channel leakage by shrinking an adversary’s observable output \cite{mukhopadhyay2014hardware}, which may lead to more secure watermarking.
We leave an extensive study and quantification of the benefits of randomness efficiency to future work.

\section{Proofs for Theorems from Section \ref{sec:binary} and \ref{sec:beyond_binary}.}\label{appendix:proofs}
\def\theequation{B.\arabic{equation}}
\def\thetable{B.\arabic{table}}
\def\thefigure{B.\arabic{figure}}
\def\thelem{B.\arabic{lem}}
\def\thedefn{B.\arabic{defn}}
\def\theprop{B.\arabic{prop}}

We prove Proposition \ref{prop:simplified_HD}, Theorem \ref{thm:converse_upperbound_binary}, Theorem \ref{ach}, Theorem \ref{thm:gumbel_as_ot} and Theorem \ref{thm:detection-lb-informal}

\subsection{Proof of Proposition ~\ref{prop:simplified_HD}}\label{app:prop_HD}

In this section, we prove Proposition \ref{prop:simplified_HD}, which connects the watermark design problem to a coding theoretic problem when the score function class is chosen to be binary.

\textbf{Proposition 1 (restated):} Let $\lambda\in\left[\frac{1}{2},1\right)$.
For $f\in\fbin$, define the vector $f_i=[f(i,1),\dots,f(i,k)]\!\in\!\{0,1\}^k$ for each $i\in\mathcal{X}$. Then,
\begin{align}\Dgap(m,k,\lambda,\mathcal{F}_{\mathsf{bin}})=\max_{f\in\fbin}\quad\min_{\substack{\substack{i,j\in\mathcal{X},i\neq j}}} \quad \frac{(1-\lambda) d_H(f_i,f_j)}{k},
\end{align}
where $d_H(a,b)=\sum_{i=1}^k\mathbf{1}_{\{a_i\neq b_i\}}$ denotes the Hamming distance between $a,b\in\{0,1\}^k$ and $\mathbf{1}_{\{\cdot\}}$ is the indicator function.

\begin{proof}[Proof of Prop.1] Recall that we consider $P_S \sim \mathcal{U}(1\colon\!k)$. For any $P_X$, let $\Psi(P_X)=\max_{P_{XS}}\left(\EE_{P_{XS}}\left[f(X,S)\right]-\EE_{P_{X}P_S}\left[f(X,S)\right] \right)$. 

\begin{Claim}\label{concave}
$\argmin_{P\in\cP_\lambda}\Psi(P)\in\cP_{\mathsf{spike},\lambda}$, where
    \begin{align}\label{pspike}
        \cP_{\mathsf{spike},\lambda} =
\left\{ P \in \Delta_m  \mid \{P_X(x_1), P_X(x_2), \dots, P_X(x_m)\} = \{\lambda, 1-\lambda, 0, 0, \dots, 0\} \right\}.
    \end{align}
Here $P_X(x_i)$ denotes the $x_i$th element of the $P_X$ probability vector with $(x_1,\dotsc,x_m)$ representing any permutation of $(1,\dotsc,m)$. $\Delta_m$ is the $m$-dimensional probability simplex.
\end{Claim} 
\begin{proof}[Proof of Claim~\ref{concave}]
To prove Claim~\ref{concave}, we first show that $\Psi(P_X)$ is concave in $P_X$.
By definition, for any given $P_X$ and uniform $P_S$, we have, 
$$\Psi(P_X) = \max_{P_{XS}} \left( \mathbb{E}_{P_{XS}}[f(X,S)] - \mathbb{E}_{P_X P_S}[f(X,S)] \right).$$

Define $P_{XS}^*$ as the optimal joint distribution that achieves the maximum for a given \( P_X \). Then, we have:
$$\Psi(P_X) = \mathbb{E}_{P_{XS}^*}[f(X,S)] - \mathbb{E}_{P_X P_S}[f(X,S)].$$
    
Now, consider the mixture \( P_X^\theta = \theta P_X^{(1)} + (1-\theta) P_X^{(2)} \) for some $\theta\in[0,1]$ and $P_X^{(1)},P_X^{(2)}\in\Delta_m$. Given \( P_{XS}^{(1)*} \) and \( P_{XS}^{(2)*} \), the maximizing couplings of \( \Psi(P_X^{(1)}) \) and \( \Psi(P_X^{(2)}) \) respectively, we define a mixed joint distribution:
$P_{XS}^{\theta*} = \theta P_{XS}^{(1)*} + (1-\theta) P_{XS}^{(2)*}$. Note that,
\begin{align}
\sum_sP_{XS}^{\theta*}&=\theta\sum_s P_{XS}^{(1)*}+(1-\theta)\sum_s P_{XS}^{(2)*}\\  
&=\theta P_X^{(1)}+(1-\theta) P_X^{(2)}\quad \text{(as $P_{XS}^{(i)*}$ is the optimal coupling of $P_X^{(i)}$ and $P_S$. )}\\
&=P_X^{(\theta)}\\
\sum_xP_{XS}^{\theta*}&=\theta\sum_x P_{XS}^{(1)*}+(1-\theta)\sum_x P_{XS}^{(2)*}\\  
&=\frac{\theta}{k}+(1-\theta) P_X^{(2)}\quad \text{(as $P_{XS}^{(i)*}$ is the optimal coupling of $P_X^{(i)}$ and $P_S$. )}\\
&=\frac{1}{k}
\end{align}
which shows that $P_{XS}^{\theta*}$ is a valid coupling of $P_X^{(\theta)}$ and uniform $P_S$. Using the linearity of the expectation operation, the expectation under the mixed distribution $P_{XS}^{\theta *}$ is:
$$    \mathbb{E}_{P_{XS}^{\theta*}}[f(X,S)] = \theta \mathbb{E}_{P_{XS}^{(1)*}}[f(X,S)] + (1-\theta) \mathbb{E}_{P_{XS}^{(2)*}}[f(X,S)].$$

Similarly, for the independent case:
$$    \mathbb{E}_{P_{X}^{\theta} P_S}[f(X,S)] = \theta \mathbb{E}_{P_{X}^{(1)} P_S}[f(X,S)] + (1-\theta) \mathbb{E}_{P_{X}^{(2)} P_S}[f(X,S)].$$

Since $\Psi(P_X^\theta)$ by definition takes the maximum coupling among all $P_{XS}$, it upper bounds the value attained by the specific mixture $P_{XS}^{\theta*}$. Thus,
\begin{align*}
    \Psi(P_X^\theta)&\geq \mathbb{E}_{P_{XS}^{\theta*}}[f(X,S)] - \mathbb{E}_{P_{X}^{\theta} P_S}[f(X,S)] \\
    &= \theta \mathbb{E}_{P_{XS}^{(1)*}}[f(X,S)] + (1-\theta) \mathbb{E}_{P_{XS}^{(2)*}}[f(X,S)] -\theta \mathbb{E}_{P_{X}^{(1)} P_S}[f(X,S)]\nonumber\\
    &\qquad- (1-\theta) \mathbb{E}_{P_{X}^{(2)} P_S}[f(X,S)])\\
    &= \theta \Psi(P_X^{(1)}) + (1-\theta) \Psi(P_X^{(2)}),
\end{align*}
which proves concavity.

We now show that for any \( \lambda \in \left[\frac{1}{2},1\right] \), the minimizer of \( \Psi(P_X) \) lies in the set \( \mathcal{P}_{\mathsf{spike},\lambda} \),where
\begin{align}\label{two-token}
        \cP_{\mathsf{spike},\lambda} =
\left\{ P \in \Delta_m  \mid \{P_X(x_1), P_X(x_2), \dots, P_X(x_m)\} = \{\lambda, 1-\lambda, 0, 0, \dots, 0\} \right\},
    \end{align}

Recall that $\cP_\lambda=\{P\in\Delta_m,\|P\|_\infty\leq\lambda\}$. This is a convex set and the extreme points of this set are precisely the spike distributions in \( \mathcal{P}_{\mathsf{spike},\lambda} \), which correspond to permutations of \( (\lambda, 1-\lambda, 0, \dots, 0) \).
Since \( \Psi(P_X) \) is concave, its minimum over the convex set \( \mathcal{P}_\lambda \) occurs at an extreme point of \( \mathcal{P}_\lambda \). 
Hence, the minimizer of \( \Psi(P_X) \) belongs to \( \mathcal{P}_{\mathsf{spike},\lambda} \), which completes the proof.
\end{proof}
Using Claim~\ref{concave}, we prove Proposition~\ref{prop:simplified_HD} as follows. Let $p^\star\in\cP_{\mathsf{spike},\lambda}$ be the minimizer in the minimization of $\Dgap(m,k,\lambda,\mathcal{F}_{\mathsf{bin}})$ in \eqref{D_gap}, and let $(i,j)$ be its non-zero entries, i.e., $p^\star_i=\lambda$ and $p^\star_j=1-\lambda$.
Under such $p^\star$ and uniform $P_S$, any coupling can be written as 
\[
P_{XS}(x,s)=\begin{cases}
\lambda Q_{X|S}(s\mid i) & \text{if } x=i,\\[1mm]
(1-\lambda)Q_{X|S}(s\mid j) & \text{if } x=j\\
0 & \text{if } x\neq i,j
\end{cases}
\]
where $Q_{S|X}=P_{XS}/P_X$ denotes the conditional distribution of the shared randomness $S$ given the selected token $X$. The coupling constraint for each $s\in\cS$ becomes
\[
\lambda Q_{S|X}(s\mid i)+(1-\lambda)Q_{S|X}(s\mid j)=\frac{1}{k},
\]
To that end, we represent such coupling as 
$$
Q_{S|X}(s\mid i)=a_s,\quad Q_{S|X}(s\mid j)=\frac{\frac{1}{k}-\lambda a_s}{1-\lambda}
$$
for some set of parameters $(a_s)_{s\in\cS}$, such that $a_s\in[0,\frac{1}{k\lambda}]$.
Under this construction, considering the worst case $(i,j)$ pair of non-zero $P_X$ indices, we have,
\begin{align}
    \Dgap\left(m,k,\lambda,\mathcal{F}_{\mathsf{bin}}\right)&=\max_{f:\mathcal{X}\times\mathcal{S}\mapsto[0,1]}\min_{i\neq j}\max_{Q_{X|S}} \sum_{s=1}^k\lambda Q_{S|X}(s|i)f(i,s)+(1-\lambda)Q_{S|X}(s|j)f(j,s)\nonumber\\
    &\qquad-\left(\frac{\lambda}{k}f(i,s)+\frac{1-\lambda}{k}f(j,s)\right)\\
    &=\max_{f:\mathcal{X}\times\mathcal{S}\mapsto[0,1]}\min_{i\neq j}\max_{a_s\in[0,\frac{1}{k\lambda}]} \sum_{s=1}^k\lambda a_sf(i,s)+(1-\lambda)\left(\frac{\frac{1}{k}-\lambda a_s}{1-\lambda}\right)f(j,s)\nonumber\\
    &\qquad-\left(\frac{\lambda}{k}f(i,s)+\frac{1-\lambda}{k}f(j,s)\right)\\
    &=\max_{f:\mathcal{X}\times\mathcal{S}\mapsto[0,1]}\min_{i\neq j}\max_{a_s\in[0,\frac{1}{k\lambda}]} \sum_{s=1}^k \lambda\left(a_s-\frac{1}{k}\right)(f(i,s)-f(j,s))\label{last2}
\end{align}
The design of the coupling $P_{XS}$ boils down to choosing $\{a_s\}_{s\in\cS}$ such that \eqref{last2} is maximized. Moreover, since $\sum_{s=1}^kQ_{S|X}(s|x)=1$ for $x=i$ and $x=j$, we have,
\begin{align}
    \sum_{s=1}^kQ_{S|X}(s|x)=1\quad\implies\quad\sum_{s=1}^ka_s=1
\end{align}
For a given $f$ and any two indices $(i,j)$, we characterize optimal $a_s$ as follows. Let $m_+^{i,j}=\#\{s:f(i,s)-f(j,s)=1\}$ and $m_-^{i,j}=\#\{s:f(i,s)-f(j,s)=-1\}$ where $\#$ denotes the cardinality of a set. Assume that the score functions $f$ satisfy $m_+^{i,j}\leq k\lambda$ and $m_-^{i,j}+m_-^{i,j}<k$. These assumptions are needed to eliminate trivial edge cases for which the inner minimization in \eqref{last2} is zero, which leads to zero detection (see Appendix~\ref{edge_case}). The inner-most maximum in \eqref{last2} is obtained when:
\begin{align}
    a_s&=\begin{cases}
        \frac{1}{k\lambda},& s:f(i,s)-f(j,s)=1\\
        0, & s:f(i,s)-f(j,s)=-1\\
        \frac{1-\frac{1}{k\lambda}m_+^{i,j}}{k-m_+^{i,j}-m_-^{i,j}},& s:f(i,s)-f(j,s)=0
    \end{cases}
\end{align}
The optimal values of $a_s$ are obtained by allocating the maximum probability mass to the highest scores $f(i,s)-f(j,s)$ in \eqref{last2}.
Continuing from \eqref{last2}, we have,

\begin{align}
    \Dgap\left(m,k,\lambda,\mathcal{F}_{\mathsf{bin}}\right)&=\max_{f:\mathcal{X}\times\mathcal{S}\mapsto[0,1]}\min_{i\neq j} \frac{1}{k}\sum_{s:f_i-f_j=-1} \lambda|f(i,s)-f(j,s)|\nonumber\\
    &\qquad+\frac{1}{k}\sum_{s:f_i-f_j=1} (1-\lambda)|f(i,s)-f(j,s)|\\
&=\max_{f:\mathcal{X}\times\mathcal{S}\mapsto[0,1]}\min_{i\neq j} \frac{1}{k}(\lambda m_-^{i,j}+(1-\lambda)m_+^{i,j})\label{cont2}
\end{align}
Let $d_{ij}=\#\{s:f(i,s)\neq f(j,s)\}$. Then, $m_{-}^{ij}+m_{+}^{ij}=d_{ij}$. Let $m_{-}^{ij}=\beta d_{ij}$ and $m_{+}^{ij}=(1-\beta) d_{ij}$ for some $\beta\in[0,1]$. From \eqref{cont2}, we have,
\begin{align}
   \Dgap\left(m,k,\lambda,\mathcal{F}_{\mathsf{bin}}\right)&=\max_{f:\mathcal{X}\times\mathcal{S}\mapsto[0,1]}\min_{i\neq j}\min_{\beta\in[0,1]} \frac{1}{k}(\lambda \beta+(1-\lambda)(1-\beta))d_{i,j}\\
    &=\max_{f:\mathcal{X}\times\mathcal{S}\mapsto[0,1]}\min_{i\neq j}\min_{\beta\in[0,1]} \frac{1}{k}(\beta(2\lambda-1)+(1-\lambda))d_{i,j}\\
    &=\max_{f:\mathcal{X}\times\mathcal{S}\mapsto[0,1]}\min_{i\neq j} \frac{d_{i,j}}{k}(1-\lambda)
\end{align}
since the inner minimum is achieved when $\beta=0$, as $\lambda\geq\frac{1}{2}$. The proof is complete as $d_{ij}$ is exactly the Hamming distance between $f(i,\cdot)$ and $f(j,\cdot)$.

\end{proof}

\subsubsection{Edge Cases}\label{edge_case}

In Appendix~\ref{app:prop_HD}, we assumed that the score functions $f$ satisfy $m_+^{i,j}\leq k\lambda$ and $m_+^{i,j}+m_-^{i,j}<k$. In this section, we show what happens when these constraints are violated.

\textbf{Case 1: $m_+^{i,j}> k\lambda$: } In this case, we obtain the optimal values of $a_s$ as in the proof of Proposition ~\ref{prop:simplified_HD} above, by assigning the probability masses to the high score cases as follows. Let $\mathcal{J}\subset m_+^{i,j}$, $|\mathcal{J}|=k\lambda$ be any subset of $k\lambda$ values of $s$, each satisfying $f(i,s)-f(j,s)=1$. In this case, similar to case 1, the inner maximum in \eqref{last2} is obtained when: 
\begin{align}
    a_s&=\begin{cases}
        \frac{1}{k\lambda},& s\in\mathcal{J}\\
        0, & s\notin\mathcal{J}
    \end{cases}
\end{align}
Continuing from \eqref{last2}, we have,
\begin{align}
    \Dgap\left(m,k,\lambda,\mathcal{F}_{\mathsf{bin}}\right)&=\max_{f:\mathcal{X}\times\mathcal{S}\mapsto[0,1]}\min_{i\neq j} \lambda(1-\lambda)-\frac{\lambda(m_{+}^{i,j}-k\lambda)}{k}+\frac{\lambda m_{-}^{i,j}}{k}\\
    &=\max_{f:\mathcal{X}\times\mathcal{S}\mapsto[0,1]}\min_{i\neq j} \lambda+\frac{\lambda}{k}(m_-^{i,j}-m_+^{i,j})\\
    &=\max_{f:\mathcal{X}\times\mathcal{S}\mapsto[0,1]}\min_{i\neq j}\min_{\beta\in[0,1]} \lambda+\frac{\lambda}{k}(2\beta-1)d_{ij}\\
    &=\max_{f:\mathcal{X}\times\mathcal{S}\mapsto[0,1]}\min_{i\neq j}\lambda\left(1-\frac{1}{k}d_{ij}\right)\\
    &=0
\end{align}
where the last equality follows from the fact that the inner minimum is achieved when $d_{ij}=k$, i.e., $f(i,\cdot)$ is the all zeros vector and $f(j,\cdot)$ is the all ones vector. The score functions $f$ that result in such cases are uninteresting as the detection can not be improved.

\textbf{Case 2: $m_+^{i,j}\leq k\lambda$ and $m_+^{i,j}+m_-^{i,j}=k$: } In this case, we obtain the optimal values of $a_s$ as in Appendix~\ref{app:prop_HD}, by assigning the probability masses to the high score cases as follows. Note that in this case $|s:f(i,s)-f(j,s)=0|=0$. Therefore,
\begin{align}
    a_s&=\begin{cases}
        \frac{1}{k\lambda},& s:f(i,s)-f(j,s)=1\\
        \frac{1-\frac{1}{k\lambda}m_+^{i,j}}{k-m_+^{i,j}},& s:f(i,s)-f(j,s)=-1
    \end{cases}
\end{align}
Continuing from \eqref{cont2}, we have,
\begin{align}
   \Dgap\left(m,k,\lambda,\mathcal{F}_{\mathsf{bin}}\right)&=\max_{f:\mathcal{X}\times\mathcal{S}\mapsto[0,1]}\min_{i\neq j} \frac{(1-\lambda)m_+^{ij}}{k}-\lambda\left(\frac{1}{k}-\frac{1-\frac{1}{k\lambda}m_+^{ij}}{k-m_+^{i,j}}\right)(k-m_+^{ij})\\
    &=\max_{f:\mathcal{X}\times\mathcal{S}\mapsto[0,1]}\min_{i\neq j}\frac{2m_+^{ij}}{k}(1-\lambda)\\
    &=0
\end{align}
where the last equality follows from the fact that the inner minimum is achieved when $m_+^{ij}=0$, i.e., $f(i,\cdot)$ is the all zeros vector and $f(j,\cdot)$ is the all ones vector. The score functions $f$ that result in such cases are uninteresting as the detection can not be improved.

\subsection{Proof of Theorem~\ref{thm:converse_upperbound_binary}.}\label{apdx:proof_thm_1}

In this section, we derive an upper bound for the detection gap in \eqref{D_gap} using the Plotkin bound from coding theory \cite{roth2006introduction}.

\textbf{Theorem 1 (restated):}
Consider the class of binary score functions $\mathcal{F}_{\mathsf{bin}}$ and uniform $P_S$. Then, for any $\lambda\in\left[\frac{1}{2},1\right)$, the maximum detection gap can be bounded as 
\begin{align}
\Dgap(m,k,\lambda,\mathcal{F}_{\mathsf{bin}})\leq\frac{m(1-\lambda)}{2(m-1)}
\end{align}
\begin{proof}[Proof of Thm.~\ref{thm:converse_upperbound_binary}]
Thm.~\ref{thm:converse_upperbound_binary} follows directly from the Plotkin bound \cite{roth2006introduction}, which provides an upper bound on the minimum normalized Hamming distance between any two codewords, considering any code construction. Indeed,
\begin{align}\Dgap(m,k,\lambda,\mathcal{F}_{\mathsf{bin}})=\max_{f\in\fbin}\quad\min_{\substack{\substack{i,j\in\mathcal{X},i\neq j}}} \quad \frac{(1-\lambda) d_H(f_i,f_j)}{k}\leq (1-\lambda)\frac{m}{2(m-1)}.\label{ub_pf}
\end{align}

\end{proof}

\subsection{Proof of Theorem ~\ref{ach}.}\label{apdx:proof_thm_2}

In this section, we show that \simplexwater{} achieves the upper bound in \eqref{ub_pf}.

Theorem~\ref{ach} (restated): For any $\lambda\in\left[\frac{1}{2},1\right)$ the maximum detection gap upper bound \eqref{eq:ub_binary} is attained by \simplexwater{}.

\begin{proof}
    \simplexwater{} uses the Simplex code construction in Definition~\ref{simplex} as the score function. The simplex code achieves the Plotkin bound \cite{roth2006introduction}, i.e.,
    \begin{align}
        \min_{i\neq j} \frac{d_H(\fsim(i,\cdot),\fsim(j,\cdot))}{k}=\frac{m}{2(m-1)}
    \end{align}
    Therefore,
    \begin{align}\Dgap(m,k,\lambda,\mathcal{F}_{\mathsf{bin}})\geq\min_{\substack{\substack{i,j\in\mathcal{X},i\neq j}}}\frac{(1-\lambda) d_H(\fsim(i,\cdot),\fsim(j,\cdot))}{k}= (1-\lambda)\frac{m}{2(m-1)}.\label{lb_pf}
\end{align}
Considering the upper and lower bounds in \eqref{ub_pf} and \eqref{lb_pf}, we have,
\begin{align}\Dgap(m,k,\lambda,\mathcal{F}_{\mathsf{bin}})= (1-\lambda)\frac{m}{2(m-1)},
\end{align}
which is achieved by \simplexwater{}.
\end{proof}

\subsection{Proof of Theorem \ref{thm:gumbel_as_ot}.}\label{sec:appendix_proof_gumbel}

In this section, we establish that the Gumbel watermark scheme \cite{aaronson2023watermark} can be understood within our optimal transport framework, i.e., we prove Theorem \ref{thm:gumbel_as_ot}, restated below.

\begin{Theorem}[Gumbel Watermark as OT]
When the score random variables $f(x, s)$, are sampled i.i.d. from $\text{Gumbel}(0,1)$, the solution to the OT problem in \eqref{eq:OT_problem} converges to the Gumbel watermark \cite{aaronson2023watermark} as $|\mathcal{S}|=k \to \infty$. 
\end{Theorem}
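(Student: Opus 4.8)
The plan is to unpack both sides of the claimed limit and show they coincide. First I would recall the Gumbel watermark: for a next-token distribution $P_X$ and i.i.d. uniform side information $u_1,\dots,u_m$ on $[0,1]$, it samples $\arg\max_{x\in\cX} \frac{\log u_x}{P_X(x)}$ (equivalently, $\arg\max_x P_X(x) + \text{(rescaled Gumbel noise)}$ after the standard reparametrization $-\log(-\log u_x)\sim\text{Gumbel}(0,1)$). Crucially, the Gumbel-max trick implies this $\arg\max$ lands on token $x$ with probability exactly $P_X(x)$, so the Gumbel watermark produces a \emph{deterministic} (singleton) watermarked distribution $P_{X|S=s}$ for each realization of the noise, whose average over the noise recovers $P_X$.

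Next I would analyze the OT side. With scores $f(x,s)$ drawn i.i.d. from $\text{Gumbel}(0,1)$ and $P_S = \mathsf{Unif}([1:k])$, the inner maximization in \eqref{eq:OT_problem} is the assignment-type OT problem of coupling $P_X$ with the uniform measure on $k$ atoms so as to maximize $\EE_{P_{XS}}[f(X,S)]$. The key observation is that as $k\to\infty$, each token $x$ gets coupled to a large pool of roughly $k P_X(x)$ side-information symbols $s$, and the optimal coupling — by a rearrangement/greedy argument — assigns to token $x$ precisely the $s$-symbols carrying the \emph{largest} values $f(x,s)$ among all $(x,s)$ pairs, subject to the mass constraint $\sum_s P_{XS}(x,s) = P_X(x)$. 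I would then fix a realization $s$ of the side information and ask: which token $x$ does the OT coupling route $s$ to? It routes $s$ to the token $x$ for which $f(x,s)$ is ``competitive enough'' to be in $x$'s selected pool; quantitatively, one shows $P_{X|S=s}$ concentrates (as $k\to\infty$) on a single token, namely $\arg\max_x \{f(x,s) - \theta_x\}$ where $\theta_x$ is the threshold (dual potential) for token $x$ determined by the constraint that a $P_X(x)$-fraction of symbols exceed it. For i.i.d. $\text{Gumbel}(0,1)$ scores, the fraction of $s$ with $f(x,s) > \theta$ is $1 - \exp(-e^{-\theta})$, so the constraint $1-\exp(-e^{-\theta_x}) = P_X(x)$ gives $e^{-\theta_x} = -\log(1-P_X(x))$, and a short computation (using the memorylessness-type tail identity of the Gumbel, or directly that $\max$ of a Gumbel shifted by $-\theta_x$ has the right comparison law) identifies $\arg\max_x\{f(x,s)-\theta_x\}$ with the Gumbel-max selection rule of \cite{aaronson2023watermark}. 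Matching the selection probabilities token-by-token then yields convergence of the OT-induced $P_{X|S}$ to the Gumbel watermark distribution.

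The main obstacle I anticipate is making the ``$k\to\infty$ limit of the optimal coupling is a threshold rule'' step rigorous: one must control the fluctuations of the empirical distribution of the $k$ scores $\{f(x,s)\}_s$ for each $x$ around its population CDF, show the dual potentials $\theta_x$ converge, and argue that the conditional distribution $P_{X|S=s}(\cdot\mid s) = k\cdot P_{XS}(\cdot,s)$ genuinely concentrates on a singleton for almost every $s$ rather than splitting mass near threshold ties (ties have vanishing probability under the continuous Gumbel law, which helps). A clean way to organize this is via LP/OT duality — write the dual of \eqref{eq:OT_problem}, show the complementary slackness conditions force the threshold structure, then pass to the limit using the Glivenko–Cantelli convergence of the per-token empirical score CDFs. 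I would also need to confirm the normalization $P_{X|S=s}(x\mid s) = k\cdot P_{XS}(x,s)$ from Algorithm \ref{alg:watermark} is consistent with the singleton limit. Modulo these concentration details, the algebra identifying the Gumbel tail threshold with the $\log u_x / P_X(x)$ rule is routine.
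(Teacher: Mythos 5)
Your high-level plan is the same as the paper's: write the Kantorovich dual of \eqref{eq:OT_problem}, note that the optimal coupling routes each $s$ to $\arg\max_x\,[f(x,s)-\alpha_x]$ for dual potentials $\alpha_x$, pass to the limit $k\to\infty$ so the empirical constraints become population ones, and identify the resulting selection rule with the Gumbel-max trick. The paper handles the limit via epi-convergence of the dual objectives (Theorem~\ref{thm:minimizer_convergence} plus an argmax-stability argument using that ties have probability zero under a continuous law), whereas you invoke Glivenko--Cantelli — both reasonable.

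However, there is a concrete error in the step that would actually close the argument. You determine $\theta_x$ from ``the constraint that a $P_X(x)$-fraction of symbols exceed it,'' i.e.\ $\Pr(f(x,s)>\theta_x)=P_X(x)$, yielding $e^{-\theta_x}=-\log(1-P_X(x))$. That is the wrong constraint. In the OT dual, each $s$ is routed to $\arg\max_j\,[f(j,s)-\theta_j]$, so the mass constraint that pins down $\theta_x$ is \emph{competitive}: $\Pr\bigl(x=\arg\max_j\,[f(j,s)-\theta_j]\bigr)=P_X(x)$, a joint condition over all tokens, not a per-token marginal tail condition. For i.i.d.\ $\text{Gumbel}(0,1)$ scores the Gumbel-max identity gives
\[
\Pr\bigl(x=\arg\max_j\,[f(j,s)-\theta_j]\bigr)=\frac{e^{-\theta_x}}{\sum_j e^{-\theta_j}},
\]
so the constraint forces $\theta_x=-\log P_X(x)$ (up to an additive constant), which is exactly how the paper identifies $-\alpha^*_j$ with the logits and hence recovers the Gumbel watermark. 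Your formula $e^{-\theta_x}=-\log(1-P_X(x))$ agrees with this only to first order for small $P_X(x)$ and would not reproduce the Gumbel watermark exactly. Relatedly, the ``greedy rearrangement'' intuition — assign each token its highest-scoring $s$'s — is not what OT does in general; the argmax rule with potentials is the correct characterization, and the greedy picture is what leads you to the incorrect per-token quantile condition. Replace that step with the competitive argmax mass constraint and the rest of your outline (duality, complementary slackness, concentration/convergence of $\alpha^*_k$) lines up with the paper's proof.
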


To prove this, we first write down the Kantorovich dual of our optimal transport formulation and identify the dual potentials $\{\alpha_x,\beta_s\}$ that generate the arg-max coupling. We then analyze the behavior of these potentials in the limit $k \rightarrow \infty$, showing that the resulting arg-max rule converges to the classical Gumbel-Max sampling procedure. A final concentration argument ensures that the random coupling concentrates around its expectation, thereby establishing that the OT‐derived sampler coincides with the Gumbel watermarking scheme.

To understand this connection, we first review the Gumbel watermarking scheme. The Gumbel-Max trick states that sampling from a softmax distribution can be equivalently expressed as:
\begin{equation}
y_t = \arg\max_{y \in \mathcal{V}} \frac{u_t(y)}{T} + G_t(y)
\end{equation}
where $u_t(y)$ are the logits, $T$ is the temperature parameter, and $G_t(y) \sim \text{Gumbel}(0,1)$ independently for each token position $t$ and vocabulary element $y$. A Gumbel$(0,1)$ random variable can be generated via:
\begin{equation}
G_t(y) = -\log(-\log(r_t(y)))
\end{equation}
where $r_t(y) \sim \text{Uniform}([0,1])$.

In the Gumbel watermarking scheme, the uniform random variables are replaced with pseudo-random values:
\begin{align}
r_t(y) &\sim \text{Uniform}([0,1]) \text{ (in unwatermarked model)} \\
r_t(y) &= F_{y_{t-m:t-1},\mathbf{k}}(y) \text{ (in watermarked model)}
\end{align}

Here, $F_{y_{t-m:t-1},\mathbf{k}}(y)$ uses a secret key $\mathbf{k}$ and previously generated tokens $y_{t-m:t-1}$ to deterministically generate values that appear random without knowledge of $\mathbf{k}$.

To establish the connection to our OT framework \eqref{eq:OT_problem}, in the case when randomness is i.i.d. generated, we analyze the dual formulation of the optimal transport problem. In this formulation, we seek to minimize $\sum_x P_X(x)\alpha_x + \frac{1}{k}\sum_s \beta_s$ subject to $\alpha_x + \beta_s \geq f(x,s)$. The optimal values for these dual variables satisfy specific conditions that link to the Gumbel-Max construction.
The key insight comes from the optimality condition in our framework:
\begin{equation}
P_X(i) = \mathbb{P}(\arg\max_j [f(j, s) - \alpha^*_j] = i)
\end{equation}
This can be directly mapped to the Gumbel-Max trick by identifying that $f(j, s)$ corresponds to the Gumbel noise $G_t(y)$ and $\alpha^*_j$ corresponds to $-\frac{u_t(y)}{T}$ (the negative normalized logits). With these identifications, the Gumbel-Max sampling expression:
\begin{equation}
y_t = \arg\max_{y \in \mathcal{V}} \frac{u_t(y)}{T} + G_t(y) = \arg\max_{y \in \mathcal{V}} [G_t(y) - (-\frac{u_t(y)}{T})]
\end{equation}

Takes exactly the same form as our framework's expression $\arg\max_j [z(j, s) - \alpha^*_j]$. Further, the probability that this argmax equals $i$ is precisely $P_X(i)$ in our framework. Therefore, when $f(j,s) \sim \text{Gumbel}(0,1)$, we will show below that our optimal transport solution exactly recovers the Gumbel watermarking scheme.

The detailed proof proceeds by analyzing the convergence of the discretized dual problem to its continuous limit as $k \to \infty$. The optimality conditions in the limit confirm that our OT framework generalizes the Gumbel watermark as a special case when scores are drawn from the Gumbel distribution. We start by defining a general optimal transport problem and its Kantorovich duality. 

\begin{definition}[Optimal Transport Problem]
Given probability measures $\mu$ and $\nu$ on spaces $\mathcal{X}$ and $\mathcal{Y}$ respectively, and a cost function $c: \mathcal{X} \times \mathcal{Y} \rightarrow \mathbb{R}$, the optimal transport problem seeks to find a coupling $\pi$ (a joint distribution with marginals $\mu$ and $\nu$) that minimizes the expected cost:
\begin{align}
\inf_{\pi \in \Pi(\mu, \nu)} \int_{\mathcal{X} \times \mathcal{Y}} c(x,y) \, d\pi(x,y)
\end{align}
where $\Pi(\mu, \nu)$ is the set of all probability measures on $\mathcal{X} \times \mathcal{Y}$ with marginals $\mu$ and $\nu$.
\end{definition}

The Kantorovich duality is a fundamental result that provides an equivalent formulation of this problem. For more details, see \cite[Chapter 5]{villani2009optimal}.

\begin{thm}[Kantorovich Duality]
The optimal transport problem is equivalent to:
\begin{align}
\sup_{(\varphi, \psi) \in \Phi_c} \left\{ \int_{\mathcal{X}} \varphi(x) \, d\mu(x) + \int_{\mathcal{Y}} \psi(y) \, d\nu(y) \right\}
\end{align}
where $\Phi_c = \{(\varphi, \psi) : \varphi(x) + \psi(y) \leq c(x,y) \}$ is the set of functions satisfying the c-inequality constraint.
\end{thm}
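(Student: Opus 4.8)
The plan is to prove the two inequalities $\inf \ge \sup$ (weak duality) and $\inf \le \sup$ (strong duality) separately; the first is immediate and the second carries all the content. For weak duality, I would fix any coupling $\pi \in \Pi(\mu,\nu)$ and any admissible pair $(\varphi,\psi) \in \Phi_c$, so that $\varphi(x)+\psi(y) \le c(x,y)$ pointwise, then integrate this inequality against $\pi$ and use that the marginals of $\pi$ are $\mu$ and $\nu$:
\[
\int_{\mathcal{X}\times\mathcal{Y}} c\, d\pi \;\ge\; \int_{\mathcal{X}\times\mathcal{Y}} \bigl(\varphi(x)+\psi(y)\bigr)\, d\pi(x,y) \;=\; \int_{\mathcal{X}}\varphi\, d\mu + \int_{\mathcal{Y}}\psi\, d\nu .
\]
Taking the infimum over $\pi$ and the supremum over $(\varphi,\psi)\in\Phi_c$ yields $\inf_\pi \int c\, d\pi \ge \sup_{(\varphi,\psi)\in\Phi_c}\bigl\{\int\varphi\, d\mu + \int\psi\, d\nu\bigr\}$.

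For strong duality, the idea is to rewrite the primal as an unconstrained infimum by turning the marginal constraints into a penalty. Letting $\pi$ range over all nonnegative finite measures on $\mathcal{X}\times\mathcal{Y}$ with marginals $\pi_{\mathcal{X}},\pi_{\mathcal{Y}}$, I would write
\[
\inf_{\pi\in\Pi(\mu,\nu)}\int c\, d\pi \;=\; \inf_{\pi\ge 0}\ \sup_{\varphi,\psi}\ \Bigl( \int c\, d\pi + \int\varphi\, d(\mu-\pi_{\mathcal{X}}) + \int\psi\, d(\nu-\pi_{\mathcal{Y}}) \Bigr),
\]
since the inner supremum over all bounded continuous test functions is $+\infty$ unless $\pi_{\mathcal{X}}=\mu$ and $\pi_{\mathcal{Y}}=\nu$, in which case it equals $\int c\, d\pi$. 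The integrand is affine in $\pi$ and affine in $(\varphi,\psi)$, so I would exchange $\inf$ and $\sup$; the inner infimum then becomes $\int\varphi\, d\mu + \int\psi\, d\nu + \inf_{\pi\ge 0}\int\bigl(c(x,y)-\varphi(x)-\psi(y)\bigr)\, d\pi$, and this last infimum equals $0$ exactly when $\varphi(x)+\psi(y)\le c(x,y)$ everywhere (attained at $\pi=0$) and $-\infty$ otherwise (letting a point mass at a violating $(x,y)$ go to infinity). What remains is precisely $\sup_{(\varphi,\psi)\in\Phi_c}\bigl\{\int\varphi\, d\mu + \int\psi\, d\nu\bigr\}$, which combined with weak duality closes the argument.

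The main obstacle is legitimizing the exchange of the infimum and the supremum. In the discrete setting used throughout this paper, where $\mathcal{X}=[1:m]$ and $\mathcal{Y}=\mathcal{S}=[1:k]$, this step is free: the primal is a finite linear program over the compact polytope $\Pi(\mu,\nu)\subset\mathbb{R}_{\ge 0}^{mk}$, which is nonempty (it contains the product coupling $\mu\otimes\nu$) and on which a finite linear objective is bounded, so the strong duality theorem for linear programs gives equality of the two values and attainment of both optima with no further work. For the general statement one must instead invoke a topological minimax theorem such as Sion's --- equivalently, the Fenchel--Rockafellar duality theorem --- using weak-$\ast$ compactness of the set of couplings when $\mathcal{X}$ and $\mathcal{Y}$ are compact (or a tightness argument on Polish spaces) together with lower semicontinuity of $c$ to ensure the relevant functional is lower semicontinuous in $\pi$; this careful version is carried out in \cite[Chapter 5]{villani2008optimal}, which we follow.
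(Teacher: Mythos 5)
Your proposal is correct and follows the canonical argument: weak duality by integrating the constraint against a coupling, and strong duality via the Lagrangian reformulation with the inf--sup interchange justified by finite-dimensional LP duality in the discrete setting (or Fenchel--Rockafellar in general). The paper does not prove this theorem itself but defers to the cited reference, which proceeds in essentially the same way, so there is nothing to reconcile.
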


This duality relationship (i.) transforms a non-trivial constrained optimization problem over probability measures into an optimization over functions, (ii.) provides a way to certify optimality through complementary slackness conditions and (iii.) enables us to analyze the convergence properties of OT problems through the convergence of dual objective functions. In many applications, the Kantorovich duality is rewritten with the constraint reversed (potential functions sum to $\geq c$), transforming the problem into a minimization rather than maximization. We adopt this convention in our formulation below.

\textbf{Primal Formulation.} Let us consider an optimal transport problem described by the inner maximization of \eqref{D_gap}, that is, given by Equation \eqref{eq:OT_problem}. Recall that given a pair $(P_X,f)$, the inner maximization in \eqref{D_gap} amounts to an OT problem between $P_X$ and $P_S$, which is set to follow an uniform distribution on $[1:k]$. There, the score function $f$ can be equivalently denoted as the $(m\times k)$-dimensional OT cost matrix $\rC$, which is defined as $\rC_{x',s'}=-f(x',s')$ for $(x',s')\in [1:m]\times[1:k]$. This matrix is only generated once. 

Now, let $P_X$ be a probability distribution over a finite set $\{1,2,\ldots,m\}$ and $P_S$ be a uniform distribution with mass $1/k$ at each point in $\{1,2,\ldots,k\}$. We have a cost function $f(x,s)$ with zero mean and unit variance, i.e., $\mathbb{E}_P[z] = 0$ and $\mathbb{E}_P[z^2] = 1$. The inner optimization in \eqref{D_gap} aims to find a coupling $P_{XS}$ that maximizes:
\begin{align}
C_{P,k}^* = \max_{P_{XS}} \sum_{x,s} P_{XS}(x,s) \cdot f(x,s)
\end{align}
\begin{align}
\text{s.t.} \sum_s P_{XS}(x,s) &= P_X(x) \quad \forall x\\
\sum_x P_{XS}(x,s) &= \frac{1}{k} \quad \forall s\\
P_{XS}(x,s) &\geq 0 \quad \forall x,s
\end{align}

\textbf{Dual Formulation.} In order to analyze this primal optimal transport problem we can look at the dual by using a Kantorovich duality argument. The equivalent dual problem is given by:
\begin{align}
C_{D,k}^* = \min_{\alpha_x, \beta_s} \sum_x P_X(x)\alpha_x + \sum_s \frac{1}{k}\beta_s
\end{align}
\begin{align}
\text{Subject to:} \ \alpha_x + \beta_s \geq f(x,s) \quad \forall x,s
\end{align}

In this dual formulation, $\alpha_x$ and $\beta_s$ are the Kantorovich potentials (i.e., Lagrange multipliers) enforcing the marginal constraints $\sum_s P_{XS}(x,s) = P_X(x)$ and $\sum_x P_{XS}(x,s) = \frac{1}{k}$ respectively.
For any fixed values of $\alpha_x$, we need to determine the optimal values of $\beta_s$ that minimize the dual objective function. Since we aim to minimize the objective and the coefficients of $\beta_s$ are positive ($\frac{1}{k} > 0$), we want to make each $\beta_s$ as small as possible while still satisfying the constraints. For a given $s$, the constraint becomes:
\begin{align}
\beta_s \geq f(x,s) - \alpha_x \quad \forall x
\end{align}

This means that $\beta_s$ must be at least as large as $f(x,s) - \alpha_x$ for every value of $x$. To ensure all constraints are satisfied while keeping $\beta_s$ minimal, we set:
\begin{align}
\beta_s = \max_{x'}[f(x',s) - \alpha_{x'}]
\end{align}

This is the smallest value of $\beta_s$ that satisfies all constraints for a given $s$. Any smaller value would violate at least one constraint, and any larger value would unnecessarily increase the objective function. Substituting this expression back into the dual objective yields:
\begin{align}
C_{D,k}^* = \min_{\alpha_x} \sum_{x=1}^m P_X(x)\alpha_x + \frac{1}{k}\sum_{s=1}^k \max_{x'}[f(x',s) - \alpha_{x'}]
\end{align}

This reformulation reduces the dual problem to an unconstrained minimization over $\alpha_x$ only. 

To establish convergence of the dual problem, which is the key to embedding the Gumbel watermarking scheme in our framework, we will first recall a few fundamental concepts from variational analysis and convex optimization.

\begin{definition}[Epi-convergence]
A sequence of functions $f_k: \mathbb{R}^n \to \mathbb{R} \cup \{\pm \infty\}$ is said to epi-converge to a function $f: \mathbb{R}^n \to \mathbb{R} \cup \{\pm \infty\}$ if the following two conditions hold:
\begin{enumerate}[label=(\roman*)]
\item For every $x \in \mathbb{R}^n$ and every sequence $x_k \to x$, $\liminf_{k \to \infty} f_k(x_k) \geq f(x)$.
\item For every $x \in \mathbb{R}^n$, there exists a sequence $x_k \to x$ such that $\limsup_{k \to \infty} f_k(x_k) \leq f(x)$.
\end{enumerate}
\end{definition}

Epi-convergence is particularly important in optimization because it guarantees that minimizers and minimal values converge appropriately. For more details on epi-convergence, see \cite{rockafellar2009variational}.

\begin{definition}[Equi-lower semicontinuity]
A family of functions $\{f_k\}$ is equi-lower semicontinuous if for every $x \in \mathbb{R}^n$ and every $\varepsilon > 0$, there exists a neighborhood $V$ of $x$ such that
\begin{align}
\inf_{y \in V} f_k(y) > f_k(x) - \varepsilon
\end{align}
for all $k$.
\end{definition}

The following result connects pointwise convergence with epi-convergence for convex functions:

\begin{thm}[{\cite[Theorem 2]{salinetti1977relations}}]
\label{lemma:epi-convergence}
If $\{f_k\}$ is a sequence of convex, continuous, and equi-lower semicontinuous functions that converge pointwise to a function $f$ on $\mathbb{R}^n$, then $\{f_k\}$ epi-converges to $f$.
\end{thm}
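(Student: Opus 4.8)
The plan is to verify the two defining conditions of epi-convergence for $\{f_k\}$ directly from the hypotheses, with essentially all the work falling on the $\liminf$ inequality, which is where the equi-lower semicontinuity assumption is used.

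\emph{Step 1 (recovery condition (ii)).} First I would dispatch condition (ii): given $x\in\mathbb{R}^n$, take the constant sequence $x_k\equiv x$. Pointwise convergence then gives $\limsup_{k\to\infty} f_k(x_k)=\lim_{k\to\infty} f_k(x)=f(x)\le f(x)$, so (ii) holds with nothing further to argue. This step uses only pointwise convergence; neither convexity, continuity, nor equi-lower semicontinuity enters.

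\emph{Step 2 (condition (i)).} The substance is the $\liminf$ inequality. Fix $x\in\mathbb{R}^n$, an arbitrary sequence $x_k\to x$, and $\varepsilon>0$. By equi-lower semicontinuity at $x$, choose a neighborhood $V$ of $x$ such that $\inf_{y\in V} f_k(y) > f_k(x)-\varepsilon$ for \emph{every} $k$; the crucial point is that $V$ can be taken independent of $k$. Since $x_k\to x$, we have $x_k\in V$ for all large $k$, hence $f_k(x_k)\ge \inf_{y\in V} f_k(y) > f_k(x)-\varepsilon$ eventually. Taking $\liminf_{k\to\infty}$ on both sides and using $f_k(x)\to f(x)$ yields $\liminf_{k\to\infty} f_k(x_k)\ge f(x)-\varepsilon$, and letting $\varepsilon\downarrow 0$ gives $\liminf_{k\to\infty} f_k(x_k)\ge f(x)$, which is exactly (i). Combining Steps 1 and 2 shows $\{f_k\}$ epi-converges to $f$.

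\emph{Where the remaining hypotheses enter, and the main obstacle.} Convexity and continuity of the $f_k$ are not what drive the two inequalities above; their role is to guarantee that the limit object is well-behaved, namely that as a pointwise limit of finite convex functions $f$ is again finite-valued and convex, hence continuous on $\mathbb{R}^n$, so that epi-convergence to $f$ is the meaningful assertion (these hypotheses also underpin the converse directions of the equivalence in \cite{salinetti1977relations}, which we do not need). The only genuinely delicate point is the uniform-in-$k$ choice of the neighborhood $V$ in Step 2 — trapping $f_k(x_k)$ between $\inf_{y\in V} f_k(y)$ and $f_k(x)-\varepsilon$ simultaneously for all $k$ — but this is precisely the content of the equi-lower semicontinuity hypothesis, so once it is unpacked the argument reduces to a short squeeze.
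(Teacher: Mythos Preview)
Your proof is correct. The paper does not supply its own proof of this statement: it is quoted verbatim as \cite[Theorem~2]{salinetti1977relations} and invoked as a black box inside the proof of Theorem~\ref{thm:minimizer_convergence}. So there is no ``paper's proof'' to compare against beyond the external citation.

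That said, your direct verification of the two epi-convergence conditions is clean and sufficient. Your observation that convexity and continuity of the $f_k$ are not actually used in establishing (i) and (ii) is also correct: Step~1 needs only pointwise convergence, and Step~2 needs only pointwise convergence plus equi-lower semicontinuity. In the original Salinetti--Wets paper, the convexity and continuity hypotheses serve the broader equivalence theorem (pointwise $\Leftrightarrow$ epi under these conditions), and in particular they ensure that the pointwise limit $f$ is itself convex, finite, and hence continuous on $\mathbb{R}^n$, so that the epi-limit is a bona fide proper lsc function. For the one-sided implication the present paper needs, your stripped-down argument is exactly what is required.
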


As a first step toward applying our framework to the Gumbel watermarking scheme, we now characterize how the optimal transport cost behaves in the limit $k \to \infty$.

\begin{thm}\label{thm:minimizer_convergence}
As $k \to \infty$, the optimal value of the discrete problem converges to the expected value problem:
\begin{align}
C_{D,k}^* \to C_D^* = \min_{x \in \mathbb{R}^m} p^T x + \mathbb{E}[\max_j(f(j,s) - x_j)]
\end{align}
where $p_j = P_X(j)$ and $f(j,s)$ are random variables with the distribution matching the problem's cost function.
\end{thm}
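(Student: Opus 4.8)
I would abbreviate the discrete dual objective by
$g_k(\alpha)=\sum_{x=1}^m p_x\alpha_x+\frac1k\sum_{s=1}^k\max_{x'}\bigl(z(x',s)-\alpha_{x'}\bigr)$
and its candidate limit by
$g(\alpha)=\sum_{x=1}^m p_x\alpha_x+\mathbb{E}\bigl[\max_i(z_i-\alpha_i)\bigr]$,
where $p_x=P_X(x)$ and $(z(1,s),\dots,z(m,s))$, $s=1,\dots,k$, are i.i.d.\ copies of the random vector $(z_1,\dots,z_m)$ with i.i.d.\ coordinates drawn from $P_F$. The goal is $C_{D,k}^*:=\min_\alpha g_k(\alpha)\to\min_\alpha g(\alpha)=:C_D^*$ almost surely. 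The plan follows the standard variational recipe: first establish that $g_k\to g$ epigraphically (equivalently here, locally uniformly), then promote this to convergence of the minimal values by trapping near-minimizers in a fixed compact set.

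The first half should be soft. Each $g_k$ is convex and real-valued, hence continuous; moreover the subgradients of $\alpha\mapsto\max_{x'}(z(x',s)-\alpha_{x'})$ lie in $-\mathrm{conv}\{e_1,\dots,e_m\}$ while those of the linear part equal $p$, so every $g_k$ (and $g$) is $2$-Lipschitz in $\|\cdot\|_\infty$, deterministically; uniform Lipschitzness gives equi-lower-semicontinuity. For pointwise convergence I fix $\alpha$ and apply the strong law to the i.i.d.\ summands $\max_{x'}(z(x',s)-\alpha_{x'})$; the limit $\mathbb{E}[\max_i(z_i-\alpha_i)]$ is finite because $\mathbb{E}[\max_i|z_i|]\le\sum_i\mathbb{E}|z_i|<\infty$ (finite variance gives finite first moment). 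Running this on a countable dense set of $\alpha$ and taking a union bound yields, a.s., pointwise convergence there, which equi-Lipschitzness upgrades to uniform convergence on every compact set; Theorem~\ref{lemma:epi-convergence} then gives epi-convergence a.s. From here the $\limsup$ bound is immediate: for $\varepsilon>0$ pick $\alpha^\varepsilon$ with $g(\alpha^\varepsilon)\le C_D^*+\varepsilon$, so $\limsup_k C_{D,k}^*\le\limsup_k g_k(\alpha^\varepsilon)=g(\alpha^\varepsilon)\le C_D^*+\varepsilon$.

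The real work is the matching $\liminf$, which needs an a priori bound locating the minimizers, and this requires removing two flat directions of the objective first. (a) Both $g_k$ and $g$ are invariant under $\alpha\mapsto\alpha+c\mathbf{1}$, so I minimize over the hyperplane $\mathcal{H}_0=\{\alpha:\alpha_{x_0}=0\}$ for a fixed $x_0\in\mathrm{supp}(P_X)$. (b) For $x\notin\mathrm{supp}(P_X)$ the objective is non-increasing in $\alpha_x$ (its subgradient in that coordinate is nonpositive), so sending such coordinates to $+\infty$ leaves the infimum unchanged and deletes them; hence I may assume $P_X$ has full support. With full support, a compactness argument on the unit sphere of $\mathcal{H}_0$ produces $\rho=\rho(P_X)>0$ with $\langle p,\alpha\rangle-\min_i\alpha_i\ge\rho\|\alpha\|$ on $\mathcal{H}_0$ --- since $\langle p,\alpha\rangle\ge\min_i\alpha_i$ always, with equality only when $\alpha$ is constant, which on $\mathcal{H}_0$ forces $\alpha=0$. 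Combining this with $\mathbb{E}[\max_i(z_i-\alpha_i)]\ge-\min_i\alpha_i$ and with $\frac1k\sum_s\max_{x'}(z(x',s)-\alpha_{x'})\ge-\min_i\alpha_i+\min_{x'}\bar z_{x'}$, where $\bar z_{x'}=\frac1k\sum_s z(x',s)\to0$ a.s., gives $g(\alpha)\ge\rho\|\alpha\|$ and $g_k(\alpha)\ge\rho\|\alpha\|-o(1)$ on $\mathcal{H}_0$. Since $\min_\alpha g_k\le g_k(0)=\frac1k\sum_s\max_{x'}z(x',s)$ stays bounded a.s.\ by the strong law, any near-minimizers $\alpha^{(k)}\in\mathcal{H}_0$ are a.s.\ eventually confined to a fixed ball $B_R$; on the compact set $B_R\cap\mathcal{H}_0$ we have $g_k\to g$ uniformly, whence $C_{D,k}^*\ge g_k(\alpha^{(k)})-o(1)\ge g(\alpha^{(k)})-\sup_{B_R\cap\mathcal{H}_0}|g_k-g|-o(1)\ge C_D^*-o(1)$. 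This closes the argument; being a.s.\ with $|C_{D,k}^*|$ bounded, it also yields convergence in $L^1$ and in probability.

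\textbf{Main obstacle.} Everything but the minimizer bound is routine bookkeeping. The crux is that the dual objective is \emph{not} coercive on $\mathbb{R}^m$: it is flat along $\mathbf{1}$ and, when $P_X$ is degenerate --- as are the worst-case spike distributions in $\mathcal{P}_\lambda$ that drive the main theorems --- also flat along coordinates outside $\mathrm{supp}(P_X)$. One must therefore excise exactly these flat directions and only then prove a uniform-in-$k$ coercivity estimate on the surviving subspace, with the deterministic $\|\cdot\|_\infty$-Lipschitz bound on $g_k$ supplying the compactness needed to pass to the limit. The only stochastic inputs beyond the scalar strong law are $\bar z_{x'}\to0$ and the boundedness of $\frac1k\sum_s\max_{x'}z(x',s)$, both immediate from integrability of $P_F$.
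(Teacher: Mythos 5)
Your proof is correct, and in fact supplies a step that the paper's own proof skips over.

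Both proofs build the same scaffolding: pointwise a.s.\ convergence of $g_k$ to $g$ via the strong law, deterministic Lipschitzness (the paper verifies the weaker equi-lower-semicontinuity by essentially the same computation), and then the Salinetti--Wets theorem (cited in the paper as Theorem~\ref{lemma:epi-convergence}) to upgrade pointwise convergence to epi-convergence. Up to this point you and the paper are essentially identical, and your version is, if anything, cleaner by noting that the $\|\cdot\|_\infty$-Lipschitz constant is uniformly $2$.

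The paper then concludes with the assertion that epi-convergence implies $\min g_k \to \min g$. That implication is false in general: epi-convergence gives only the one-sided bound $\limsup_k\inf g_k \le \inf g$, and the matching $\liminf$ requires either eventual level-boundedness of $\{g_k\}$ or some other device that confines near-minimizers to a fixed compact set. Here neither $g$ nor $g_k$ is level-bounded --- both are invariant under $\alpha\mapsto\alpha+c\mathbf 1$, and when $P_X$ has non-full support (exactly the spike distributions that drive the rest of the paper) there are further flat directions in the coordinates outside $\mathrm{supp}(P_X)$ --- so even the usual Rockafellar--Wets theorem (level-boundedness plus epi-convergence implies convergence of infima) does not apply directly.

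Your $\liminf$ argument is precisely the missing piece. Modding out the translation invariance by fixing one coordinate in $\mathrm{supp}(P_X)$, sending the other off-support coordinates to $+\infty$ since the objective is nonincreasing there, and then using positive $1$-homogeneity plus compactness of the unit sphere to extract $\rho>0$ with $\langle p,\alpha\rangle - \min_i\alpha_i \ge \rho\|\alpha\|$ yields exactly the uniform-in-$k$ coercivity estimate $g_k(\alpha)\ge\rho\|\alpha\|-o(1)$ needed to trap near-minimizers in a ball, after which uniform convergence on compacts closes the $\liminf$ side. All the ingredients you use are sound: $\mathbb E[\max_i(z_i-\alpha_i)]\ge-\min_i\alpha_i$ from zero mean, $\min_{x'}\bar z_{x'}\to0$ a.s.\ since there are finitely many coordinates, and $g_k(0)$ a.s.\ bounded. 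So your proposal is not merely a variant of the paper's proof; it is a repaired version of it, and the coercivity/minimizer-confinement step you flag as ``the real work'' is indeed the step the paper cannot do without.
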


\begin{proof}[Proof of Theorem \ref{thm:minimizer_convergence}]
    
Let us rewrite the objective function in vector notation:
\begin{align}
f_k(x) = p^T x + \frac{1}{k}\sum_{s=1}^k \max_j[f(j,s) - x_i]
\end{align}
where $\vec{z}_j = [z_{1,j}, \ldots, z_{m,j}]$ and $z_{i,j}:=f(i,s_j)$ denote the sampled scores, i.e., each column $\vec{z}_j$ corresponds to the vector of scores across tokens for a fixed side-information sample $s_j$, and $x = [\alpha_1, \ldots, \alpha_m]$.

We need to establish that the functions $f_k$ are continuous and convex. Note that

\begin{enumerate}
\item The term $p^T x$ is linear and therefore continuous and convex.
\item The function $g_s(x) = \max_j[f(j,s) - x_j]$ is continuous for each $j$ because:
   \begin{enumerate}
   \item The functions $h_j(x) = f(j,s) - x_j$ are continuous for each $j$.
   \item The maximum of a finite number of continuous functions is continuous. It is also convex as the maximum of linear functions.
   \end{enumerate}
\item The sum $\frac{1}{k}\sum_{s=1}^k g_s(x)$ is continuous as a linear combination of continuous functions and convex as a positive linear combination of convex functions.
\end{enumerate}

As $k \to \infty$, by the Law of Large Numbers, for any fixed $x$, the sample average converges to the expected value:
\begin{align}
\frac{1}{k}\sum_{s=1}^k \max_j[f(j,s) - x_j] \to \mathbb{E}[\max_j(f(j,s) - x_j)]
\end{align}

Therefore, 
\begin{align}
f_k(x) \to f(x) = p^T x + \mathbb{E}[\max_i(z_i - x_i)]
\end{align}

This establishes pointwise convergence of $f_k$ to $f$. Next, we need to prove that the family of functions $\{f_k\}$ is equi-lower semicontinuous.

\begin{lem}
The family of functions $\{f_k\}$ defined above is equi-lower semicontinuous under mild assumptions on the boundedness of the data points $\vec{z}_j$.
\end{lem}

\begin{proof}
First, observe that each $f_k$ is continuous (and hence lower semicontinuous). For any $x \in \mathbb{R}^n$ and $\varepsilon > 0$, consider the neighborhood 
\begin{align}
V = \{y : \|y - x\|_\infty < \varepsilon/2\}
\end{align}

For any $y \in V$ and any $j, i$, we have
\begin{align}
z_{i,j} - y_i > z_{i,j} - x_i - \varepsilon/2
\end{align}

Therefore,
\begin{align}
\max_i[z_{i,j} - y_i] \geq \max_i[z_{i,j} - x_i - \varepsilon/2] = \max_i[z_{i,j} - x_i] - \varepsilon/2
\end{align}

This implies
\begin{align}
\frac{1}{k}\sum_{j=1}^k \max_i[z_{i,j} - y_i] \geq \frac{1}{k}\sum_{j=1}^k \max_i[z_{i,j} - x_i] - \varepsilon/2
\end{align}

Combined with the linear term, we get
\begin{align}
f_k(y) &= p^T y + \frac{1}{k}\sum_{j=1}^k \max_i[z_{i,j} - y_i]\\
&\geq p^T x - \|p\|_1 \cdot \varepsilon/2 + \frac{1}{k}\sum_{j=1}^k \max_i[z_{i,j} - x_i] - \varepsilon/2\\
&= f_k(x) - \|p\|_1 \cdot \varepsilon/2 - \varepsilon/2
\end{align}

By choosing $\varepsilon$ small enough, we ensure that $\|p\|_1 \cdot \varepsilon/2 + \varepsilon/2 < \varepsilon$, which gives us
\begin{align}
\inf_{y \in V} f_k(y) > f_k(x) - \varepsilon
\end{align}

This holds for all $k$, establishing the equi-lower semicontinuity of the family $\{f_k\}$.
\end{proof}

Since we have established that the functions $f_k$ are convex, continuous, and equi-lower semicontinuous, and that they converge pointwise to $f$, we can apply Lemma \ref{lemma:epi-convergence} to conclude that $f_k$ epi-converges to $f$. By the fundamental properties of epi-convergence, if $f_k$ epi-converges to $f$, then $\min f_k \to \min f$. Also, every limit point of minimizers of $f_k$ is a minimizer of $f$. This establishes that $C_{D,k}^* \to C_D^*$ as $k \to \infty$.
\end{proof}

\textbf{Optimality Conditions and Characterization.} Now, we analyze the optimality conditions for the dual problem. Consider our objective function:
\begin{align}
f_k(\alpha) = \sum_{i=1}^m P_X(i)\alpha_i + \frac{1}{k}\sum_{s=1}^k \max_j[f(j,s) - \alpha_j]
\end{align}

To find the derivative with respect to $\alpha_i$, we note that the derivative of the first term is simply $P_X(i)$. For the second term, we need to understand how the max function behaves. At points where a unique index $j$ achieves the maximum value of $f(j,s) - \alpha_j$, the derivative is:
\begin{align}
\frac{\partial}{\partial \alpha_i}\max_j[f(j,s) - \alpha_j] = 
\begin{cases}
-1 & \text{if } i = \argmax_j[f(j,s) - \alpha_j] \\
0 & \text{otherwise}
\end{cases}
\end{align}

The negative sign appears because $\alpha_i$ has a negative coefficient in the expression inside the max.For simplicity, we often write the optimality condition using the indicator function $\mathbbm{1}[i \in \argmax_j[f(j,s) - \alpha_j]]$, which equals 1 when $i$ is in the argmax set and 0 otherwise.
\begin{align}
\frac{\partial}{\partial \alpha_i}\max_j[f(j,s) - \alpha_j] = -\mathbbm{1}[i = \argmax_j[f(j,s) - \alpha_j]]
\end{align}

If there are ties (multiple indices achieve the maximum), then the max function is not differentiable. Instead, we use the concept of subdifferential. A valid subgradient can be written as $-\mathbbm{1}[i \in \argmax_j[f(j,s) - \alpha_j]] \cdot w_i$, where $w_i \geq 0$ are weights such that $\sum_{i \in \argmax} w_i = 1$.

The derivative (or subgradient) of the entire objective function with respect to $\alpha_i$ is:
\begin{align}
\frac{\partial f_k}{\partial \alpha_i} = P_X(i) - \frac{1}{k}\sum_{s=1}^k \mathbbm{1}[i \in \argmax_j[f(j,s) - \alpha_j]]
\end{align}

Then, for the optimal dual variables $\alpha^*$, the first-order optimality condition gives:
\begin{align}
\frac{\partial f_k}{\partial \alpha_i}(\alpha^*) = P_X(i) - \frac{1}{k}\sum_{s=1}^k \mathbbm{1}[i \in \argmax_j[f(j,s) - \alpha^*_j]] = 0
\end{align}

We interpret this optimality condition as follows: the probability $P_X(i)$ equals the empirical probability that $i$ is in the argmax set of $f(j,s) - \alpha^*_j$ across all samples $s$. Rearranging, we have:
\begin{align}\label{eq:empirical_average}
\frac{1}{k}\sum_{s=1}^k \mathbbm{1}[i \in \argmax_j[f(j,s) - \alpha^*_j]] = P_X(i)
\end{align}

To fully understand the emergence of the Gumbel-Max connection, we must carefully analyze how the discrete optimality condition converges to its continuous counterpart as $k \to \infty$. The key step is understanding how the empirical average on Equation \ref{eq:empirical_average} becomes the probability statement:
\begin{align}
\Pr[\argmax_j(z_j - \alpha^*_j) = i] = P_X(i)
\end{align}
In other words, it remains to show that 
\begin{equation}\label{eq:empirical_avg_conv}
\frac{1}{k}\sum_{s=1}^k 
  \mathbbm{1}\!\bigl\{\,i \in \arg\max_{j}\!\bigl[f(j,s) - \alpha_j^*\bigr]\bigr\}
\;\longrightarrow\;
\mathbb{E}\!\Bigl[\,
  \mathbbm{1}\!\bigl\{\,i \in \arg\max_{j}\!\bigl[f(j,s) - \alpha_j^*\bigr]\bigr\}
\Bigr].
\end{equation}

In order to show this convergence, let $\alpha^*_k$ denote the optimal dual variables for the problem with $k$ samples. From the epi-convergence of $f_k$ to $f$, we know that $\alpha^*_k \to \alpha^*$ as $k \to \infty$. 

For any fixed $\alpha$, the indicator functions $\mathbbm{1}\{i \in \arg\max_j[f(j,s) - \alpha_j]\}$ are independent and identically distributed across $s$, since $f(j,s)$ are sampled independently. Therefore, by the Strong Law of Large Numbers:

\begin{equation}
\frac{1}{k}\sum_{s=1}^k \mathbbm{1}\{i \in \arg\max_j[f(j,s) - \alpha_j]\} \xrightarrow{a.s.} \mathbb{E}[\mathbbm{1}\{i \in \arg\max_j[f(j,s) - \alpha_j]\}]
\end{equation}

To address the case where $\alpha$ is replaced by $\alpha^*_k$ which depends on the samples, we decompose:

\begin{align}
&\left|\frac{1}{k}\sum_{s=1}^k \mathbbm{1}\{i \in \arg\max_j[f(j,s) - \alpha^*_{k,j}]\} - \mathbb{E}[\mathbbm{1}\{i \in \arg\max_j[f(j,s) - \alpha^*_j]\}]\right| \\
&\leq \left|\frac{1}{k}\sum_{s=1}^k \mathbbm{1}\{i \in \arg\max_j[f(j,s) - \alpha^*_{k,j}]\} - \frac{1}{k}\sum_{s=1}^k \mathbbm{1}\{i \in \arg\max_j[f(j,s) - \alpha^*_j]\}\right| \\
&+ \left|\frac{1}{k}\sum_{s=1}^k \mathbbm{1}\{i \in \arg\max_j[f(j,s) - \alpha^*_j]\} - \mathbb{E}[\mathbbm{1}\{i \in \arg\max_j[f(j,s) - \alpha^*_j]\}]\right|
\end{align}

The second term converges to zero by the Strong Law of Large Numbers. For the first term, we exploit the fact that the argmax function is stable under small perturbations except at ties, which occur with probability zero for continuous distributions of $f$. 

Specifically, let $\Delta_k = \|\alpha^*_k - \alpha^*\|_\infty$. For any realization of $f(j,s)$, the argmax changes only if the perturbation $\Delta_k$ exceeds the minimum gap between the maximum value and the second-largest value. Let $G_s = \min_{j \neq j^*} [f(j^*,s) - \alpha^*_{j^*}] - [f(j,s) - \alpha^*_j]$, where $j^* = \arg\max_j [f(j,s) - \alpha^*_j]$. Then:

\begin{equation}
\mathbbm{1}\{i \in \arg\max_j[f(j,s) - \alpha^*_{k,j}]\} \neq \mathbbm{1}\{i \in \arg\max_j[f(j,s) - \alpha^*_j]\} \implies \Delta_k > G_s
\end{equation}

Since $\alpha^*_k \to \alpha^*$, we have $\Delta_k \to 0$ as $k \to \infty$. The probability $\mathbb{P}(\Delta_k > G_s)$ converges to zero because $G_s > 0$ almost surely for continuous distributions. By dominated convergence, the first term also converges to zero.

Consequently:

\begin{equation}
\frac{1}{k}\sum_{s=1}^k \mathbbm{1}\{i \in \arg\max_j[f(j,s) - \alpha^*_{k,j}]\} \xrightarrow{p} \mathbb{E}[\mathbbm{1}\{i \in \arg\max_j[f(j,s) - \alpha^*_j]\}]
\end{equation}

Combined with our optimality condition in equation \eqref{eq:empirical_average}, we obtain:

\begin{equation}
P_X(i) = \mathbb{E}[\mathbbm{1}\{i \in \arg\max_j[f(j,s) - \alpha^*_j]\}] = \mathbb{P}(\arg\max_j [f(j,s) - \alpha^*_j] = i)
\end{equation}
In other words, the random mapping $ S \mapsto \argmax_j [f(j,s) - \alpha^*_j]$ reproduces the original law $P_X$ exactly. This fact is precisely what Theorem \ref{thm:gumbel_as_ot} asserts: the Gumbel‐Max procedure constitutes the optimal‐transport coupling between $P_X$ and the side‐information mechanism.

\textbf{Connection to Gumbel Watermarking.}  The Gumbel watermarking scheme described above can be directly interpreted within our optimal transport framework by noticing that this same arg-max coupling is exactly what underlies the Gumbel-Max trick: adding Gumbel noise $G_t(j)$ to (negative) logits $-\frac{u_t(y)}{T}$ and taking argmax samples from the softmax.  In our formulation, the cost function $f(j,s)$ corresponds to the Gumbel noise $G_t(j)$, while the dual variables $\alpha_j$ correspond to $-\frac{u_t(j)}{T}$. The optimality condition
\begin{equation}\label{gumbel-equiv}
P_X(i) = \mathbb{P}(\arg\max_j [f(j,s) - \alpha^*_j] = i)
\end{equation}
is precisely the Gumbel-Max trick, which states that sampling from a softmax distribution is equivalent to adding Gumbel noise to logits and taking the argmax.\footnote{To see this precisely, substitute the $j$th logit as $-\alpha^*_j$ on the RHS of \eqref{gumbel-equiv}, and simplify the RHS using the same steps as in Appendix B.1 of \cite{fu2024gumbelsoft}. This shows that \eqref{gumbel-equiv} holds when $-\alpha^*_j=j$th logit. Which means that the Gumbel watermark is a special case of our construction in sec.~\ref{sec:beyond_binary} (which boils down to \eqref{gumbel-equiv}), with $f(x,s)$ specifically chosen has Gumbel$(0,1)$.} 
In practice, under the watermarking process, the Gumbel scheme replaces the uniform variables $r_t(y)$ with pseudo-random values $F_{y_{t-m:t-1},k}(y)$ that depend on the secret key and previously generated tokens. This creates a coupling between the original token distribution and the side information, exactly as prescribed in our optimal transport approach. Thus, when we assume i.i.d. scores, the Gumbel watermarking scheme represents a specific instantiation of our general optimal transport framework, where the coupling is designed to preserve the original distribution in expectation while maximizing detectability through heavy-tailed score distributions.

\subsection{Proof of the Detection Gap, Theorem~\ref{thm:detection-lb-informal}.}\label{apdx:proof_thm_4}

In Section \ref{sec:beyond_binary}, we introduced the \heavywater{} scheme by generalizing the scores beyond the binary case. Since $f(x,s)$ is random, $\Dgap^{[P_F]}(m,k,\lambda)$ given by
\begin{align}\label{d_gap_new}
    \Dgap^{[P_F]}(m,k,\lambda)=\min_{P_X\in\mathcal{P}_{\lambda}}\max_{P_{XS}}\left(\EE_{P_{XS}}\left[f(X,S)\right]-\EE_{P_{X}P_S}\left[f(X,S)\right] \right)
\end{align} 
is also a random variable. Now, we will show that we can go beyond Theorem \ref{thm:gumbel_as_ot}, improving on the watermarking scheme like Gumbel, and prove Theorem \ref{thm:detection-lb-informal} that connects the asymptotic detection gap with the quantiles of the distribution of the score difference $\Delta = f(x,s) - f(x',s')$.

\begin{thm}[Detection Gap, asymptotic randomness]\label{thm:detection_gap_asymptotic}
Let $\lambda\in\left[\frac{1}{2},1\right)$, and consider the score difference random variable $\Delta = f(x,s) - f(x',s')$ for some $(x,s)\neq(x',s')$, where $f(x,s)$ and $f(x',s')$ are sampled i.i.d. from $P_F$. Let the cumulative distribution function of $\Delta$ be $F$, and let $Q = F^{-1}$ be its inverse. Then,
\begin{equation}
\lim_{k\to\infty}\Dgap^{[P_F]}(m,k,\lambda) =\int_{1-\lambda}^1 Q(u)du.
\end{equation}
\end{thm}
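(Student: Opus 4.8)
The plan is to reduce the minimax problem to a one-dimensional question about order statistics of i.i.d.\ copies of $\Delta$, and then pass to the limit $k\to\infty$. First I would reuse Claim~\ref{concave}: for \emph{any} fixed realization of the random score function $f$, the map $P_X\mapsto\Psi(P_X):=\max_{P_{XS}}\bigl(\EE_{P_{XS}}[f]-\EE_{P_XP_S}[f]\bigr)$ is concave in $P_X$ --- the proof of Claim~\ref{concave} uses only linearity of expectation and that $\Psi$ is itself a maximum over couplings, never the binariness of $f$. Hence, conditionally on $f$, the inner minimization over the polytope $\cP_\lambda$ is attained at an extreme point, i.e.\ a ``spike'' distribution putting mass $\lambda$ on some token $i$ and mass $1-\lambda$ on some other token $j$. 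Writing $\delta_i$ for the point mass at token $i$, this yields
\[
\Dgap^{[P_F]}(m,k,\lambda)=\min_{i\neq j}\ \Psi\bigl(\lambda\,\delta_i+(1-\lambda)\,\delta_j\bigr),
\]
a minimum over the finitely many (namely $m(m-1)$) ordered pairs $(i,j)$.

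Next I would obtain a closed form for $\Psi$ on such a spike. Parametrizing a coupling of $\lambda\delta_i+(1-\lambda)\delta_j$ with uniform $P_S$ by $P_{XS}(i,s)=p_s$ and $P_{XS}(j,s)=\frac{1}{k}-p_s$, with $p_s\in\bigl[0,\frac{1}{k}\bigr]$ and $\sum_s p_s=\lambda$, and setting $g_s:=f(i,s)-f(j,s)$, a direct expansion (identical in spirit to \eqref{last2} in Appendix~\ref{app:prop_HD}, but now for real-valued $f$) gives
\[
\Psi\bigl(\lambda\delta_i+(1-\lambda)\delta_j\bigr)=\Bigl(\max_{p}\ \sum_{s=1}^k p_s\,g_s\Bigr)-\frac{\lambda}{k}\sum_{s=1}^k g_s.
\]
The inner maximum is the greedy allocation: put $p_s=\frac{1}{k}$ on the $\lfloor\lambda k\rfloor$ largest of $g_1,\dots,g_k$ (plus a fractional remainder of mass $<\frac{1}{k}$ on the next one), so it equals $\frac{1}{k}\sum_{\ell=1}^{\lfloor\lambda k\rfloor}g_{(\ell)}$ up to a single boundary term, where $g_{(1)}\geq\cdots\geq g_{(k)}$ are the descending order statistics. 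Since all the values $f(\cdot,\cdot)$ are i.i.d.\ from $P_F$, the $g_1,\dots,g_k$ are i.i.d.\ copies of $\Delta$.

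Then I would let $k\to\infty$. By the strong law of large numbers, $\frac{1}{k}\sum_s g_s\to\EE[\Delta]=0$ almost surely (a difference of two i.i.d.\ finite-mean variables has zero mean; the finite-variance hypothesis on $P_F$ secures finiteness). For the upper trimmed sum, the classical a.s.\ convergence of linear combinations of order statistics --- concretely, Glivenko--Cantelli for the empirical quantile function together with the integrability $\EE[\Delta^+]<\infty$ --- gives $\frac{1}{k}\sum_{\ell=1}^{\lfloor\lambda k\rfloor}g_{(\ell)}\to\int_0^\lambda Q(1-u)\,du=\int_{1-\lambda}^1 Q(v)\,dv$ almost surely, because the $\ell$-th descending order statistic with $\ell\approx uk$ converges to the upper $u$-quantile $Q(1-u)$ and the substitution $v=1-u$ matches the two integrals. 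The boundary term is $o(1)$ a.s. Hence each $\Psi\bigl(\lambda\delta_i+(1-\lambda)\delta_j\bigr)$ converges a.s.\ to the \emph{same} constant $\int_{1-\lambda}^1 Q(u)\,du$, and since the outer minimum runs over only finitely many pairs, it too converges a.s.\ to this value, which proves the theorem.

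The hard part will be the last step: rigorously controlling the a.s.\ convergence of the upper trimmed sum $\frac{1}{k}\sum_{\ell\le\lfloor\lambda k\rfloor}g_{(\ell)}$ to $\int_{1-\lambda}^1 Q$, where the most extreme upper order statistics must be tamed and the finite-(variance/mean) hypothesis on $P_F$ is genuinely used; a secondary bookkeeping point is the non-integer-$\lambda k$ boundary term, which is harmless because it is $O\bigl(|g_{(\lfloor\lambda k\rfloor+1)}|/k\bigr)$ and $g_{(\lfloor\lambda k\rfloor+1)}$ has an a.s.\ finite limit $Q(1-\lambda)$. Finally, one should note --- minor but worth stating --- that the concavity argument of Claim~\ref{concave} never invoked the binary-score assumption, so reusing it for random real-valued scores is legitimate.
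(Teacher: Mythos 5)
Your proposal is correct and follows essentially the same route as the paper's proof: reduce (via concavity, as in Claim~\ref{concave}) to a two-point spike distribution, recognize the inner maximization as a greedy allocation onto the $\lfloor\lambda k\rfloor$ largest of the i.i.d.\ differences $g_s=f(i,s)-f(j,s)\sim\Delta$, and pass to the limit using the law of large numbers for the mean term and Glivenko--Cantelli plus quantile convergence for the trimmed upper-order-statistics sum. In fact you are slightly more explicit than the paper on two points it leaves implicit --- that Claim~\ref{concave} never used the binary-score assumption and so licenses the reduction to spikes for real-valued $f$, and that the outer minimum over the finitely many token pairs $(i,j)$ converges because each converges a.s.\ to the same constant --- while the paper, for its part, handles the vanishing boundary term $(\lambda-r/k)\Delta_{r+1}$ and the dominated-convergence step for the Riemann sum in more detail than you do.
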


\begin{proof}

We begin by characterizing the structure of the optimal coupling $P_{XS}$ that attains the maximum in \eqref{OT_neww} below.

\begin{align}\label{OT_neww}
        &\max_{P_{XS}} \mathbb{E}^{[k]}_{P_{XS}}[f(X,S)]-\mathbb{E}^{[k]}_{P_XP_S}[f(X,S)]\nonumber\\
        \text{s.t. } &\sum_{s=1}^k P_{x,s}=p_x,\quad x\in\{1,\dotsc,m\}\nonumber\\
        & \sum_{x=1}^m P_{x,s}=\frac{1}{k},\quad s\in\{1,\dotsc,k\}
\end{align}

\begin{Lemma}
The minimum in \eqref{d_gap_new} is achieved by the $P_X\in\mathcal{P}_\lambda$ satisfying
\begin{align}
    P_X(x)=\begin{cases}
    \lambda, & x=i,\\
    1-\lambda & x=j,\\
    0, & x\neq i,j.
\end{cases}
\end{align}
as shown in \eqref{two-token}. Assume that $\lambda \ge \frac{1}{2}$. 
Then, the optimal $P_{XS}$, solution of Equation \eqref{OT_neww}, is given by
\[
    \begin{bmatrix}
         & s_1 & \cdots & s_{r} & s_{r+1} & s_{r+2}  & \cdots & s_k\\
        \lambda & \frac{1}{k} & \cdots & \frac{1}{k} & \lambda-\frac{r}{k}& 0 & \cdots & 0\\
        1-\lambda & 0 & \cdots & 0 & \frac{r+1}{k}-\lambda& \frac{1}{k} & \cdots & \frac{1}{k}
    \end{bmatrix}
\]
where $r$ is the integer satisfying $\frac{r}{k}\le \lambda < \frac{r+1}{k}$.
\end{Lemma}

The optimal coupling assigns the probability mass of $x=i$ sequentially to the first $r$ side-information bins until its total mass $\lambda$ is exhausted, after which the remaining bins are filled by $x=j$. This is intuitive as $P_X(i)\geq P_X(j)$.

For a given $k$, $\max_{P_{XS}}\mathbb{E}^{[k]}_{P_{XS}}[f(X,S)]$ is given by
\begin{align}
    &\max_{P_{XS}}\mathbb{E}^{[k]}_{P_{XS}}[f(X,S)]\\
    &=\frac{1}{k}\sum_{s=1}^rf(1,s)+\left(\lambda-\frac{r}{k}\right)f(1,r+1)+\left(\frac{r+1}{k}-\lambda\right)f(2,r+1)+\frac{1}{k}\sum_{s=r+2}^kf(2,s)\\
    &=\frac{1}{k}\sum_{s=1}^kf(2,s)+\left(\lambda-\frac{r}{k}\right)\Delta_{r+1}+\frac{1}{k}\sum_{s=1}^r\Delta_{s}
\end{align}
For $\mathbb{E}^{[k]}_{P_X P_S}[f(X,S)]$, we have
\begin{equation}
\mathbb{E}^{[k]}_{P_X P_S}[f(X,S)] = \frac{\lambda}{k}\sum_{s=1}^k f(1,s) + \frac{1-\lambda}{k}\sum_{s=1}^k f(2,s).
\end{equation}
Therefore, the difference is given by
\begin{align}
&\max_{P_{XS}}\mathbb{E}^{[k]}_{P_{XS}}[f(X,S)]-\mathbb{E}_{P_X P_S}^{[k]}[f(X,S)] \\
&= \frac{1}{k}\sum_{s=1}^kf(2,s)+\left(\lambda-\frac{r}{k}\right)\Delta_{r+1}+\frac{1}{k}\sum_{s=1}^r\Delta_{s} - \frac{\lambda}{k}\sum_{s=1}^k f(1,s) - \frac{1-\lambda}{k}\sum_{s=1}^k f(2,s) \\
&= \frac{\lambda}{k}\sum_{s=1}^kf(2,s) + \left(\lambda-\frac{r}{k}\right)\Delta_{r+1}+\frac{1}{k}\sum_{s=1}^r\Delta_{s} - \frac{\lambda}{k}\sum_{s=1}^k f(1,s)
\end{align}

Let us define the following terms that we will analyze precisely:
\begin{align}
\bar{f}_{1,k} = \frac{1}{k}\sum_{s=1}^{k}f(1,s), \quad
\bar{f}_{2,k} = \frac{1}{k}\sum_{s=1}^{k}f(2,s) \quad \text{and} \quad 
I_k = \frac{1}{k}\sum_{s=1}^r\Delta_{s}.
\end{align}

With these definitions, we can rewrite the detection gap as:
\begin{align}\label{eq:detection_gap}
\max_{P_{XS}}\mathbb{E}^{[k]}_{P_{XS}}[f(X,S)]-\mathbb{E}_{P_X P_S}^{[k]}[f(X,S)] = \lambda\bar{f}_{2,k} + \left(\lambda-\frac{r}{k}\right)\Delta_{r+1} + I_k - \lambda\bar{f}_{1,k}
\end{align}

To establish the exact limit, we need to analyze each term with greater precision.

\textbf{Exact characterization of $\lambda(\bar{f}_{2,k} - \bar{f}_{1,k})$:} Both $f(1,s)$ and $f(2,s)$ are i.i.d. sub-exponential random variables with parameters $(\nu^2, b)$. By the strong law of large numbers, we have almost surely:
\begin{align}
\lim_{k\to\infty}\bar{f}_{1,k} = \mathbb{E}[f(1,1)] \quad \text{and} \quad \lim_{k\to\infty}\bar{f}_{2,k} = \mathbb{E}[f(2,1)]
\end{align}

Since $f(x,s)$ are i.i.d. with zero mean, we have $\mathbb{E}[f(1,1)] = \mathbb{E}[f(2,1)] = 0$. Therefore, almost surely:
\begin{align}
\lim_{k\to\infty}\lambda(\bar{f}_{2,k} - \bar{f}_{1,k}) = 0
\end{align}

\textbf{Exact characterization of $\left(\lambda-\frac{r}{k}\right)\Delta_{r+1}$:} By definition of $r = \lfloor\lambda k\rfloor$, we have $0 \leq \lambda - \frac{r}{k} < \frac{1}{k}$. Since $\Delta_{r+1}$ is a sub-exponential random variable with parameters $(4\nu^2, 2b)$, it is almost surely finite. Combining this with the above inequality:
\begin{align}
\lim_{k\to\infty}\left(\lambda-\frac{r}{k}\right)\Delta_{r+1} = 0 \quad \text{almost surely}.
\end{align}

\textbf{Exact characterization of $I_k$:} For this term, we use order statistics. Let $\Delta_{(1)}\leq\Delta_{(2)}\leq\cdots\leq\Delta_{(k)}$ denote the ordered values for the difference of scores $\Delta_s$. Then:
\begin{equation}
I_k = \frac{1}{k}\sum_{j=k-r+1}^{k}\Delta_{(j)}.
\end{equation}
Let $F$ be the CDF of $\Delta_s$ and $F_k$ be the empirical CDF given by \begin{equation}
F_k(x)\;:=\;\frac1k\sum_{s=1}^{k}\mathbf 1\!\{\Delta_s\le x\},
\end{equation} By the Glivenko-Cantelli theorem, we have:
\begin{equation}
\sup_{x\in\mathbb{R}}|F_k(x) - F(x)| \leq \eta_k \quad \text{where } \eta_k \to 0 \text{ almost surely as } k \to \infty
\end{equation}

For each order statistic $\Delta_{(j)}$, the empirical CDF by definition gives $
F_k(\Delta_{(j)}) = \frac{j}{k}$. The bound from Glivenko-Cantelli gives:
\begin{equation}
\frac{j}{k} - \eta_k \leq F(\Delta_{(j)}) \leq \frac{j}{k} + \eta_k.
\end{equation}

Let $Q = F^{-1}$ be the quantile function. By definition of the quantile function, if $p \leq F(x)$ then $Q(p) \leq x$, and if $F(x) \leq q$ then $x \leq Q(q)$. Applying these relationships:
\begin{equation}
Q\left(\frac{j}{k} - \eta_k\right) \leq \Delta_{(j)} \leq Q\left(\frac{j}{k} + \eta_k\right)
\end{equation}

Now we perform a change of index. We want to rewrite the sum in $I_k$ which uses index $j$ ranging from $k-r+1$ to $k$. Let's set $j = k-i+1$, so $i$ ranges from $1$ to $r$. This gives:
\begin{equation}
\frac{j}{k} = \frac{k-i+1}{k} = 1 - \frac{i-1}{k}
\end{equation}

Substituting this into our bounds on $\Delta_{(j)}$:
\begin{equation}
Q\left(1-\frac{i-1}{k} - \eta_k\right) \leq \Delta_{(k-i+1)} \leq Q\left(1-\frac{i-1}{k} + \eta_k\right)
\end{equation}

Summing over $i$ from $1$ to $r$ and dividing by $k$:
\begin{align}
\frac{1}{k}\sum_{i=1}^{r}Q\left(1-\frac{i-1}{k} - \eta_k\right) &\leq \frac{1}{k}\sum_{i=1}^{r}\Delta_{(k-i+1)}\\
&= \frac{1}{k}\sum_{j=k-r+1}^{k}\Delta_{(j)}\\
&= I_k\\
&\leq \frac{1}{k}\sum_{i=1}^{r}Q\left(1-\frac{i-1}{k} + \eta_k\right)
\end{align}

As $k \to \infty$, we know that $\eta_k \to 0$ almost surely by the Glivenko-Cantelli theorem. The points $u_i := 1-\frac{i-1}{k}$ for $i = 1,\ldots,r$ where $r = \lfloor \lambda k \rfloor$, form a partition of the interval $[1-\lambda, 1]$ as follows:
\begin{align}
u_1 = 1, \
u_2 = 1-\frac{1}{k}, \
u_3 = 1-\frac{2}{k} ,\
\dots, \ 
u_r = 1-\frac{r-1}{k} \approx 1-\lambda.
\end{align}

As $k \to \infty$, the number of points $r = \lfloor \lambda k \rfloor$ also increases, and the distance between adjacent points $\frac{1}{k} \to 0$. Therefore, these points $\{u_i\}_{i=1}^r$ form an increasingly fine partition of the interval $[1-\lambda,1]$. For any fixed $u \in [1-\lambda,1]$, as $k \to \infty$, there exists a sequence of indices $i_k$ such that $u_{i_k} \to u$. Specifically, we can take $i_k = \lceil k(1-u) + 1 \rceil$, which ensures $u_{i_k} \to u$ as $k \to \infty$.
Our bounds for $I_k$ can be written as:
\begin{align}
\frac{1}{k}\sum_{i=1}^{r}Q(u_i-\eta_k) \leq I_k \leq \frac{1}{k}\sum_{i=1}^{r}Q(u_i+\eta_k)
\end{align}

Since $Q$ is non-decreasing, it is bounded on the compact interval $[1-\lambda-\beta, 1+\beta]$ for some $\beta > 0$. Let
\begin{equation}
M := \sup_{t \in [1-\lambda-\beta, 1+\beta]} |Q(t)| < \infty,
\end{equation}
then $|Q(u_i-\eta_k)| \leq M$ and $|Q(u_i+\eta_k)| \leq M$ for all $i$ and sufficiently large $k$.

The lower sum $\frac{1}{k}\sum_{i=1}^{r}Q(u_i-\eta_k)$ is a perturbed lower Riemann sum for the integral $\int_{1-\lambda}^{1}Q(u)\,du$, and similarly the upper sum is a perturbed upper Riemann sum. As $k \to \infty$, two things happen simultaneously, namely, (i) the mesh width $\frac{1}{k} \to 0$, so the Riemann sums converge to the integral and (ii) the perturbation $\eta_k \to 0$, so $Q(u_i \pm \eta_k) \to Q(u_i)$. By the Dominated Convergence Theorem, we have
\begin{align}
\lim_{k\to\infty} \frac{1}{k}\sum_{i=1}^{r}Q(u_i-\eta_k) &= \int_{1-\lambda}^{1}Q(u)\,du \\
\lim_{k\to\infty} \frac{1}{k}\sum_{i=1}^{r}Q(u_i+\eta_k) &= \int_{1-\lambda}^{1}Q(u)\,du
\end{align}

Since $I_k$ is bounded between these two quantities that converge to the same limit, we have
\begin{equation}
\lim_{k\to\infty} I_k = \int_{1-\lambda}^{1}Q(u)du \quad \text{almost surely}.
\end{equation}

\textbf{Combining all terms:} From our asymptotic analysis of each term, we have:
\begin{align}
\lim_{k\to\infty}\max_{P_{XS}}\mathbb{E}^{[k]}_{P_{XS}}[f(X,S)]-\mathbb{E}_{P_X P_S}^{[k]}[f(X,S)] = \lim_{k\to\infty} I_k = \int_{1-\lambda}^{1}Q(u)du \quad \text{almost surely}.
\end{align}

Therefore:
\begin{align}
\lim_{k\to\infty}D_{\text{gap}}^{[P_F]}(m,k,\lambda) = \int_{1-\lambda}^{1}Q(u)du.
\end{align}

\end{proof}

Theorem \ref{thm:detection_gap_asymptotic} implies that distributions with heavier tails imply larger values of the integral $\int_{1-\lambda}^{1}Q(u)du$, which in turn imply higher detection gaps. Indeed, fix $\lambda$. We say that a distribution $F_2$ is \emph{(right-)heavier-tailed} than $F_1$ when
\(
1-F_2(x)\le 1-F_1(x)\text{ for all large }x,
\)
equivalently $Q_{2}(u)\ge Q_{1}(u)$ for every $u$ in a neighbourhood of
$1$.  In particular, for all $u \in [1-\lambda,1]$:
\[
Q_{2}(u) \;\ge\; Q_{1}(u)
\]
which implies:
\[
\int_{1-\lambda}^{1}Q_{2}(u)du \;\ge\; \int_{1-\lambda}^{1}Q_{1}(u)du
\]

Therefore, keeping the same confidence levels:
\[
\text{heavier tail }\Longrightarrow
\text{larger }\int_{1-\lambda}^{1}Q(u)du
\Longrightarrow
\text{larger guaranteed detection gap}.
\]

The asymptotic detection gap, given by $\int_{1-\lambda}^{1}Q(u)du$, represents the average value of the quantile function over the upper $\lambda$ fraction of the distribution. This integral captures how much signal can be extracted from the tail of the score differences. Distributions whose upper tail places more mass far from zero (resulting in larger values of the integral $\int_{1-\lambda}^{1}Q(u)du$ directly increase the detection capability of the watermark. This explains why the choice of score distribution fundamentally impacts watermark detectability. Consequently, among distributions with the same mean and variance, the
heavier-tailed ones yield strictly higher guaranteed detection rates. Therefore, if we constrain ourselves to the ensemble of sub-exponential probability distributions, by choosing something with a heavier tail than Gumbel, e.g., Log-Normal, we can achieve a better detection scheme, as we show in our experiments, cf. Figure \ref{fig:fig1}.

\begin{remark}
By using Bernstein's inequality for sub-exponential variables and Dvoretzky–Kiefer–Wolfowitz inequality instead of Glivenko-Cantelli, it is possible to show a non-asymptotic version of the theorem above:

\begin{thm}[Detection gap, formal]\label{thm:detection-lb}
Let \(0<\lambda<1\) be fixed, write \(r=\lfloor\lambda k\rfloor\) and define  
\[
I_k=\frac1k\sum_{s=1}^{r}\Delta_s,\qquad  
\bar f_{1,k}=\frac{1}{k}\sum_{s=1}^{k}f(1,s),\quad  
\bar f_{2,k}=\frac{1}{k}\sum_{s=1}^{k}f(2,s),
\]
where $\Delta_s=f(1,s)-f(2,s)$. For any confidence levels \(\delta,\delta',\delta^{\ddagger}\in(0,1)\) set
\[
\varepsilon_k(\delta)=\sqrt{\frac{2\nu^{2}\log(1/\delta)}{k}}
                      +\frac{b\log(1/\delta)}{k},\qquad
t_{\!*}(\delta')=\max\!\Bigl\{2\nu\sqrt{\log\tfrac1{\delta'}},
                             \;2b\log\tfrac1{\delta'}\Bigr\},
\]
\[
\eta_k(\delta^{\ddagger})=\sqrt{\frac{\log(2/\delta^{\ddagger})}{2k}}.
\]
Then, with probability at least \(1-\delta-\delta'-\delta^{\ddagger}\),
\begin{equation}\label{eq:detection-lb}
\max_{P_{XS}}\;
\mathbb{E}^{[k]}_{P_{XS}}\!\bigl[f(X,S)\bigr]
-\mathbb{E}^{[k]}_{P_XP_S}\!\bigl[f(X,S)\bigr]
\;\ge
\frac{1}{k}\sum_{i=1}^{r}
Q\!\Bigl(1-\frac{i-1}{k}-\eta_k(\delta^{\ddagger})\Bigr).
-\Bigl[2\varepsilon_k(\delta)+\tfrac{t_{\!*}(\delta')}{k}\Bigr],
\end{equation}
where $Q=F^{-1}$ is the quantile function of $\Delta_s$, and $\mathbb{E}^{[k]}$ denotes an expectation where the side information alphabet is of size $k$.
\end{thm}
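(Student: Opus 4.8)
\textbf{Proof proposal for Theorem~\ref{thm:detection-lb}.} The plan is to follow the proof of Theorem~\ref{thm:detection_gap_asymptotic} verbatim at the level of the decomposition, and replace each almost-sure limit there by its finite-sample counterpart: the strong law of large numbers by Bernstein's inequality for sub-exponential averages, and the Glivenko--Cantelli theorem by the Dvoretzky--Kiefer--Wolfowitz (DKW) inequality.

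First I would fix the worst-case source. The map $P_X\mapsto\max_{P_{XS}}\bigl(\mathbb{E}_{P_{XS}}[f(X,S)]-\mathbb{E}_{P_XP_S}[f(X,S)]\bigr)$ is concave in $P_X$ (the concavity step in the proof of Claim~\ref{concave} never uses that $f$ is binary), so its minimum over the polytope $\mathcal{P}_\lambda$ is attained at an extreme point, i.e.\ at a source placing mass $\lambda$ and $1-\lambda$ on two tokens; take these to be tokens $1$ and $2$ (recovering a bound on $\Dgap^{[P_F]}$ itself then requires a union bound over the $\binom{m}{2}$ token pairs, which only rescales the confidence parameters and is suppressed in the statement). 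For this source and uniform $P_S$ on $[1:k]$, equation~\eqref{eq:detection_gap} in the proof of Theorem~\ref{thm:detection_gap_asymptotic} is the exact identity
\[
\max_{P_{XS}}\mathbb{E}^{[k]}_{P_{XS}}[f(X,S)]-\mathbb{E}^{[k]}_{P_XP_S}[f(X,S)]\;=\;I_k\;+\;\lambda\bigl(\bar f_{2,k}-\bar f_{1,k}\bigr)\;+\;\Bigl(\lambda-\tfrac{r}{k}\Bigr)\Delta_{r+1},
\]
where $\Delta_{r+1}$ is the $(r+1)$-st largest of $\{\Delta_s=f(1,s)-f(2,s)\}_{s=1}^{k}$; writing the order statistics in increasing order $\Delta_{(1)}\le\cdots\le\Delta_{(k)}$ one has $I_k=\tfrac1k\sum_{j=k-r+1}^{k}\Delta_{(j)}$ and $\Delta_{r+1}=\Delta_{(k-r)}$. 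It remains to bound the three summands.

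Next I would bound the two ``nuisance'' terms by concentration. The difference $\bar f_{1,k}-\bar f_{2,k}=\tfrac1k\sum_{s=1}^{k}\Delta_s$ is an average of i.i.d.\ zero-mean sub-exponential variables with parameters $(4\nu^{2},2b)$, so Bernstein's inequality gives $\lvert\bar f_{1,k}-\bar f_{2,k}\rvert\le 2\varepsilon_k(\delta)$ with probability at least $1-\delta$, and since $\lambda\le 1$ this controls $\lvert\lambda(\bar f_{2,k}-\bar f_{1,k})\rvert$. For the fractional term, $0\le\lambda-r/k<1/k$, and $\Delta_{(k-r)}$ is a fixed-index order statistic of sub-exponential variables, so inverting the sub-exponential tail gives $\lvert\Delta_{(k-r)}\rvert\le t_{*}(\delta')$ with probability at least $1-\delta'$, hence $\lvert(\lambda-r/k)\Delta_{(k-r)}\rvert\le t_{*}(\delta')/k$. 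For $I_k$, the DKW inequality gives $\sup_x\lvert F_k(x)-F(x)\rvert\le\eta_k(\delta^{\ddagger})$ with probability at least $1-\delta^{\ddagger}$; on that event $F(\Delta_{(j)})\ge F_k(\Delta_{(j)})-\eta_k(\delta^{\ddagger})=j/k-\eta_k(\delta^{\ddagger})$, and since $Q=F^{-1}$ is the non-decreasing generalized inverse, a point at which $F\ge p$ lies at or to the right of $Q(p)$, so $\Delta_{(j)}\ge Q\bigl(j/k-\eta_k(\delta^{\ddagger})\bigr)$; summing over $j=k-r+1,\dots,k$ and substituting $j=k-i+1$ yields $I_k\ge\tfrac1k\sum_{i=1}^{r}Q\bigl(1-\tfrac{i-1}{k}-\eta_k(\delta^{\ddagger})\bigr)$. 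A union bound over the three failure events (total probability $\le\delta+\delta'+\delta^{\ddagger}$) combines the three estimates into \eqref{eq:detection-lb}.

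The hard part is the $I_k$ step: one must keep the inequality directions consistent when pushing the uniform CDF deviation through the generalized inverse $Q$ — especially where $F$ is flat or has jumps — and must check that the perturbed arguments $1-\tfrac{i-1}{k}-\eta_k(\delta^{\ddagger})$ stay inside the domain on which $Q$ is finite, which forces the mild requirement $1-\lambda-\eta_k(\delta^{\ddagger})>0$ (hence $k$ large enough). A secondary subtlety is that the sub-exponential hypothesis with parameters $(\nu^{2},b)$ must genuinely hold for the score law $P_F$ — it does for, e.g., the Gamma family; for a strictly heavier-tailed $P_F$ one would replace Bernstein's inequality by a commensurately weaker concentration bound, changing the $\varepsilon_k,t_{*}$ terms but not the $Q$-integral term.
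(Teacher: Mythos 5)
The paper itself never supplies a proof of Theorem~\ref{thm:detection-lb}: it appears only in a remark following Theorem~\ref{thm:detection_gap_asymptotic}, which says that ``by using Bernstein's inequality for sub-exponential variables and Dvoretzky--Kiefer--Wolfowitz inequality instead of Glivenko--Cantelli, it is possible to show a non-asymptotic version.'' Your proposal executes exactly this plan: the same exact decomposition $\max_{P_{XS}}\mathbb{E}^{[k]}_{P_{XS}}[f]-\mathbb{E}^{[k]}_{P_XP_S}[f]=I_k+\lambda(\bar f_{2,k}-\bar f_{1,k})+(\lambda-r/k)\Delta_{r+1}$, Bernstein for the $\bar f$ term, DKW pushed through the generalized inverse $Q$ for $I_k$, and a union bound. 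The DKW step is correct and carefully stated, including the inequality direction $F(\Delta_{(j)})\ge j/k-\eta_k \Rightarrow \Delta_{(j)}\ge Q(j/k-\eta_k)$ and the domain caveat $1-\lambda-\eta_k>0$. Your observation that the concavity argument in Claim~\ref{concave} does not use binarity of $f$, so the extreme-point reduction to a two-spike source carries over, is a useful justification the paper elides.

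The one step I would push back on is the control of $(\lambda-r/k)\Delta_{r+1}$. You (following the paper's own asymptotic argument, which asserts ``$\Delta_{r+1}$ is a sub-exponential random variable with parameters $(4\nu^2,2b)$'') bound the order statistic $\Delta_{(k-r)}$ by ``inverting the sub-exponential tail'' of a \emph{single} $\Delta_s$. An order statistic does not inherit the marginal tail bound of one draw; as stated this does not give $\mathbb{P}(|\Delta_{(k-r)}|>t_*(\delta'))\le\delta'$. The fix is easy but should be made explicit. Either apply a crude union bound, $\mathbb{P}(|\Delta_{(k-r)}|>t)\le \mathbb{P}(\max_s|\Delta_s|>t)\le k\,\mathbb{P}(|\Delta_1|>t)$, which replaces $\log(1/\delta')$ by $\log(k/\delta')$ in $t_*$; or, better, reuse the DKW event: on $\{\sup_x|F_k(x)-F(x)|\le\eta_k\}$ one has directly $Q(1-\tfrac{r}{k}-\eta_k)\le\Delta_{(k-r)}\le Q(1-\tfrac{r}{k}+\eta_k)$, so the $\Delta_{r+1}$ term is controlled by the same $\delta^{\ddagger}$ event as $I_k$, at no extra confidence cost, and the $t_*$/Bernstein-inversion term is not needed at all. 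Either repair preserves the qualitative conclusion and makes the argument airtight; the remaining Bernstein/DKW steps and the union bound are sound.
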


\end{remark}

\section{Additional Information and Implementation Details}\label{appendix:our_method}
\def\theequation{C.\arabic{equation}}
\def\thetable{C.\arabic{table}}
\def\thefigure{C.\arabic{figure}}
\def\thelem{C.\arabic{lem}}
\def\thedefn{C.\arabic{defn}}
\def\theprop{C.\arabic{prop}}

\subsection{Low-Entropy Distributions in LLMs}\label{apdx:low_ent}
To motivate the low-entropy regime, we compute summary statistics of token distributions of popular open-weight LLMs. We consider the densities of three statistics: infinity norm (connects directly to min-entropy\footnote{An infinity norm of $\lambda$, i.e. $\max_s P(x) < \lambda$, translates directly to min-entropy constraint of $-\log \lambda$.}), entropy, and L-2 norm, all calculated on the next token prediction along a collection of responses.

We observe that 90\% LLM token distributions fall into the low-entropy regime we consider with infinity norm greater than $1/2$, i.e. $\max_x P(x)\geq \frac{1}{2}$, across the three open LLMs (Llama2-7B\cite{touvron2023llama}, Llama3-8B\cite{touvron2024llama3}, Mistral-7B) and two on popular prompt-generation datasets (Q\&A tasks from Finance-QA\cite{maia201818} and coding tasks from LCC\cite{chen2021evaluating}). We show the histogram and CDF plots in Fig. \ref{fig:histogram_stats_qa} and \ref{fig:histogram_stats_coding}.

\subsection{Theoretical effect of different tails of distributions}\label{apdx:tails}
To further improve the detection performance, we go beyond binary score functions to explore the flexibility in the design space that continuous score distributions offer. Motivated by the Gumbel watermark \cite{aaronson2023watermark}, we observe that we can significantly improve detection by using continuous score distributions, particularly those with heavy tails. 
Recall that our minimax formulation considers the low-entropy regime ($\lambda \in [1/2, 1]$), where the worst-case token distribution $P_X$ has only two non-zero elements with values $\{\lambda, 1-\lambda\}$. Working with this distribution, consider score matrices where each entry $f(x,s)$ is sampled independently from a distribution $P_F$ with zero mean and unit variance. We additionally assume that $f$ (and hence every $\Delta_s=f(1,s)-f(2,s)$) is \emph{sub-exponential} with parameters $(\nu^{2},b)$, i.e., $\mathbb{E}[e^{\lambda f}]\le \exp\!\bigl(\tfrac{\nu^{2}\lambda^{2}}2\bigr)$ for all $|\lambda|\le 1/b$. Many distributions, such as Gamma, Gaussian, and Lognormal satisfy this property. 
This formulation leads to Theorem \ref{thm:detection-lb-informal}, which formally characterizes the achievable maximum detection gap for any $P_F$ for large $k$, in terms of quantile tail integrals of various candidate distributions as its score distribution $P_F$. We visualize the result in Fig. \ref{fig:quantile_tail_plot}.

In Fig. \ref{fig:quantile_tail_plot}, we present the quantile tail integrals of four different distributions: Lognormal, Gamma, Gaussian, and Gumbel. 
A higher value on the y-axis (quantile tail integral) indicates a greater detection gap under the low-entropy regime, which translates to a greater probability of detection under adversarial token distributions.
In theory and as seen from Figure \ref{fig:quantile_tail_plot}, drawing score functions i.i.d from either the Lognormal or Gamma distribution outperforms that from Gumbel. Recall that the significance of adopting the Gumbel score function is that we have established its equivalence with the Gumbel watermark by \cite{aaronson2023watermark} in Theorem \ref{thm:gumbel_as_ot}. Although the Gamma distribution maximizes the detection gap in the worst-case regime, we observe that choosing $P_F$ to be lognormal achieves the highest detection accuracy in practice, which is what we eventually adopt for \heavywater{}.

\begin{figure}[!ht]
    \centering
    \includegraphics[width=0.65\linewidth]{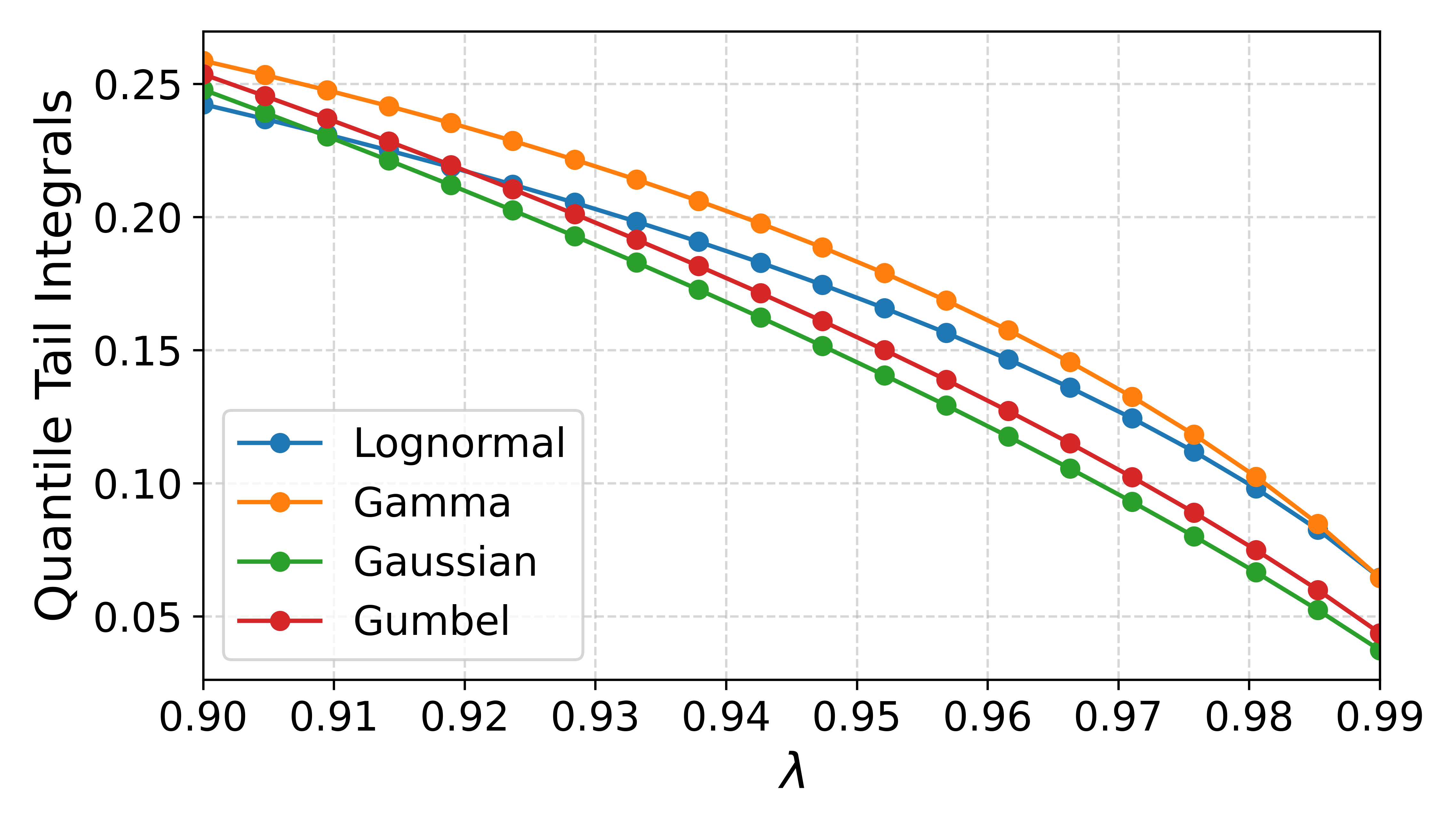}
    \caption{Tail integrals of different \emph{score difference distributions}: Higher the tail integral, better the detection.}
    \label{fig:quantile_tail_plot}
\end{figure}

\subsection{Q-ary Code}\label{apdx:qary}

We provide a discussion of the direct extension of \simplexwater{} to go beyond binary-valued scores. Recall that \simplexwater{} uses the binary Simplex Code as its score function. For a given prime field-size $q$, a Q-ary \simplexwater{} adopts the corresponding Q-ary Simplex Code \cite{roth2006introduction}, which we define next. Besides the score function, the algorithm for Q-ary \simplexwater{} is identical to the binary case, which we have provided in the main text.

Given an alphabet of size $m$ and field size $q>2$, the size of a Q-ary codeword is $q$ to the power of the ceiling of log $m$ with base $q$, i.e. $n = q^{\lceil \log_q m \rceil}$. For any $x,s \in [0:n-1]$, let \(\mathsf{qary}(x)\), \(\mathsf{qary}(s)\) denote their Q-ary representations respectively using \(n\) bits. 
A Q-ary simplex code $\fsim:[0:n-1]\times[1:n-1]\to[0:q-1$ is characterized by
\begin{align}
    \fsim(x,s) \triangleq \mathrm{dot}(\mathsf{qary}(x), \mathsf{qary}(s)),
\end{align}
where $\mathrm{dot}(\mathsf{qary}(x), \mathsf{qary}(s)) \triangleq \sum_{i=1}^{n} \mathsf{qary}(x)_i \cdot \mathsf{qary}(s)_i$ and \(\mathsf{qary}(v)_i\) denotes the \(i\)th bit in the Q-ary representation of \(v\).

There are two main limitations of Q-ary \simplexwater{}, which motivated us to explore a continuous score function directly and ultimately led to \heavywater{}. The first limitation is that we do not have optimality guarantees for the Q-ary code. Recall that to maximize watermark detection, the optimal code maximizes the $L_1$ distance. Simplex Code achieves the Plotkin bound, which maximizes the pair-wise Hamming distance between codewords. In the binary case, maximizing the Hamming distance and $L_1$ distance are equivalent, where $\mathbf{1}_{i,j} = |i-j|$; in the Q-ary case, this equivalence doesn't hold. Hence, we do not have an optimality guarantee in detection for Q-ary \simplexwater{}.
The second limitation is that the size of  Q-ary codewords is potentially very large, leading to memory issues in actual implementation on a GPU. 
Recall that in the binary case, we have \( n = m - 1 \). In the Q-ary case, however, the use of the ceiling function when converting \( m \) to base-\( q \) artificially inflates \( n \), often far beyond the actual vocabulary size.  
For example, with \( q = 7 \) and a vocabulary size of \( m = 100{,}000 \), we compute
\(\lceil \log_7(100{,}000) \rceil \approx \lceil 5.92 \rceil = 6\),
which corresponds to \(n = 7^6 = 117{,}649 \) codewords—more than 18\% larger than the original vocabulary.
Furthermore, the required alphabet size to implement a Q-arry code grows with $q$.

\subsection{Implementation Details}\label{apdx:more_imp}

In this section we provide additional implementation details for \simplexwater{} and \heavywater{}. Our code implementation employs the code from the two benchmark papers, namely WaterBench\footnote{\texttt{https://github.com/THU-KEG/WaterBench}} \cite{tu2023waterbench} and MarkMyWords\footnote{\texttt{https://github.com/wagner-group/MarkMyWords}} \cite{piet2023mark}. The code implementation of our work can be found in \url{https://github.com/DorTsur/HeavyWater_SimplexWater}

\paragraph{Score Matrix Instantiation} We sample the score matrix once during the initialization stage of \heavywater{} generation and it has shape $(m, k)$. Each row maps a vocabulary token to a cost vector over the side information space $\cS$. This matrix is stored as a \texttt{torch.Tensor} and reused across watermarking calls. During generation, the score matrix is used as the cost matrix for Sinkhorn-based optimal transport computations.

\paragraph{Normalization of $f$} As described before, for each element in the vocabulary, $k$ samples for the score $f$ are drawn from a heavy-tailed distribution (log-normal in the experiment). To ensure numerical stability and suitability for optimal transport computations, each row of the score matrix (size $\text{vocab\_size}*k$) is normalized to have zero mean and unit variance. 

\paragraph{Top-$p$ Filtering} To reduce the computational cost of watermarking, top-$p$ filtering is applied prior to watermarking. Identical to the common definition of top-$p$, we take the minimal set of tokens whose cumulative softmax probability exceeds a threshold (e.g., $p = 0.999$). The watermarking algorithm is then restricted to this filtered subset. 
We emphasize that, in the considered experiments, top-$p$ filtering was applied to all considered watermarks to maintain consistency in the experimental setting.

\paragraph{Detection Algorithm} We outline in Algorithm \ref{fig:detection_algo} a standard watermark detection algorithm that employs a threshold-test with a score matrix $\rF$. Given the sampled token $x$ and side information $s$, the corresponding score is $\rF_{x,s}=f(x,s)$, which is obtained in a similar fashion to its construction in generation: If \simplexwater{} is used, then it is obtained from the Simplex code, and if \heavywater{} it is randomly sampled using the shared secret key as the initial seed. This maintains the generation of the same cost matrix $\rF$ on both ends of generation and detection.
A watermark is detected if the sum of scores exceeds a certain predetermined threshold.

\begin{algorithm}[H]
\caption{\textbf{Detection} using a threshold-test with a score matrix}
\begin{algorithmic}[1]
  \State \textbf{Input:} Token sequence $x^n$, side information $s$, $\mathsf{seed}$, score matrix $\rF\in \mathbb{R}^{m \times k}$
  \State \textbf{Outputs:} $p$-value based detection outcome
  \For{$t = 1$ to $T$}
    \State Compute score $\phi_t := \rF_{x_t,s}$
  \EndFor
  \State $Z \gets \sum_{t=1}^T \phi_t$
  \State \textbf{if} $Z > \tau$ \textbf{then return} ``Watermark Detected''
  \State \textbf{else return} ``No Watermark''
\end{algorithmic}
\label{fig:detection_algo}
\end{algorithm}

\paragraph{Fresh Randomness Generation} Fresh randomness is crucial to ensure side information is sampled independently from previous tokens to avoid seed collision. Our implementation supports several seeding strategies. In majority of our experiment, we use the 'fresh' strategy, which generates a unique seed for each token by incrementing a counter and combining it with the shared secret key. This allows both ends to share the same seed that dynamically changes, but is independent of previously generated tokens. Our implementation of \heavywater{} and \simplexwater{} also allows for various forms of temporal or token-based encoding strategies, such as sliding-window hashing.

\subsection{Information on Optimal Transport and Sinkhorn's Algorithm}\label{apdx:ot}
In this section, we provide preliminary information on the considered OT problem and its solution through Sinkhorn's algorithm.
Let $p\in\Delta_m$ and $q\in\Delta_k$ be two probability vectors.
For a given cost matrix $\rC\in\RR^{m\times k}$, the OT between $p$ and $q$ is given by
$$
\mathsf{OT}(p,q)=\min_{P\in\Pi_{p,q}}\sum_{i=1}^m\sum_{j=1}^k \sC_{i,j}P_{i,j},
$$
where $\Pi_{p,q}$ is the set of joint distributions with marginals $p$ and $q$, which we call \textit{couplings}.
Efficiently solving the optimization $\mathsf{OT}(p,q)$ is generally considered challenging \cite{peyre2019computational}. To that end, a common approach to obtain a solution efficiently is through entropic regularization. An entropic OT (EOT) with parameter $\epsilon$ is given by
$$
\mathsf{OT}_\epsilon(p,q)=\min_{P\in\Pi_{p,q}}\left(\sum_{i=1}^m\sum_{j=1}^k \sC_{i,j}P_{i,j}-\epsilon H(P)\right),
$$
where $H(P)\triangleq-\sum_{i,j}P_{i,j}\log(P_{i,j})$ and $\rC$ is the OT cost.
The EOT is an $\epsilon$-strongly convex problem, which implies its fast convergence to the \textit{unique} optimal solution.
However, the EOT provides an approximate solution which converges to the unregularized solution with rate $O(\epsilon\log(1/\epsilon))$.

One of the main reasons EOT has gained its popularity is due to Sinkhorn's algorithm \cite{sinkhorn1967diagonal}, which is a matrix scaling algorithm that has found its application to solve the dual formulation of the EOT problem \cite{cuturi2013sinkhorn}.
Sinkhorn's algorithm looks for a pair of vectors $u,v$ that obtain the equality 
$$
P^*=\mathsf{diag}(u)K\mathsf{diag}(v),
$$
where $P^*$ is the EOT solution and $K=\exp(-\rC/\epsilon)$ is called the \textit{Gibbs Kernel}.
Consequently, Sinkhorn's algorithm follows from a simple iterative procedure of alternately updating $u$ and $v$. The steps of Sinkhorn's algorithm are given in Algorithm \ref{alg:sinkhorn}.
Having solved Sinkhorn's algorithm, we obtain the optimal EOT coupling.
When using Sinkhorn's algorithm, the stopping criteria is often regarding the marginalization of the current coupling against the corresponding marginal, i.e., we check wether $\bigl\|\,u\odot(Kv)\;-\;p\bigr\|_1\leq\delta$ where  $\triangleq \sum_{i=1}^n \bigl|u_i\,(Kv)_i \;-\;p_i\bigr|$ and $\delta>0$ is some threshold. 
In our implementation, we solve Sinkhorn's algorithm using the Python optimal transport package \cite{flamary2021pot}.

\begin{algorithm}[ht]
\caption{Sinkhorn’s Algorithm for Entropic OT}
\label{alg:sinkhorn}
\begin{algorithmic}[1]
  \State \textbf{Input:}
    Marginals $P_X\in\Delta_m$, $P_S\in\Delta_k$, 
    cost matrix $C\in\mathbb{R}^{m\times k}$   
    regularization parameter $\varepsilon>0$,  
    Threshold $\delta>0$.
  \State \textbf{Output:} Optimal coupling
    $P\in\mathbb{R}_{\ge0}^{n\times m}$ 
  \State Calculate kernel $K \gets \exp\bigl(-C / \varepsilon\bigr)$
  \State $u \gets \mathbf{1}_m,\quad v \gets \mathbf{1}_k$
  \While{$\bigl\|u\odot(K\,v) - P_X\bigr\|_1 > \delta$}
    \State $u \gets a /(K\,v)$    \Comment{element-wise division}
    \State $v \gets b / (K^{\intercal}u)$
  \EndWhile
  \State $P \gets \mathrm{diag}(u)\;K\;\mathrm{diag}(v)$
  \State \Return $P$
\end{algorithmic}
\end{algorithm}

\section{Additional Numerical Results and Ablation Study}\label{appendix:results}
\def\theequation{D.\arabic{equation}}
\def\thetable{D.\arabic{table}}
\def\thefigure{D.\arabic{figure}}
\def\thelem{D.\arabic{lem}}
\def\thedefn{D.\arabic{defn}}
\def\theprop{D.\arabic{prop}}


\subsection{Ablation Study}\label{apdx:ablation}
We perform an ablation study to investigate the effect of various hyperparameters set in \simplexwater{} and \heavywater{}.
The ablation study is performed in a curated subset of prompts from the Finance-QA dataset \cite{maia201818} considered in Section \ref{sec:numerics}.

\paragraph{Sinkhorn Algorithm Parameters.}
We study the effect of Sinkhorn's algorithm's parameter on the performance of the proposed watermarking scheme.
We note that, while the Sinkhorn's algorithm is set with a predetermined maximum iterations parameters, in the considered experiments the algorithm runs until convergence.
We study this effect through three cases. 
\begin{enumerate}
    \item We analyze the effect of Sinkhorn's algorithm's regularization parameter $\epsilon$ on the overall runtime. While lower values of $\epsilon$ provide solution that are closer to the underlying OT solution, often a smaller value of $\epsilon$ required more time for convergence of the algorithm. As seen from Figure \ref{fig:sink_eps_time}, as expected, smaller $\epsilon$ increase overall runtime. In our experiments we chose $\epsilon=0.05$ which resulted in overall satisfactory performance, while incurring mild runtime overhead.
    \item We analyze the effect of the error threshold on runtime.
    The lower the error threshold, the higher the higher the accuracy in the solution and the higher the overall algorithm runtime. This is indeed the case, and the effect on the watermarking procedure runtime is visualized in Figure \ref{fig:sink_time_err}.
    \item We analyzed the effect of the error threshold on the watermarked distribution cross entropy (see Section \ref{sec:numerics} for definition). As seen from Figure \ref{fig:sink_ce_err}, while a lower threshold results with a higher runtime, the improvement on the cross entropy, which is a proxy for textual quality, becomes negligible from some point.
    In our experiments, we chose a threshold value of $10^{-5}$, which demonstrated the best performance between runtime, cross entropy and detection.
\end{enumerate}

\begin{figure*}[!ht]
  \centering
  \begin{subfigure}[t]{0.32\linewidth}
    \centering
    \includegraphics[width=\linewidth]{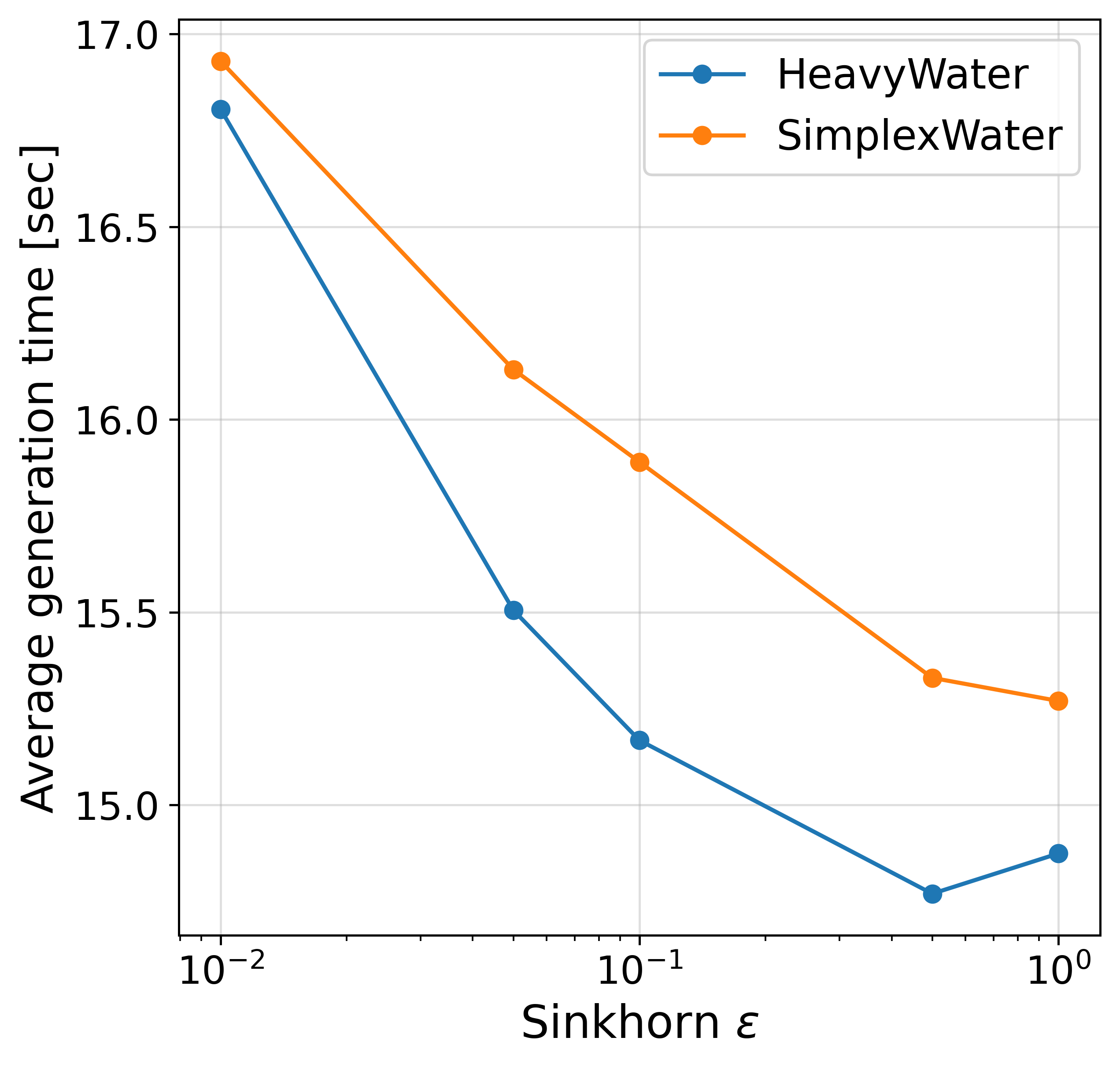}
    \caption{Generation time vs. Sinkhorn regularization parameter $\epsilon$.}
    \label{fig:sink_eps_time}
  \end{subfigure}%
  \hfill
  \begin{subfigure}[t]{0.295\linewidth}
    \centering
    \includegraphics[trim={30pt 5pt 5pt 5pt}, clip,width=\linewidth]{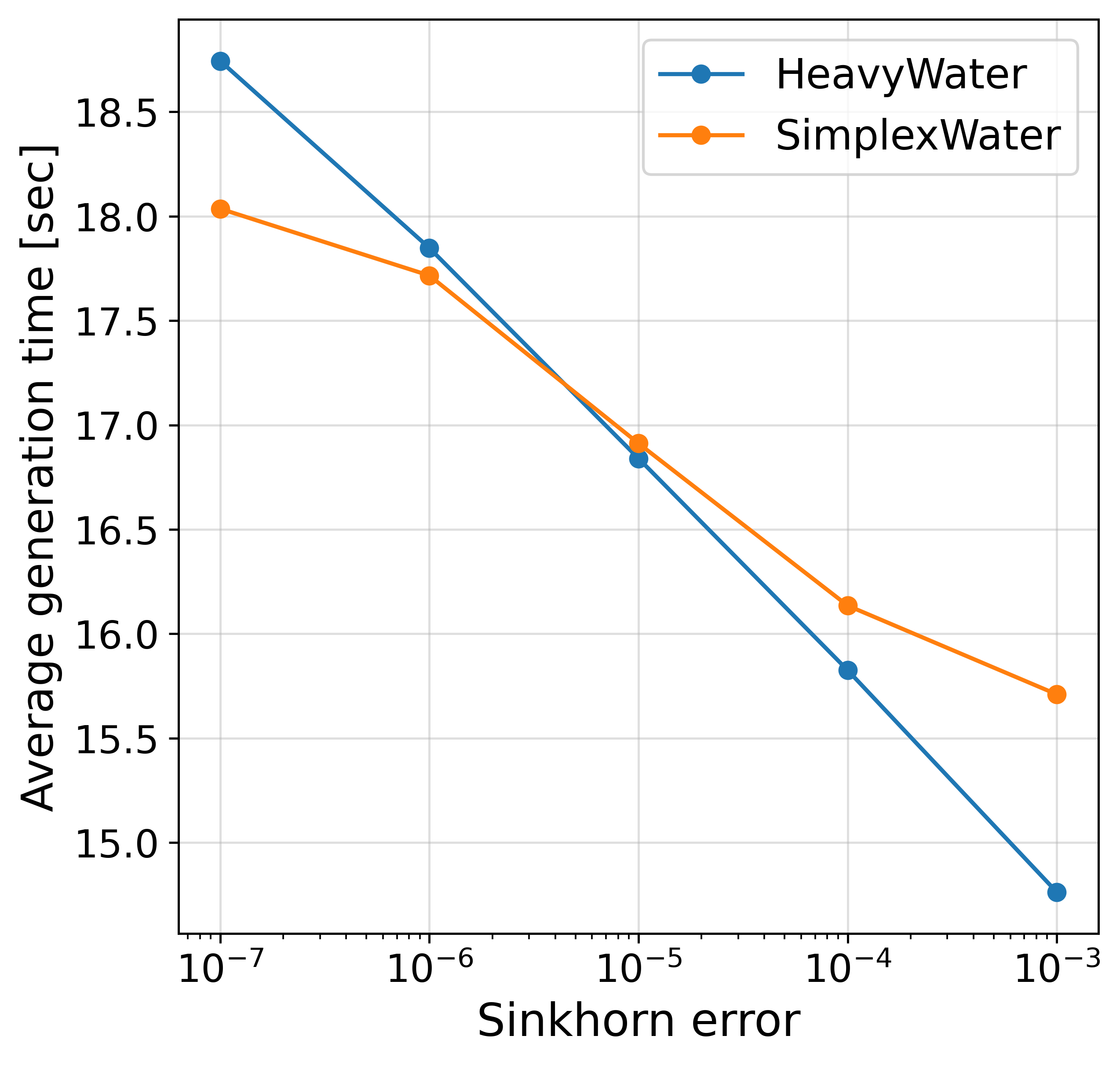}
    \caption{Generation time vs. Sinkhorn error threshold.}
    \label{fig:sink_time_err}
  \end{subfigure}
  \hfill
  \begin{subfigure}[t]{0.325\linewidth}
    \centering
    \includegraphics[width=\linewidth]{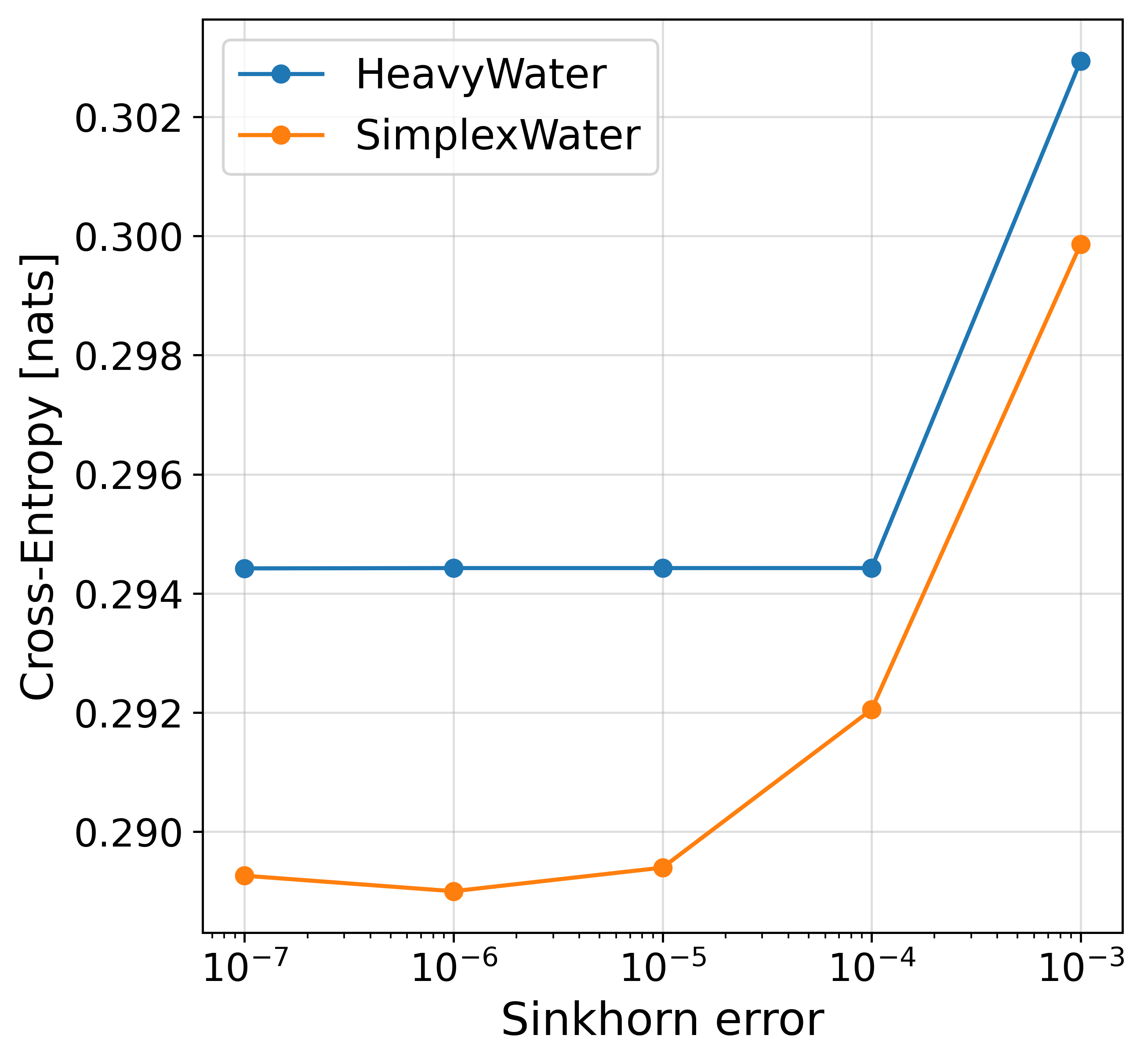}
    \caption{Cross Entropy of watermarked text  vs. Sinkhorn error threshold.}
    \label{fig:sink_ce_err}
  \end{subfigure}
  \caption{Effect of Sinkhorn Algorithm's parameters on Runtime  and distortion.}
  \label{fig:figs_sinkhorn_ablation}
\end{figure*}


\subsection{Impact of Non i.i.d. Side Information Generation}\label{apdx:seeding}
As we previously mentioned, most of the considered experiments in Section \ref{sec:numerics} operate under a 'fresh randomness' scheme, in which we try to replicate independence between the random side information and the LLM net token distribution.
However, in practice various hashing scheme are employed, often with the purpose of increasing the overall watermarking scheme's robustness to attacks.
Such hashing schemes aggregate previous tokens (using some sliding window with context size $h$) and a shared secret key $r\in\NN$.
We are interested in verifying that indeed, robustness-driven seed generation scheme do not degrade the performance of our methods.

To that end, in this section we test the effect of various popular hashing schemes in the performance of our watermarks. As both \simplexwater{} and \heavywater{} follow the same watermarking algorithm we anticipate them to demonstrate similar dependence on the hashing scheme. We therefore prioritize an extensive study on a single watermark - \simplexwater{}.
For a given sliding window size $h$, we consider the following seed generation functions:
\begin{enumerate}
    \item $\mathsf{min}$-hash, which takes the minimum over token-ids and multiplies it with the secret key, i.e. $\mathsf{seed} = \mathsf{min}(x_{t-1},\dots,x_{t-h})\cdot r$.
    \item $\mathsf{sum}$-hash, which takes the sum of the token-ids and multiplies it with the secret key, i.e. $\mathsf{seed} = \mathsf{sum}(x_{t-1},\dots,x_{t-h})\cdot r$.
    \item $\mathsf{prod}$-hash, which takes the product of the token-ids and multiplies it with the secret key, i.e. $\mathsf{seed} = \mathsf{prod}(x_{t-1},\dots,x_{t-h})\cdot r$.
    \item Markov-1 scheme, which considers $h=1$.
\end{enumerate}
We present performance across the aforementioned schemes, considering several values of $h$. As seen from Figure \ref{fig:figs_seed_ablation_plane}, the change of seed generation scheme is does have a significant effect on the overall detection-distortion tradeoff. Furthermore, as emphasize in Figure \ref{fig:figs_seed_ablation}, the size of the sliding window also results in a negligible effect on the watermark performance.

\begin{figure*}[!ht]
  \centering
  \begin{subfigure}[t]{0.32\linewidth}
    \centering
    \includegraphics[width=\linewidth]{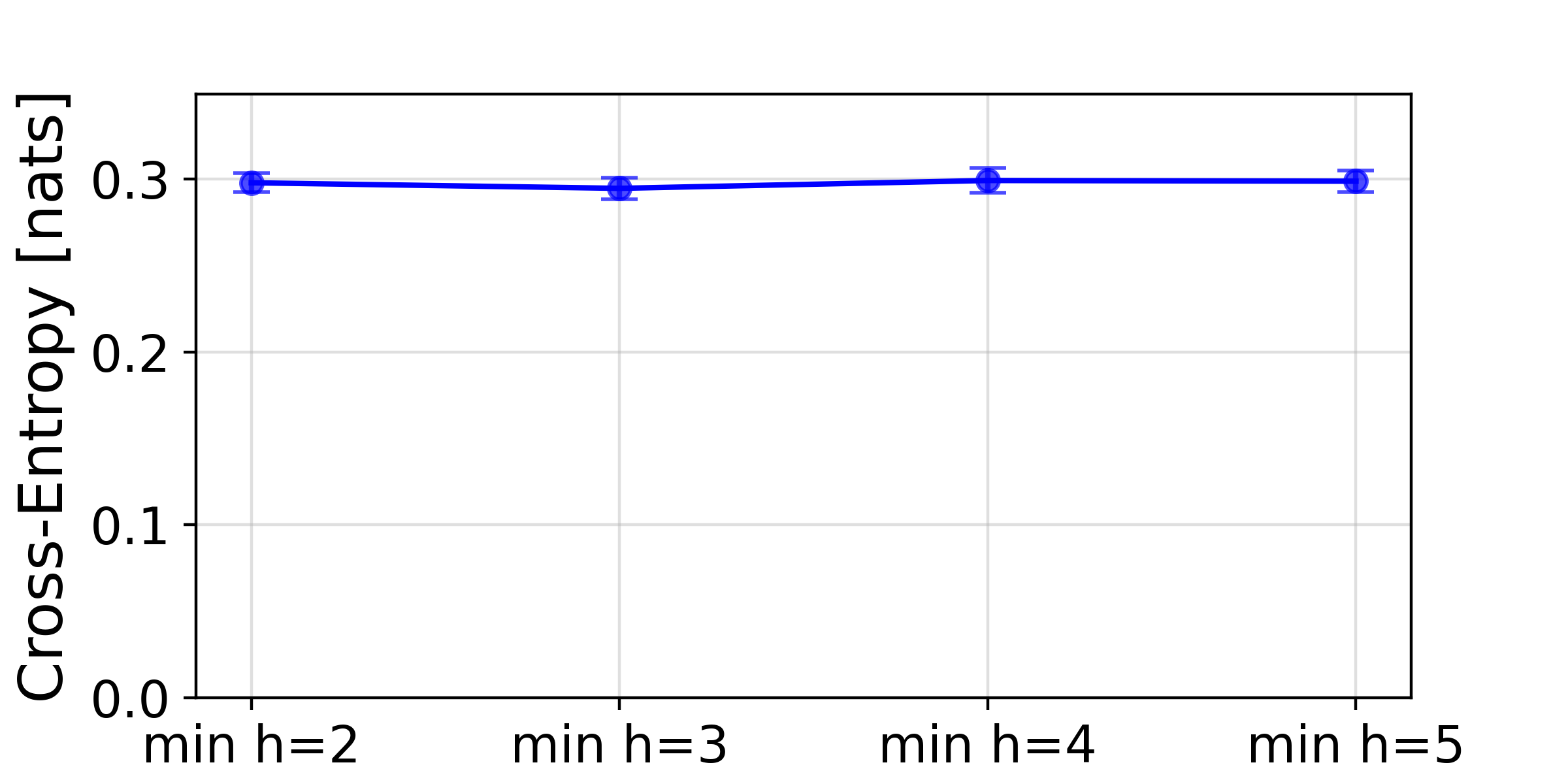}
    \caption{Cross entropy vs. context window size, $\mathsf{min}$-hash.}
    \label{fig:sink_eps}
  \end{subfigure}%
  \hfill
  \begin{subfigure}[t]{0.32\linewidth}
    \centering
    \includegraphics[width=\linewidth]{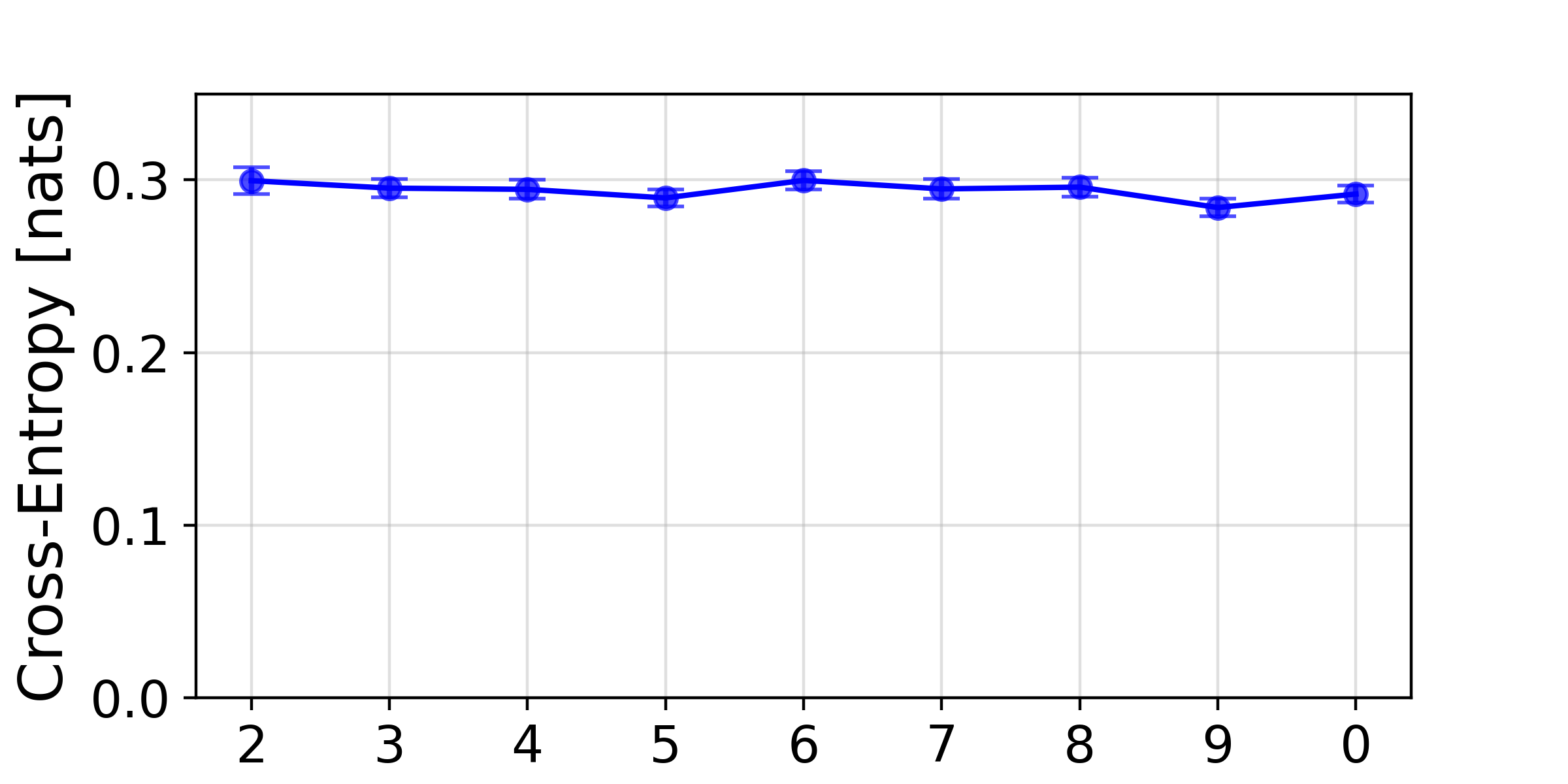}
    \caption{Cross entropy vs. context window size, $\mathsf{sum}$-hash.}
    \label{fig:ce_vs_context_len}
  \end{subfigure}
  \hfill
  \begin{subfigure}[t]{0.32\linewidth}
    \centering
    \includegraphics[width=\linewidth]{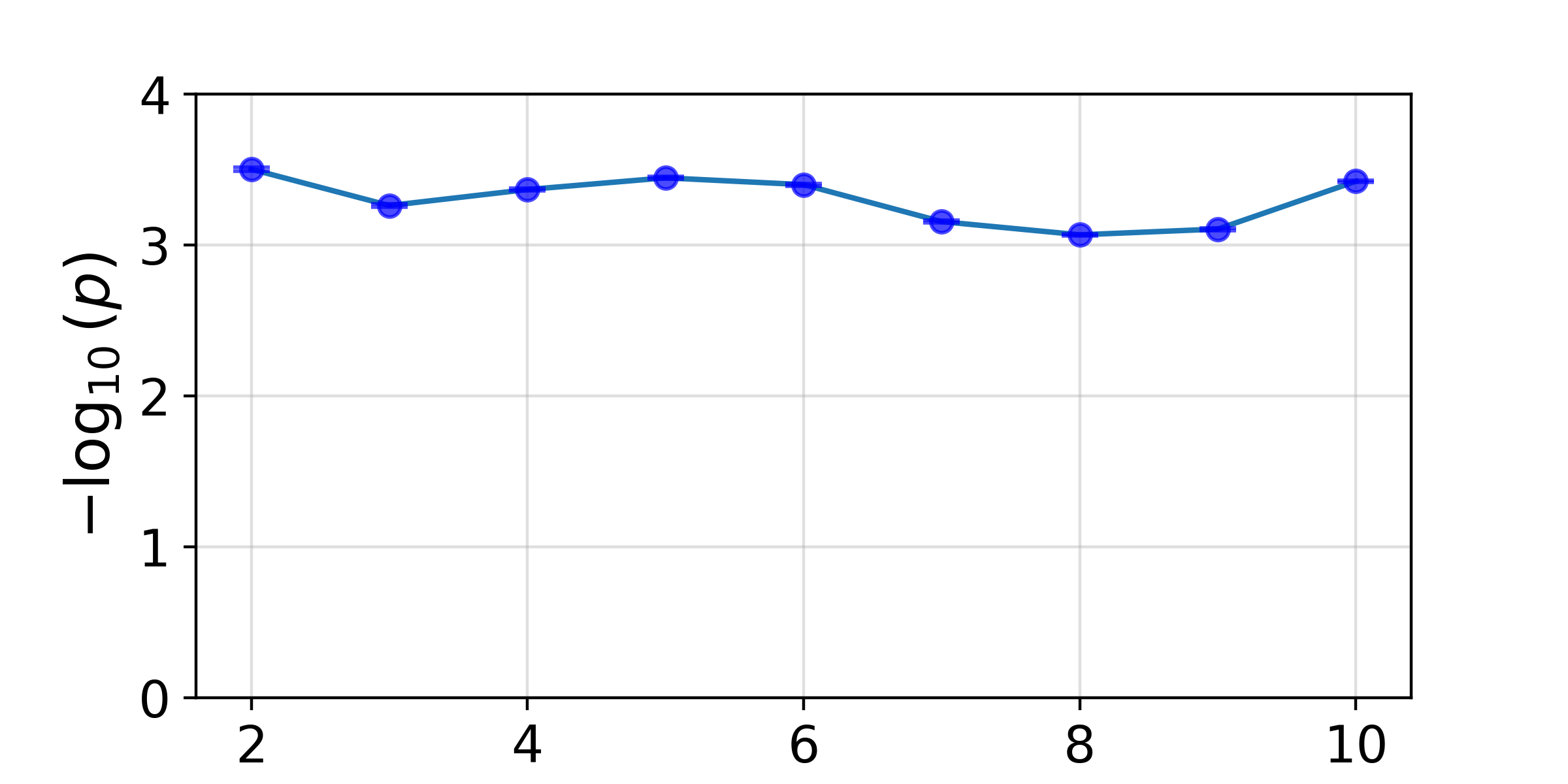}
    \caption{$-\log_{10}(p)$ vs. context window size, $\mathsf{min}$-hash.}
    \label{fig:p_vs_h}
  \end{subfigure}
  \caption{Seed Ablation: The effect of the context window is is negligible on the performance of \simplexwater{}.}
  \label{fig:figs_seed_ablation}
\end{figure*}

\begin{figure*}[!ht]
  \centering
  \begin{subfigure}[t]{0.45\linewidth}
    \centering
    \includegraphics[width=\linewidth]{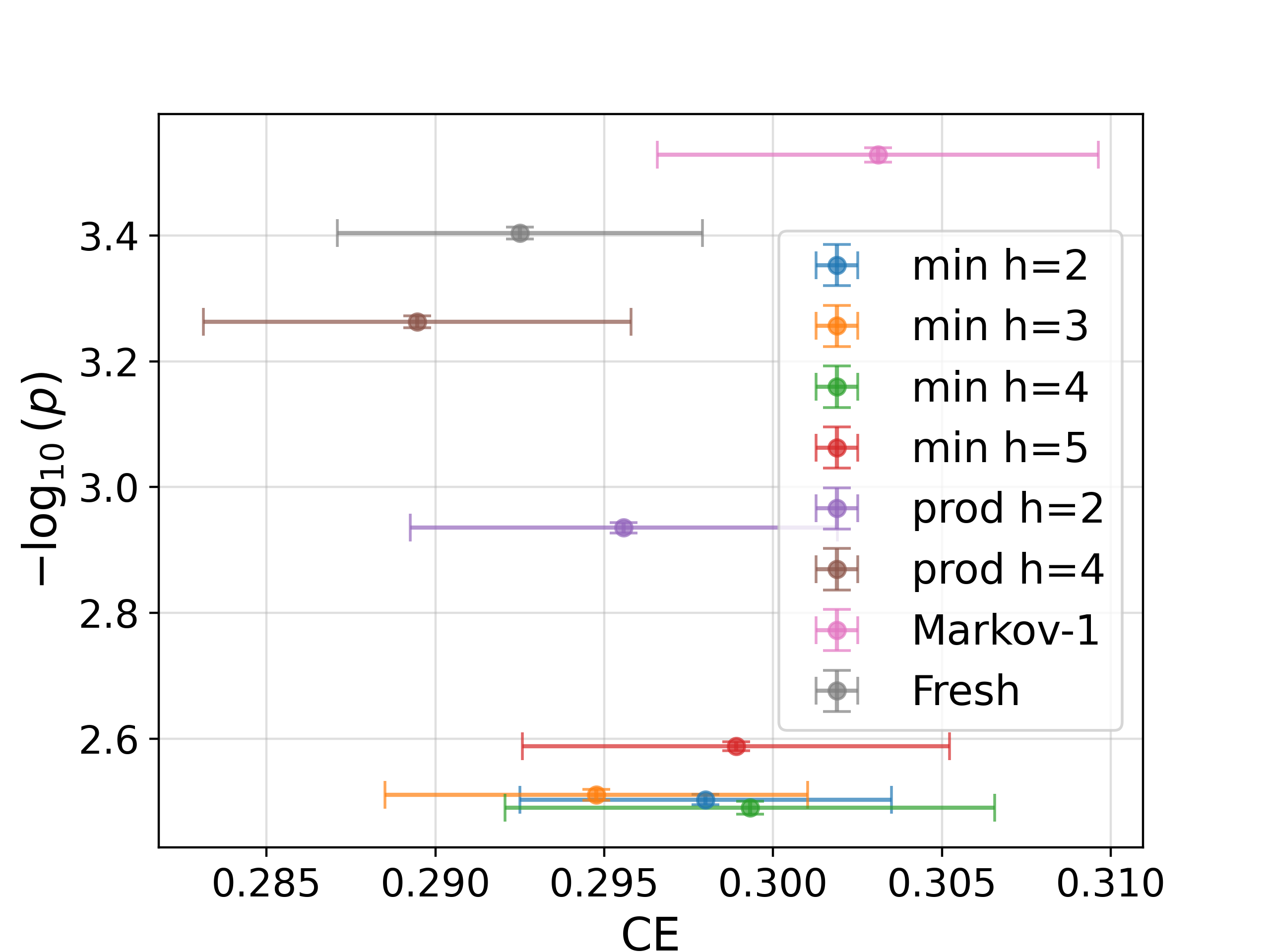}
    \caption{Cross Entropy vs $-\log_{10}(p)$.}
    \label{fig:sink_eps_ce}
  \end{subfigure}%
  \hfill
  \begin{subfigure}[t]{0.45\linewidth}
    \centering
    \includegraphics[width=\linewidth]{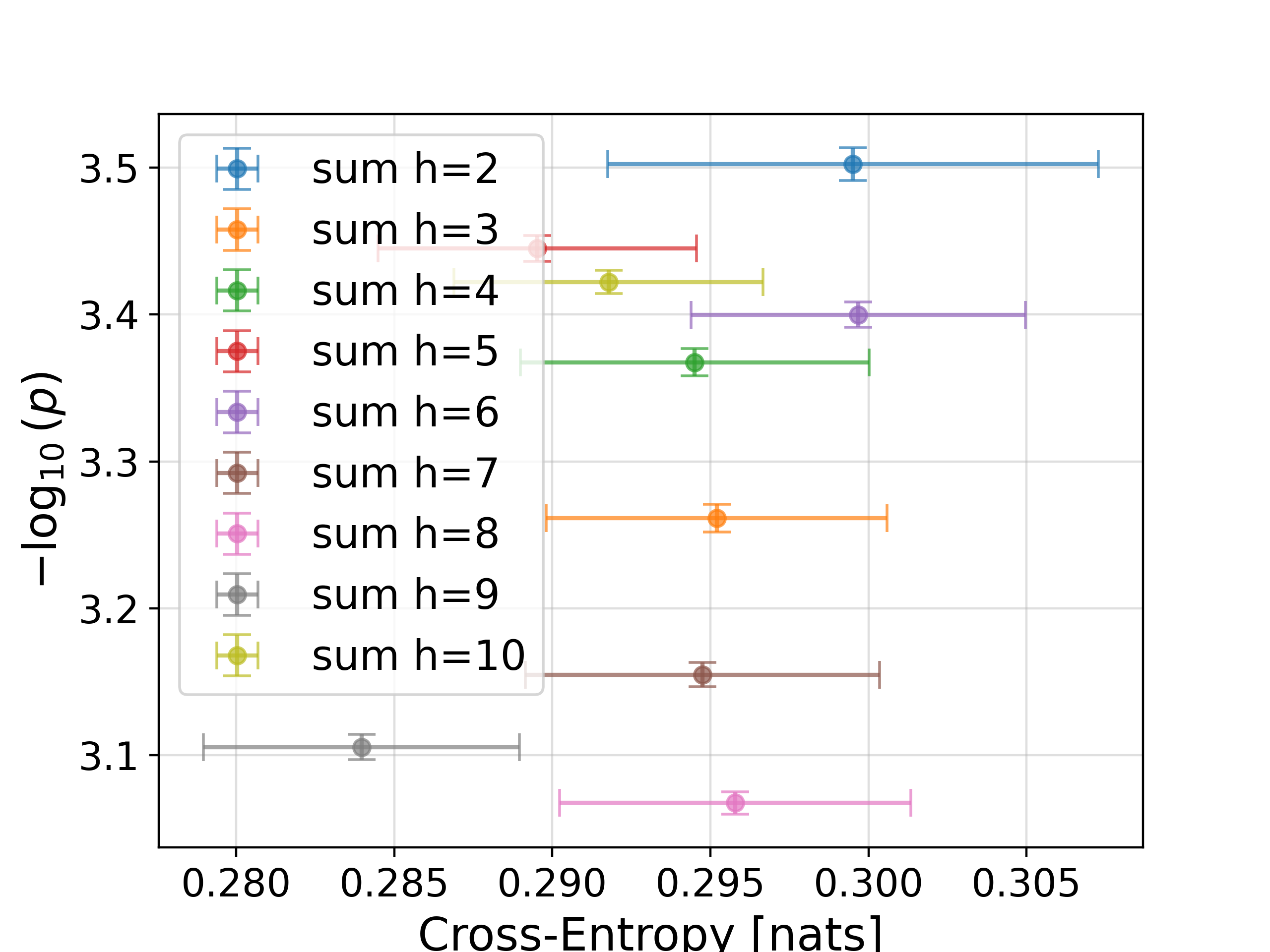}
    \caption{Cross Entropy vs $-\log_{10}(p)$.}
    \label{fig:ce_vs_p_seeding}
  \end{subfigure}
  \caption{Seed Ablation visualized in the detection-distortion plane. It is visible that the performance of our watermark is consistent across an array of hashing scheme and context window sizes.}
  \label{fig:figs_seed_ablation_plane}
\end{figure*}

\subsection{Experiment: Robustness To Textual Attacks}\label{apdx:robustness}
The watermarks in this paper are obtained by optimizing a problem that encodes the tradeoff between detection and distortion under worst case distribution. 
To that end, the proposed watermarks are not theoretically optimized for robustness guarantees. However, robustness is often a byproduct of the considered randomness generation scheme, as text edit attacks mainly effect the context from which the seed is generated. A discrepancy in the seed results in a discrepancy in the shared side information sample $s$.
However, regardless of the seed generation scheme, one has to choose a score function and a watermarked distribution design.

In this section, we show that, while not optimized for robustness directly, \simplexwater{} and \heavywater{} demonstrate competitive performance in terms on robustness to common textual edit attacks.
We consider the setting from the watermarking benchmark MarkMyWords \cite{piet2023mark}.
We compare our performance with the Red-Green watermark and the Gumbel watermark.
We choose the value of $\delta$ for the Red-Green watermark such that its cross-entropy distortion is comparable with \simplexwater{}, \heavywater{} and Gumbel ($\delta=1$).

We consider three attacks:
\begin{enumerate}
    \item A $\mathsf{Lowercase}$ attack, in which all the characters are replaced with their lowercase version.
    \item A $\mathsf{Misspelling}$ attack, in which words are replaced with a predetermined misspelled version. Each word is misspelled with probability $0.1$.
    \item A $\mathsf{Typo}$ attack, in which, each character is replaced with its neighbor in the $\mathsf{QWERTY}$ keyboard. A character is replaced with probability $0.05$.
\end{enumerate}
As seen in Figure \ref{fig:robustness}, our schemes demonstrate strong robustness under the considered attacks, resulting in the highest detection capabilities in $3$ out of $4$ cases and competitive detection power in the $4$th. This implies that, even though \simplexwater{} and \heavywater{} are not designed to maximize robustness, the resulting schemes show competitive resilience to common text edit attacks.

\begin{figure}[!ht]
    \centering
    \includegraphics[width=0.6\linewidth]{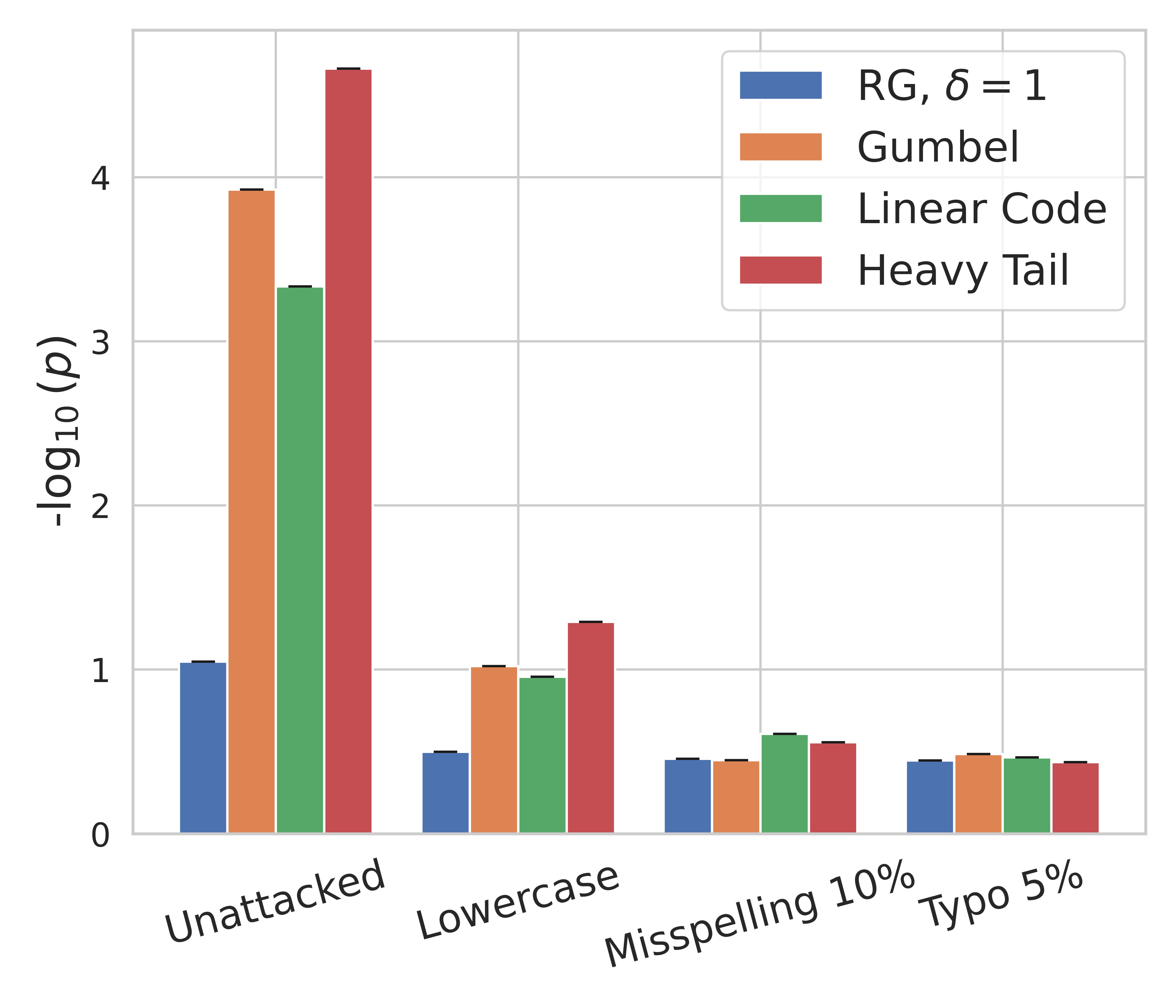}
    \caption{Robustness to attacks --- \heavywater{} demonstrates equal or superior detection performance, as measured by $-\log_{10}(p)$, across a variety of attacks involving edits to generated outputs.}
    \label{fig:robustness}
\end{figure}


\subsection{Computational Overhead}\label{apdx:overhead}
We analyze the computational overhead induced by the considered watermarking scheme. Theoretically, Sinkhorn's algorithm has an iteration computational complexity of $O(km)$ for token vocabulary of size $|\cX|=m$ and side information of alphabet $|\cS|=k$ due to its vector-matrix operations.
In practice, watermarking is a single step within the entire next token generation pipeline. 

We analyze the computational overhead induced by applying \simplexwater{} and \heavywater{}. Figure \ref{fig:time_overhead} shows the overhead of watermarking in a few common watermarks - Red-Green \cite{kirchenbauer2023watermark}, Gumbel \cite{aaronson2023watermark}, Inverse-transform \cite{kuditipudi2023robust}, SynthID \cite{dathathri2024scalable} and our watermarks.
It can be seen that The Gumbel, Inverse transform and Red-Green watermarks induce a computational overhead of $\sim 10\%$, while \simplexwater{}, \heavywater{} and SynthID induce an overhead of $\sim 30 \%$.
While this overhead is not negligible, our methods demonstrate superior performance over considered methods. However, replacing a 'fast', yet 'weaker' watermark with ours boils down to a difference in $\sim 20\%$ increase in generation time.
We consider an implementation of the SynthID through vectorized tournament sampling with a binary score function and $15$ tournament layers, which is the method reported in the main text experiments.
As we previously mentioned, we consider top-$p$ sampling with $p$=0.999.
We note that, in many text generation schemes, lower top-$p$ values, which accelerate Sinkhorn's algorithm's runtime, thus further closing the computational gap.

\begin{figure}[!ht]
    \centering
    \includegraphics[width=0.5\linewidth]{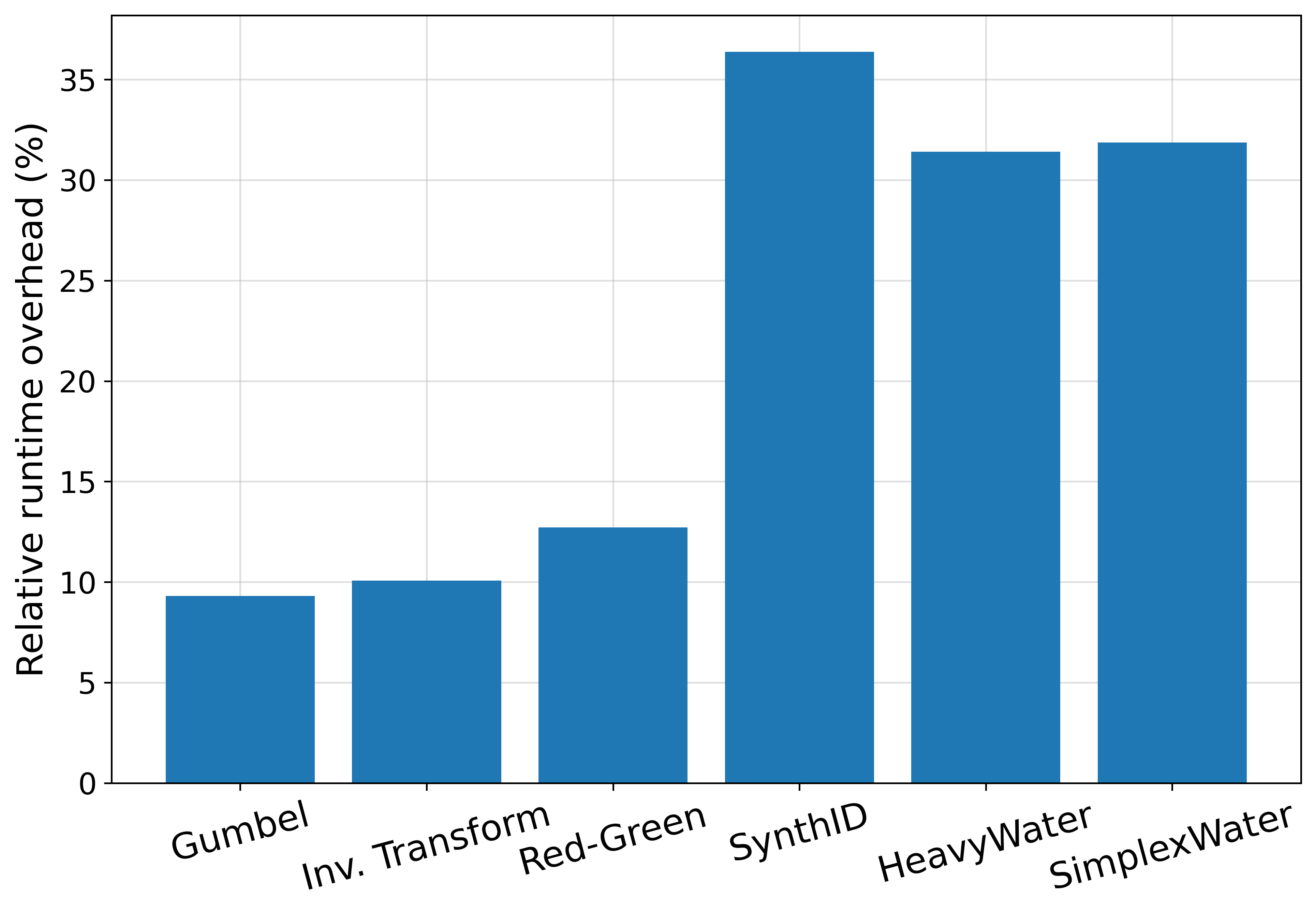}
    \caption{Computational overhead over unwatermarked text generation, Llama2-7b.}
    \label{fig:time_overhead}
\end{figure}

\subsection{Experiment: Alternative Quality Metrics for Textual and Coding Tasks}\label{apdx:waterbench}
This paper focused on distortion as the proxy for textual quality. This is a common practice in watermarking (e.g. \cite{aaronson2023watermark,fu2024gumbelsoft,kuditipudi2023robust,dathathri2024scalable,long2025optimized,hetheoretically}).
Distortion is measured by the discrepancy between the token distribution $P_X$ and the expected watermarked distribution $\EE_{S}[P_{X|S}]$.
In practice, distortion and textual quality are often measured with some perplexity-based measure (e.g. cross-entropy in this paper).
However, as explored in the WaterBench benchmark \cite{tu2023waterbench}, such measures are not guaranteed to faithfully represent degradation in textual quality.

To that end, WaterBench proposed an array of alternative generation metrics, whose purpose us to evaluate the quality of generated watermarked text, and are tailored for specific text generation tasks.
We consider $4$ datasets from the WaterBench benchmark \cite{tu2023waterbench}:
\begin{enumerate}
    \item Longform QA \cite{fan2019eli5}: A dataset of 200 long questions-answer generation prompts. The considered generation metric is the ROGUE-L score.
    \item Knowledge memorization: A closed-ended entity-probing benchmark drawn from KoLA \cite{yu2023kola}, consisting of 200 triplets sampled at varying frequencies from Wikipedia the test an LLM’s factual recall. The considered generation metric is the F1 score as it is a factual knowledge dataset.
    \item Knowledge understanding \cite{peng2022copen}: A dataset of 200 questions that demonstrate the LLM's understanding of various concepts. The considered generation metric is the F1 score as it is a factual knowledge dataset.
    \item Multi-news summarization: A collection of 200 long news clusters, coupled with summarization prompts. The score here is the ROGUE-L score.
\end{enumerate}

As seen in Table \ref{tab:waterbench}, our watermarks maintain competitive performance in the considered set of textual generation tasks, even under alternative text generation evaluation metrics.

To better evaluate quality of generated watermarked code, we include two coding-centric evaluation metrics that assess different aspects of code quality.

\textbf{Pass@K Evaluation (Functional Correctness)} We evaluate functional correctness using the HumanEval dataset \cite{chen2021evaluating}, which is a popular benchmark for code generation. Pass@K measures the empirical probability of a solution passing all unit tests among K generated solutions. We measure pass@k with $k \in {1, 5, 10}$, where we execute the generated code against unit tests provided in the dataset in a sandbox environment. As shown in the table below, our methods outperformed the competitors: (i) SimplexWater achieves the highest pass@5 (22.7\%), outperforming all baselines including unwatermarked text (20.7\%) and (ii) HeavyWater achieves the highest pass@10 (27.8\%), nearly matching unwatermarked performance (28.1\%). In all cases, performance is comparable to unwatermarked code (see Table \ref{tab:humaneval-passk}).

\textbf{Edit\_Sim (Edit Similarity)} We use this metric on the LCC coding dataset \cite{chen2021evaluating} which focuses on code completion: generating the next few lines of code given a long context. Since the dataset is sourced from GitHub, human-written ground-truth completions are available. Edit\_Sim is defined as the Levenshtein similarity between generated and ground-truth code.

Let $s_1, s_2 \in \Sigma^*$ be two strings over an alphabet $\Sigma$. 
The \emph{Levenshtein distance} $d_L(s_1, s_2)$ is defined as the minimum number of 
single-character edit operations (insertions, deletions, or substitutions) required 
to transform $s_1$ into $s_2$. The \emph{Levenshtein similarity} (normalized form) is then defined as:
\[
\mathrm{LevSim}(s_1, s_2) = 1 - \frac{d_L(s_1, s_2)}{\max(|s_1|, |s_2|)}
\]
where $|s|$ denotes the length of string $s$. 
Thus, $\mathrm{LevSim}(s_1, s_2) \in [0,1]$, with 
$\mathrm{LevSim}(s_1, s_2) = 1$ if and only if $s_1 = s_2$.

We benchmark our watermarking methods against prior work using this metric. The results are provided in the Table \ref{tab:lcc-editsim}. Specifically, HeavyWater ties for best performance 0.52. 
These two metrics support that HeavyWater and SimplexWater have no significant impact on the quality of generated code relative to unwatermarked code, and perform favorably when compared to competing methods both in terms of detection accuracy and generation quality. 

\textbf{Code Examples.} We also include an example of generated code before and after watermarking with HeavyWater.

\textbf{Unwatermarked Output:}

\begin{lstlisting}
def has_close_elements(numbers: List[float], threshold: float) -> bool:
    # Sort the list of numbers
    numbers = sorted(numbers)

    # Find the first two numbers in the list 
    that are closer to each other than the threshold
    min_distance = float('inf')
    for i in range(1, len(numbers)):
        distance = abs(numbers[i] - numbers[i - 1])
        if distance < min_distance and distance < threshold:
            min_distance = distance

    # If we found a pair of numbers that are 
    closer to each other than the threshold, return True
    return min_distance < threshold
\end{lstlisting}

\noindent\textbf{Unit Tests result:} Pass $\rightarrow$ Pass@1 = 100\% on this example prompt.

\vspace{1em}

\textbf{HeavyWater’s Watermarked Output:}

\begin{lstlisting}
def has_close_elements(numbers: List[float], threshold: float) -> bool:
    close_pairs = []
    for i in range(len(numbers)):
        closest = None
        closest_distance = float("inf")
        for j in range(i+1, len(numbers)):
            distance = abs(numbers[i] - numbers[j])
            if distance < closest_distance:
                closest = numbers[j]
                closest_distance = distance
        if closest and closest_distance <= threshold:
            close_pairs.append((numbers[i], closest))
    return len(close_pairs) > 0
\end{lstlisting}

\noindent\textbf{Unit Tests result:} Pass $\rightarrow$ Pass@1 = 100\% on this example prompt.

\noindent\textbf{Example: Code Completion under Long Context}

\noindent This example was shortened for brevity. 

\vspace{0.5em}
\noindent\textbf{Prompt: Please complete the code given below.}

\begin{lstlisting}
#!/usr/bin/env python
# -- coding: utf-8 --

from HttpUtils import App, buildOpener

class Device(object):
    def __init__(self, token):
        self.token = token
        self.app = App()

    def check_inspection(self):
        data = self.app.check_inspection()
        return data
    [abbreviated]

class Exploration(object):
    def __init__(self, app):
        self.app = app

    def getAreaList(self):
        data = self.app.exploration_area()
        return data
    [abbreviated]

class User(object):
    [abbreviated]

class RoundTable(object):
    [abbreviated]

class Menu(object):
    [abbreviated]

if __name__ == "__main__":
    from config import deviceToken, loginId, password
\end{lstlisting}

\vspace{0.5em}
\noindent\textbf{Answers provided by different methods:}

\noindent\textbf{Human Answer / Ground Truth:}
\begin{lstlisting}[frame=none]
device = Device(deviceToken)
\end{lstlisting}

\noindent\textbf{No Watermark} $\rightarrow$ Edit\_Sim = 0.94
\begin{lstlisting}[frame=none]
dev = Device(deviceToken)
\end{lstlisting}

\noindent\textbf{HeavyWater / Inverse-Transform / SimplexCode} $\rightarrow$ Edit\_Sim = 1
\begin{lstlisting}[frame=none]
device = Device(deviceToken)
\end{lstlisting}

\noindent\textbf{Red/Green} $\rightarrow$ Edit\_Sim = 0.9
\begin{lstlisting}[frame=none]
d = Device(deviceToken)
\end{lstlisting}

\vspace{0.5em}

\subsection{Alternative Detection Metric}\label{apdx:pdpfa}
In this section we provide results on an additional detection metric.
We consider the detection probability under a false-alarm (FA) constraint.
As we consider watermarks from which $p$-values can be calculated, we can impose such a FA constraint.
For a given set of responses obtained from a dataset of prompts, we are interested in calculating an estimate of the detection probability at some FA constraint, given a set of $p$-values, each calculated for one of the responses.
We obtain an estimate of the detection probability at a given FA constraint by taking the ratio of responses whose $p$-value is lower than the proposed $p$-value threshold, over the total number of responses.

We provide results on the FinanceQA dataset using Llama2-7b. 
We consider several FA values and visualize the resulting tradeoff curves in Figures \ref{fig:1e31e5} and \ref{fig:1e6}.
To obtain error-bars, we consider the following bootstrapping technique: Out of the $200$ responses, we randomly sample 200 subsets with $150$ responses and calculate the corresponding metric.
From the set of $150$ results we provide error-bars, considering the average value and standard deviation.
It can be seen that the trends presented in Figures \ref{fig:fig1} and \ref{fig:tradeoff_llama2} are preserved under the considered detection metric.

\begin{figure}[!ht]
    \centering
    \begin{subfigure}[b]{0.45\textwidth}
        \centering
        \includegraphics[trim={0pt 0pt 0pt 13pt}, clip,width=\textwidth]{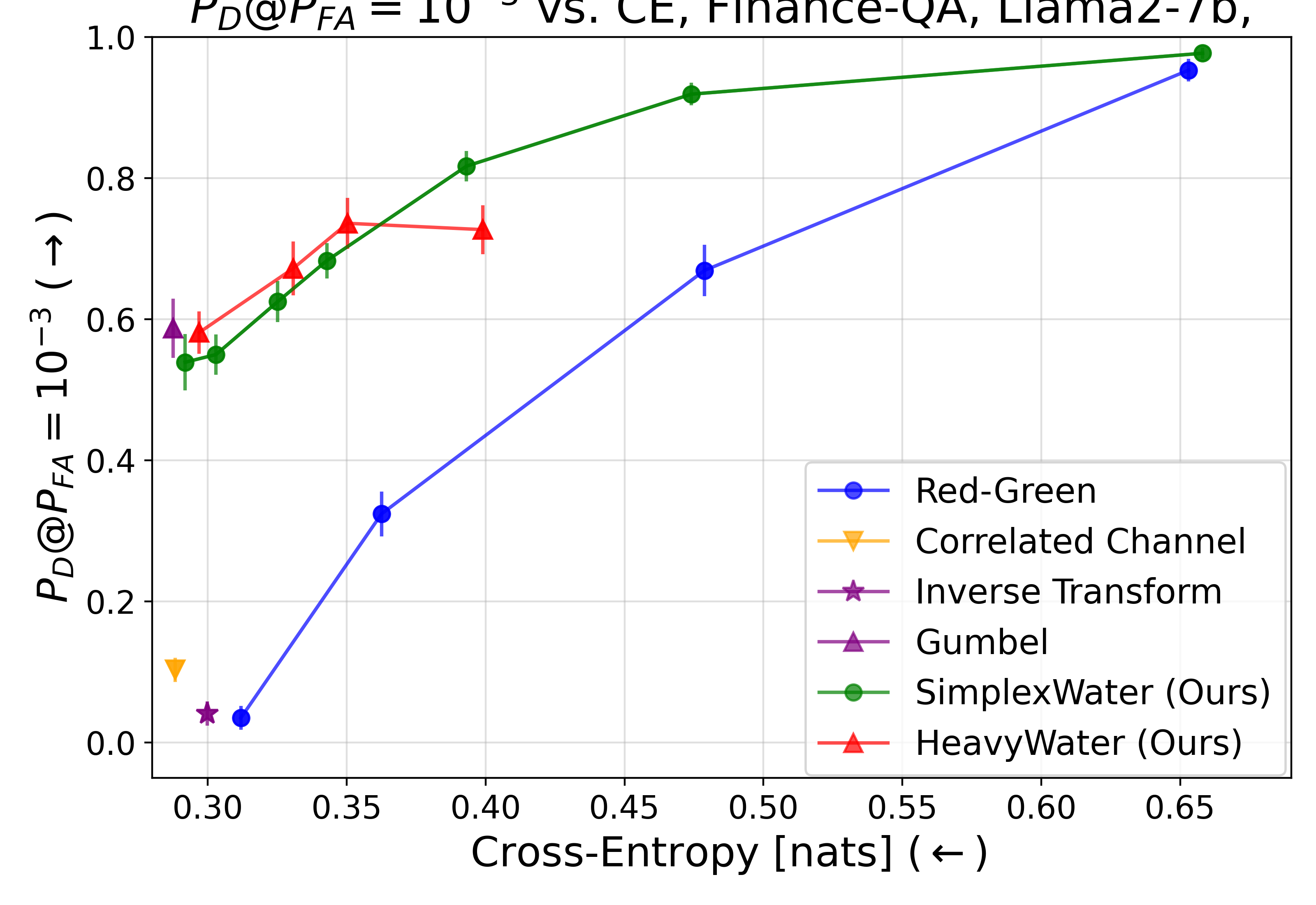}
        \label{fig:1e3}
        \caption{$P_D$ at $P_{FA}=10^{-3}$}
    \end{subfigure}
    \hfill
    \begin{subfigure}[b]{0.45\textwidth}
        \centering
        \includegraphics[width=\textwidth]{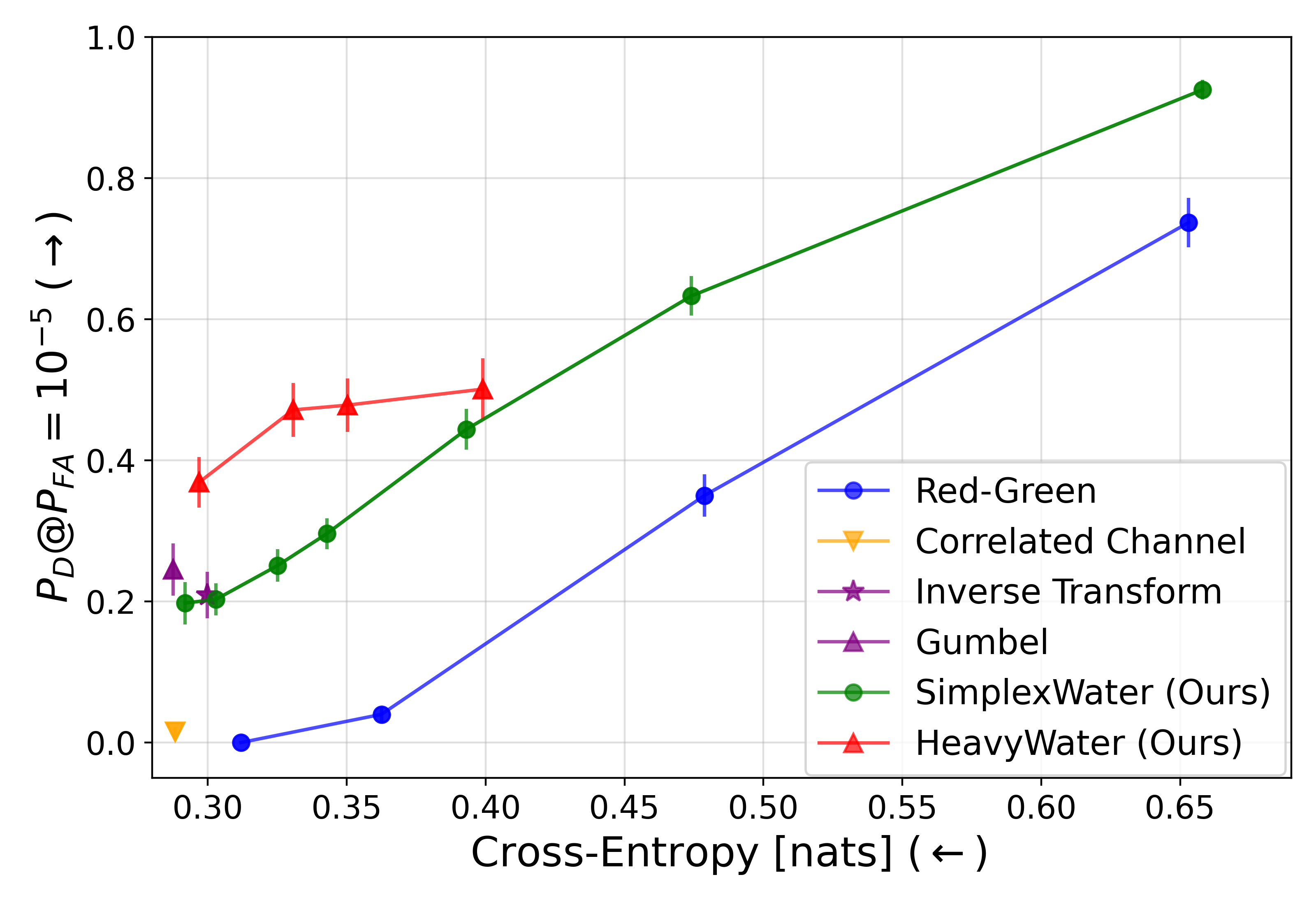}
        \label{fig:1e5}
        \caption{$P_D$ at $P_{FA}=10^{-5}$}
    \end{subfigure}
    \caption{Detection probability at a given false alarm constraint. LLama2-7b, Finance-QA dataset.}
    \label{fig:1e31e5}
\end{figure}

\begin{figure}
    \centering
    \includegraphics[width=0.45\linewidth]{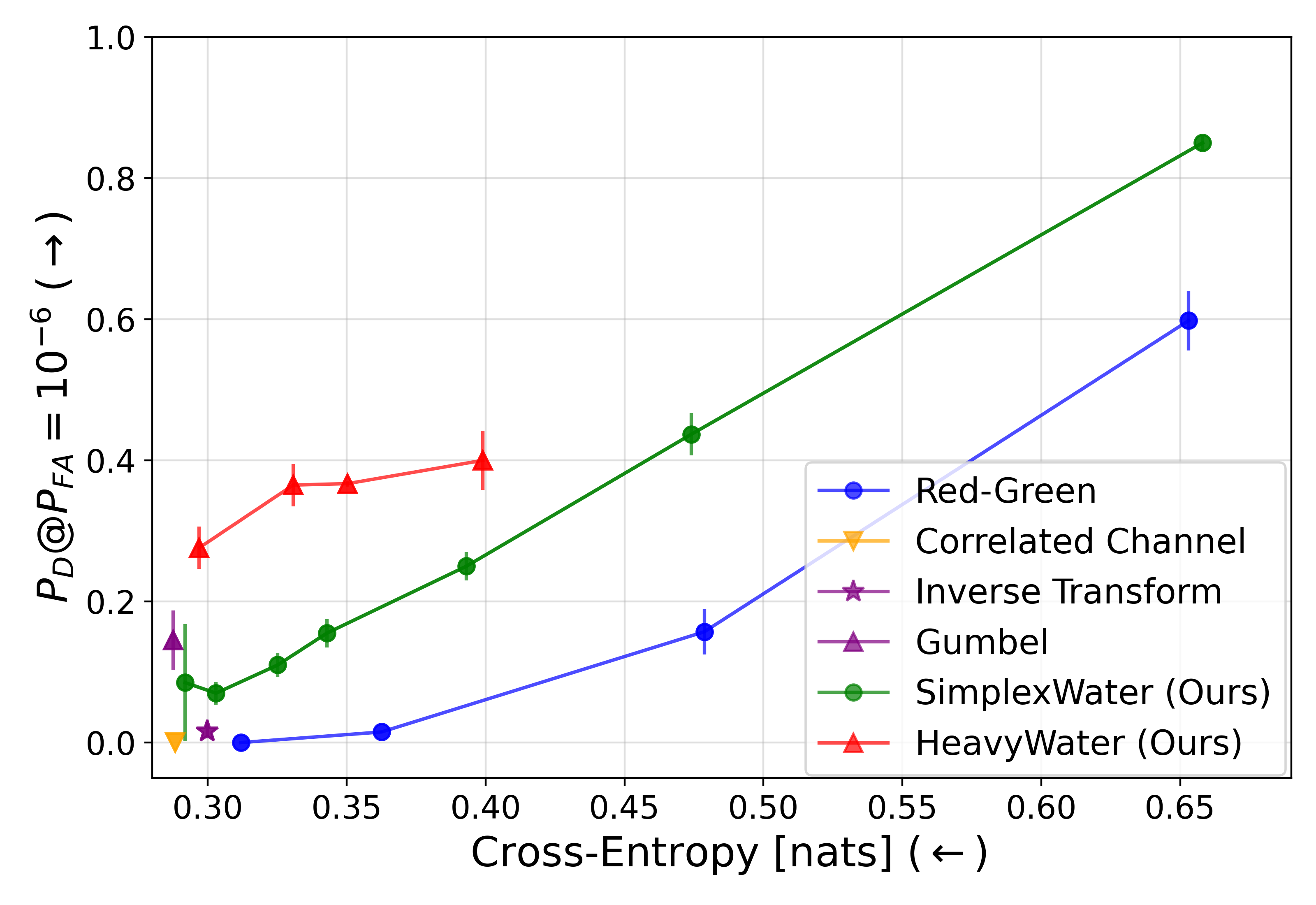}
    \caption{$P_D$ at $P_{FA}=10^{-6}$,  LLama2-7b, Finance-QA dataset.}
    \label{fig:1e6}
\end{figure}

\subsection{Additional Detection-Distortion Tradeoff Results}\label{apdx:additional_tradeoff}
We provide results that explore the detection-distortion tradeoff, in addition to ones presented in Fig. \ref{fig:fig1} and Fig. \ref{fig:tradeoff_llama2}.
We run three models (Llama2-7b, Llama3-8b, Mistral-7b) on two tasks (Q\&A and coding). We employ the popular Q\&A dataset, FinanceQA, and code-completion dataset LCC. Fig. \ref{fig:fig1} shows the result for Llama2-7b on Q\&A, while Fig. \ref{fig:tradeoff_llama2} shows the result for Mistral-7B on coding.  In Fig. \ref{fig:additional_detection_distortion_tradeoff}, we present this tradeoff over the remaining datasets and LLMs. 

\begin{figure}[htbp]
    \centering
    \begin{subfigure}[b]{0.4\textwidth}
        \centering
        \includegraphics[width=\textwidth]{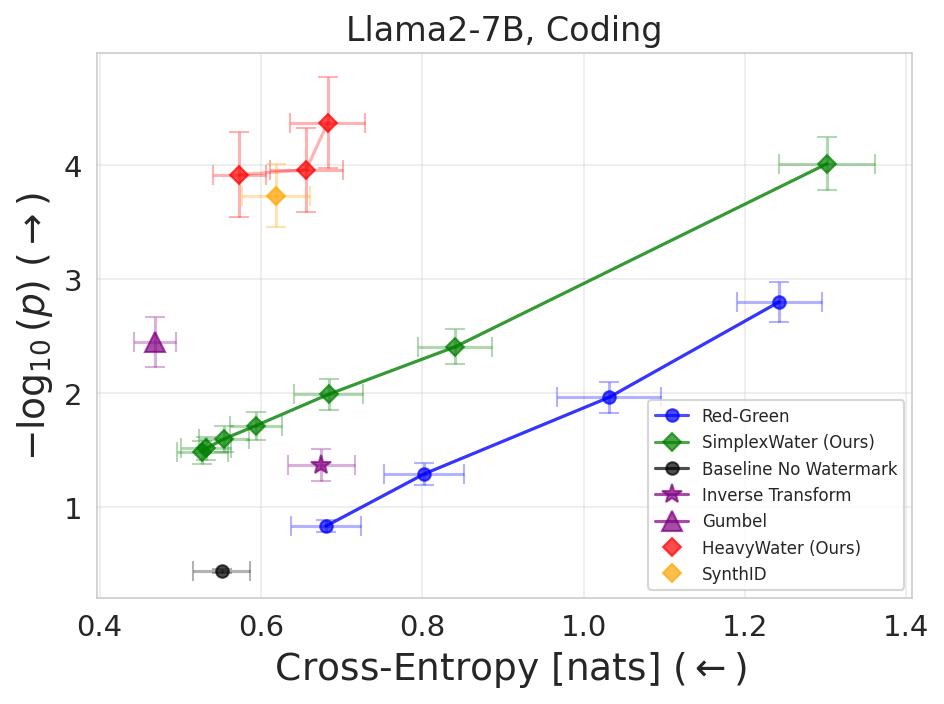}
        \label{fig:1}
    \end{subfigure}
    \hfill
    \begin{subfigure}[b]{0.4\textwidth}
        \centering
        \includegraphics[width=\textwidth]{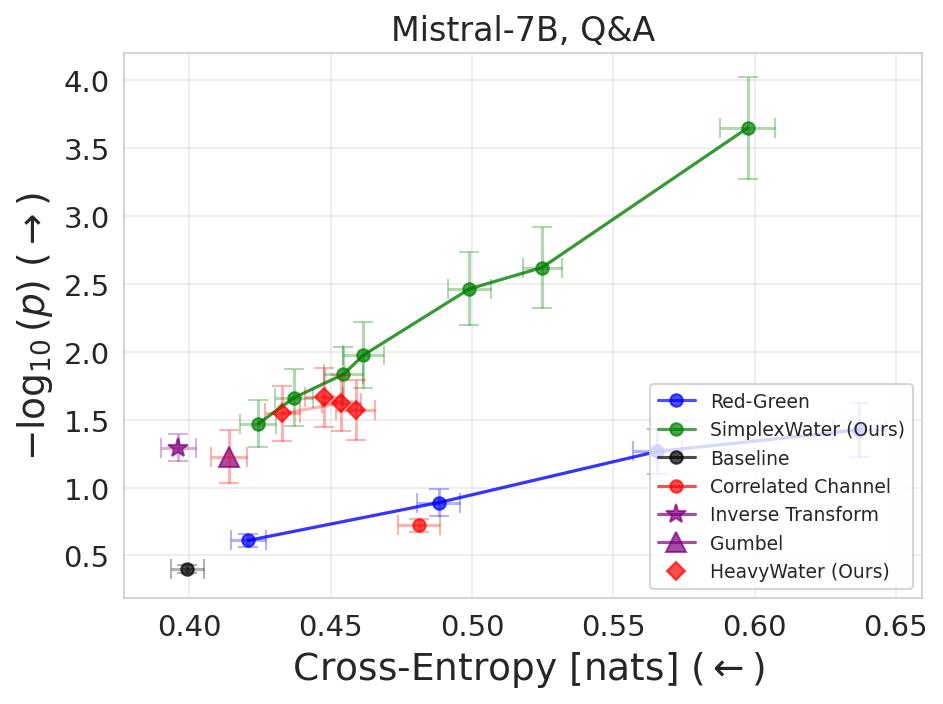}
        \label{fig:2}
    \end{subfigure}
    
    \vspace{0.5cm}
    
    \begin{subfigure}[b]{0.4\textwidth}
        \centering
        \includegraphics[width=\textwidth]{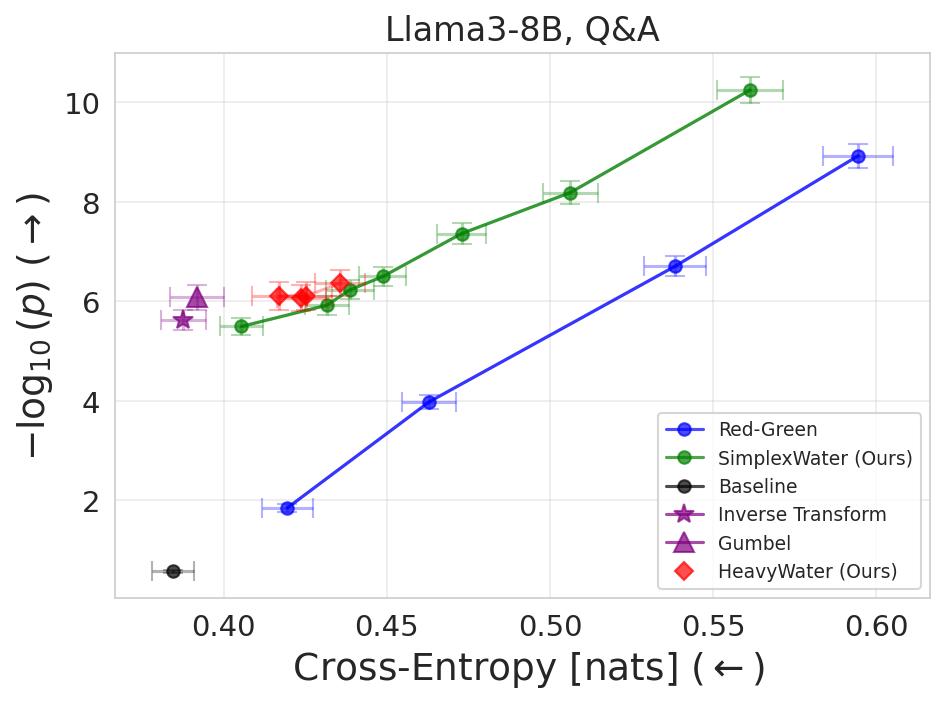}
        \label{fig:3}
    \end{subfigure}
    \hfill
    \begin{subfigure}[b]{0.4\textwidth}
        \centering
        \includegraphics[width=\textwidth]{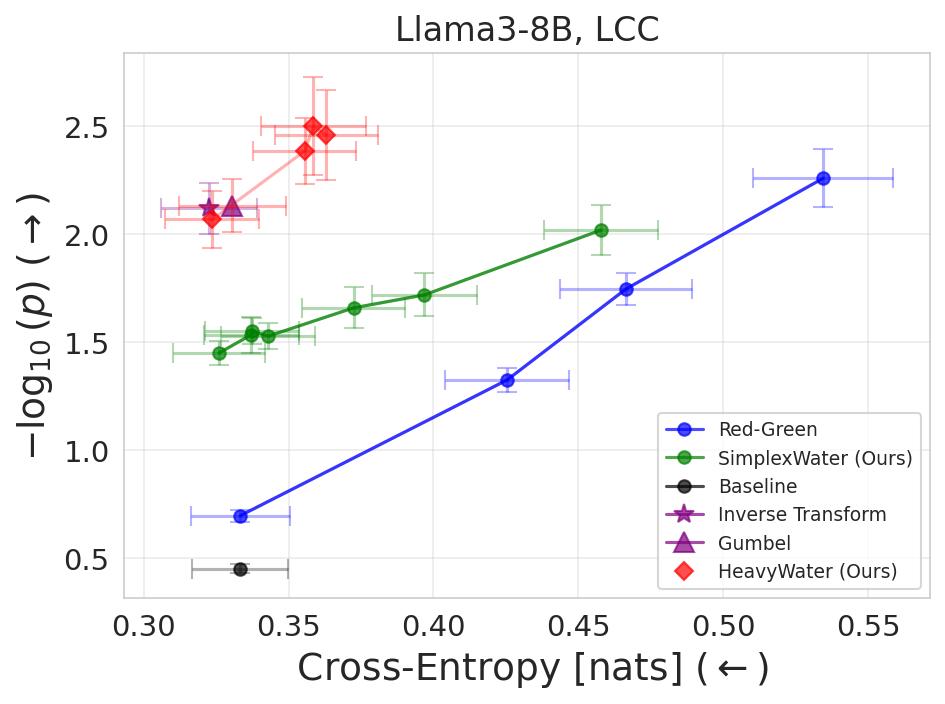}
        \label{fig:4}
    \end{subfigure}
    
    \caption{Detection-distortion tradeoffs on multiple models and tasks.}
    \label{fig:additional_detection_distortion_tradeoff}
\end{figure}

\begin{table}[!ht]
  \centering
  \small
  \caption{Performance of Watermarking Methods across Four Datasets}
  \label{tab:waterbench}
  \begin{tabular}{@{} l l c c c @{}} 
    \toprule
    \textbf{Dataset}      & \textbf{Watermark}   & \textbf{Gen.\ Metric $\uparrow$} 
                         & \textbf{\% Drop in GM $\downarrow$} & \textbf{$-\log_{10}p$ $\uparrow$} \\
    \midrule
    \multirow{6}{*}{\textbf{Longform}}
      & Gumbel            & 21.20 &  -0.856 & 8.006 \\
      & HeavyWater        & 21.48 &  -2.188 & \textbf{8.089} \\
      & Simplex           & 21.90 &  \textbf{-4.186} & 4.985 \\
      & Inv.\ Tr.         & 21.27 &  -1.189 & 3.687 \\
      & RG, $\delta=1$    & 21.25 &  -1.094 & 1.456 \\
      & RG, $\delta=3$    & 21.19 &  -0.809 & 7.078 \\
    \midrule
    \multirow{6}{*}{\textbf{Memorization}}
      & Gumbel            &  5.66 &  -2.536 & 1.085 \\
      & HeavyWater        &  5.73 &  \textbf{-3.804} & \textbf{1.605} \\
      & Simplex           &  5.71 &  -3.442 & 0.977 \\
      & Inv.\ Tr.         &  5.38 & 2.536 & 0.792 \\
      & RG, $\delta=1$    &  5.35 & 3.080 & 0.482 \\
      & RG, $\delta=3$    &  5.82 &  5.435 & 0.912 \\
    \midrule
    \multirow{6}{*}{\textbf{Understanding}}
      & Gumbel            & 33.42 &  \textbf{-9.574} & 0.396 \\
      & HeavyWater        & 32.59 &  -6.852 & 0.308 \\
      & Simplex           & 31.50 &  -3.279 & 0.920 \\
      & Inv.\ Tr.         & 27.93 & 8.426 & \textbf{1.045} \\
      & RG, $\delta=1$    & 32.96 &  8.066 & 0.184 \\
      & RG, $\delta=3$    & 33.83 & 10.918 & 0.300 \\
    \midrule
    \multirow{6}{*}{\textbf{MultiNews}}
      & Gumbel            & 25.69 & 2.579 & 3.172 \\
      & HeavyWater        & 25.67 & 2.655 & 3.491 \\
      & Simplex           & 25.86 & \textbf{1.934} & 2.701 \\
      & Inv.\ Tr.         & 25.74 & 2.389 & 1.586 \\
      & RG, $\delta=1$    & 25.85 & 1.940 & 0.963 \\
      & RG, $\delta=3$    & 25.74 & 2.389 & \textbf{3.781} \\
    \bottomrule
  \end{tabular}
\end{table}


\begin{table}[ht]
\centering
\caption{Results on LCC Code Completion Dataset (Edit\_Similarity, higher is better)}
\label{tab:lcc-editsim}
\begin{tabular}{lc}
\toprule
\textbf{Watermark Method} & \textbf{Edit\_Similarity $\uparrow$} \\
\midrule
No Watermark              & 0.45 \\
Gumbel                    & 0.52 \\
\textbf{HeavyWater (Ours)}     & \textbf{0.52} \\
\textbf{SimplexWater (Ours)}   & 0.44 \\
Inverse Transform         & 0.45 \\
Red/Green $\delta=3$      & 0.43 \\
\bottomrule
\end{tabular}
\end{table}

\begin{figure}[htbp]
  \centering
  \begin{subfigure}{1\textwidth}
    \includegraphics[width=\linewidth]{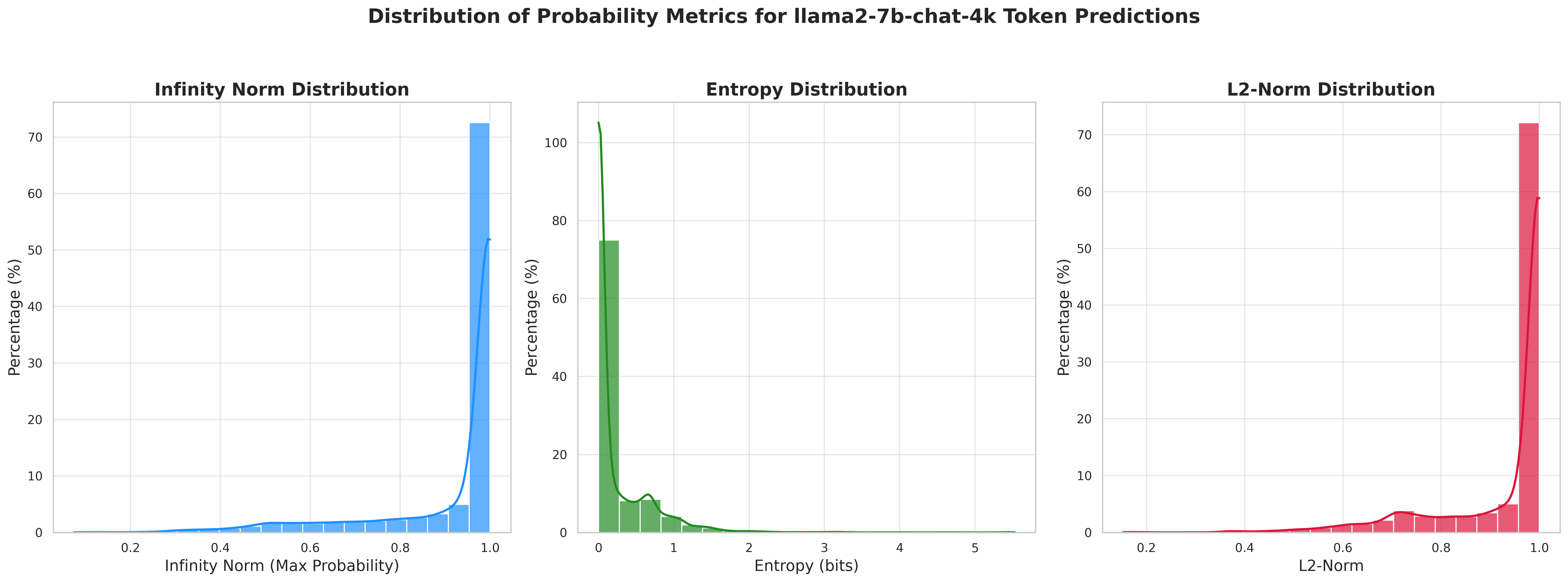}
  \end{subfigure}

  \begin{subfigure}{1\textwidth}
    \includegraphics[width=\linewidth]{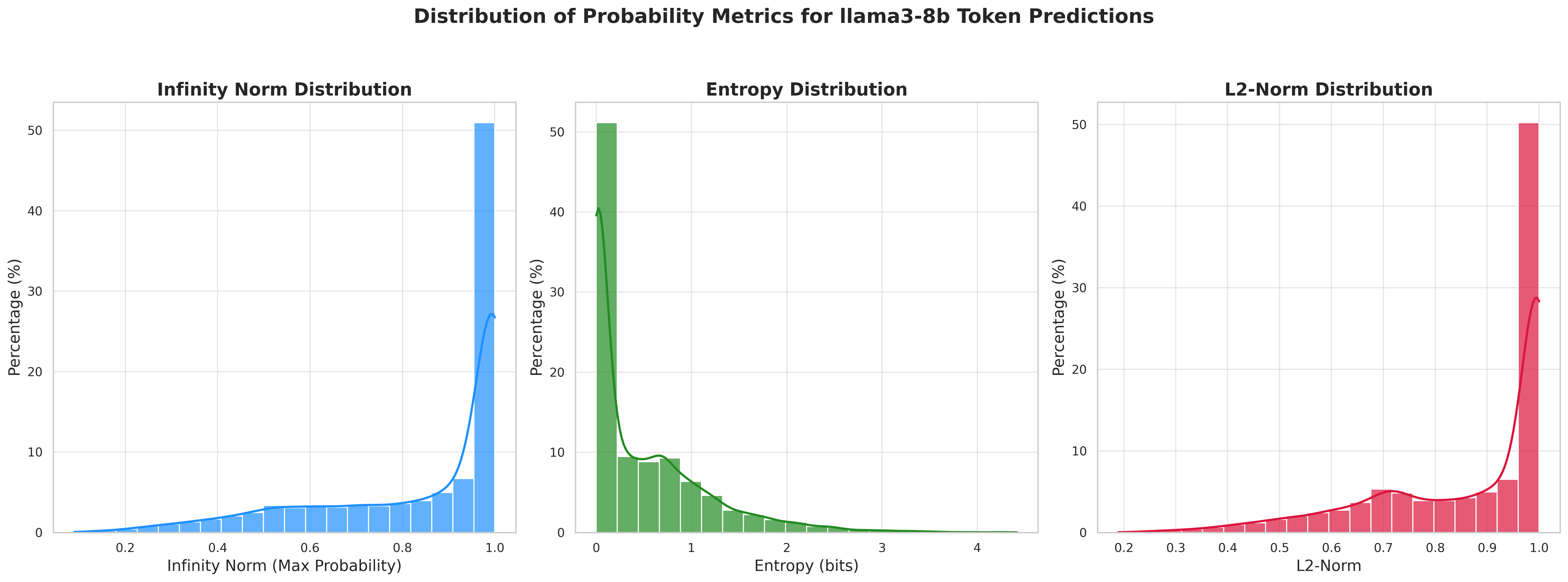}
  \end{subfigure}

  \begin{subfigure}{1\textwidth}
    \includegraphics[width=\linewidth]{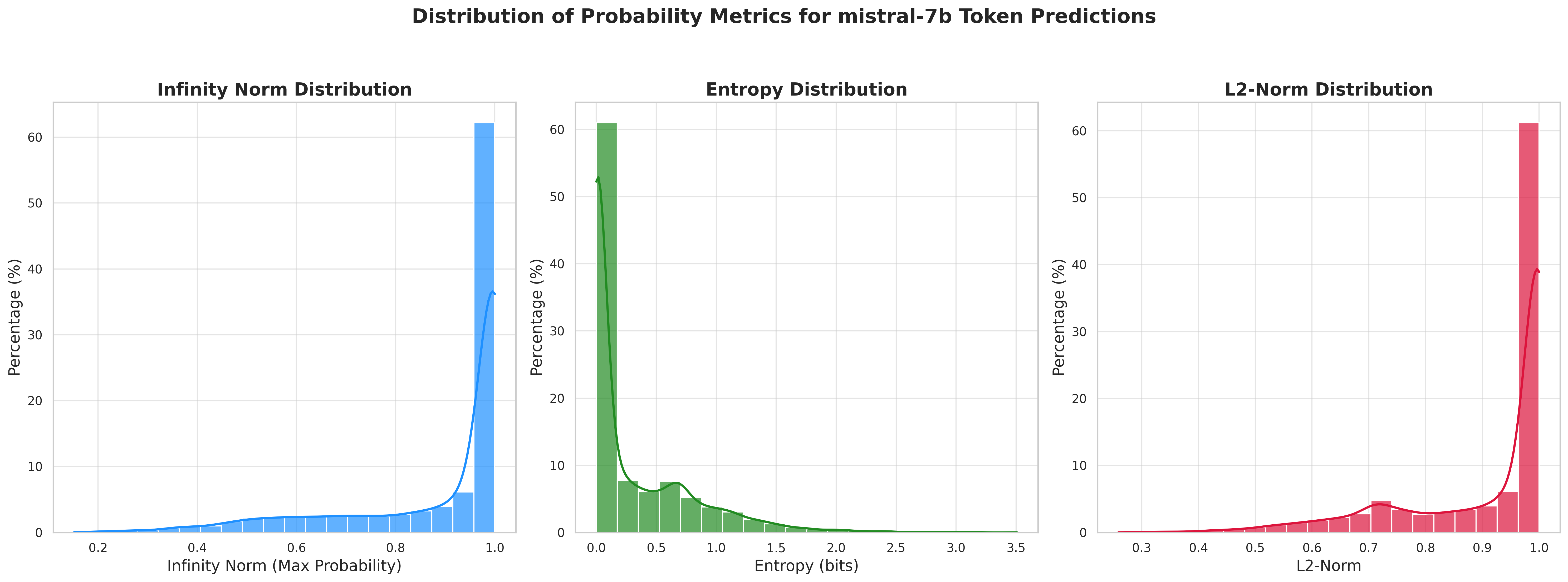}
  \end{subfigure}

  \caption{Histograms of statistics of token distributions on \textbf{Q\&A} dataset. 90\% token distributions fall into the low-entropy regime with infinity norm greater than $1/2$, i.e. $\max_x P(x)\geq \frac{1}{2}$.}
  \label{fig:histogram_stats_qa}
\end{figure}





\begin{figure}[htbp]
  \centering

  \begin{subfigure}{1\textwidth}
    \includegraphics[width=\linewidth]{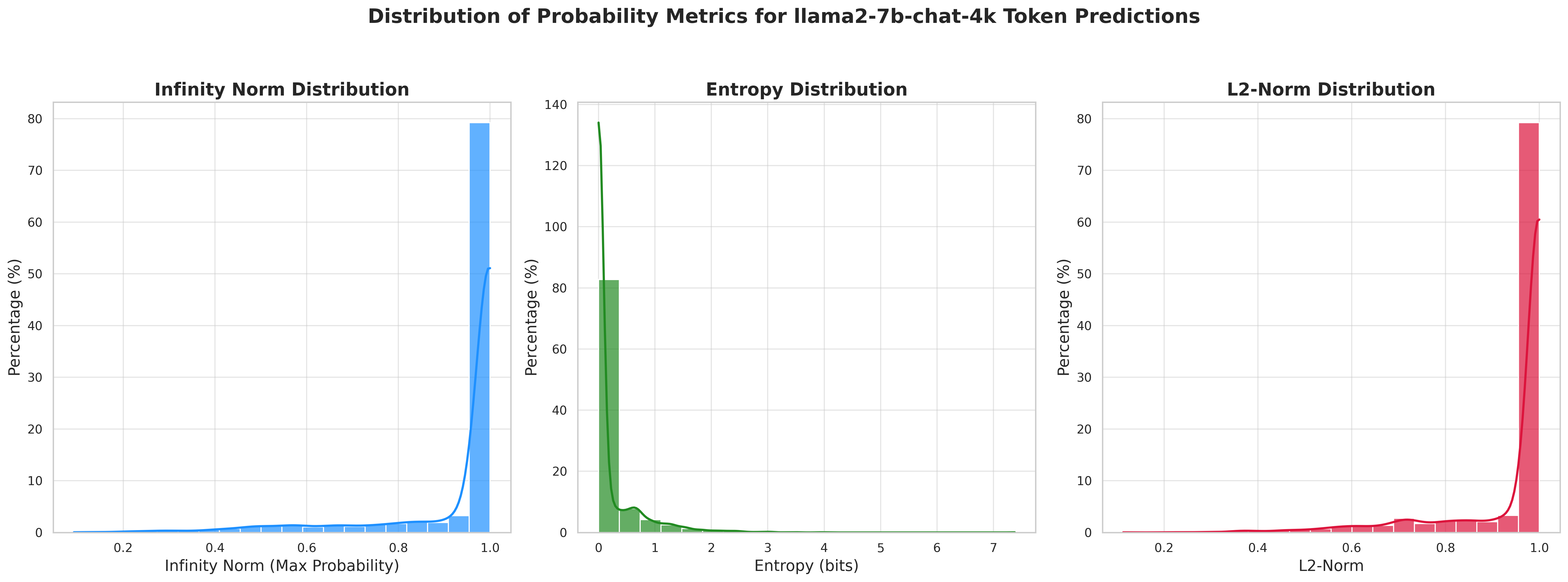}
  \end{subfigure}

  \begin{subfigure}{1\textwidth}
    \includegraphics[width=\linewidth]{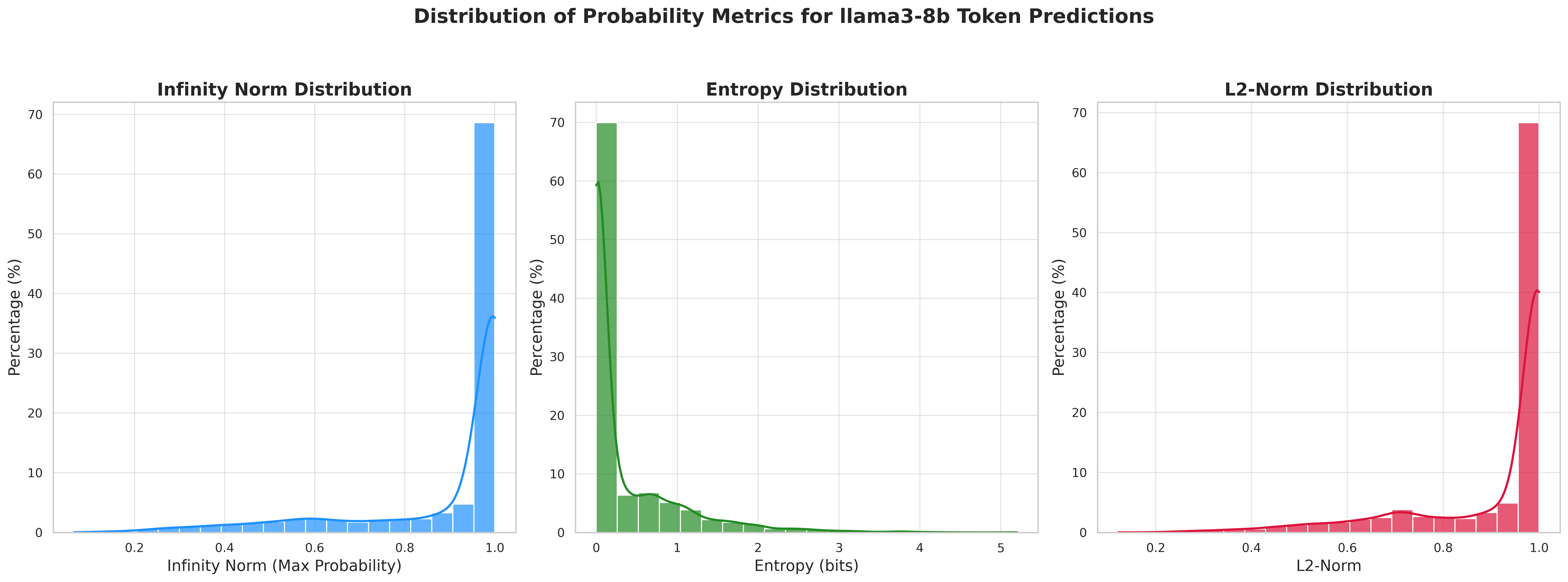}
  \end{subfigure}

  \begin{subfigure}{\textwidth}
    \includegraphics[width=\linewidth]{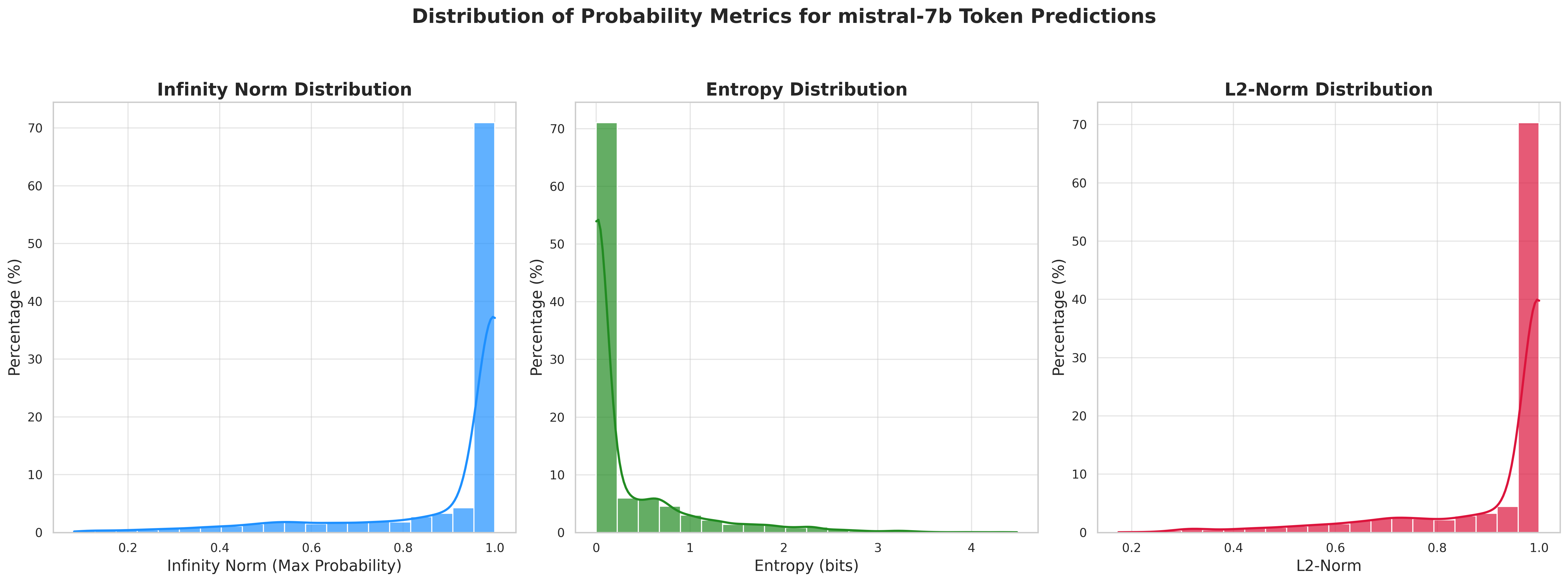}
  \end{subfigure}

  \caption{Histograms of statistics of token distributions on \textbf{coding} dataset. 93\% token distributions fall into the low-entropy regime with infinity norm greater than $1/2$, i.e. $\max_x P(x)\geq \frac{1}{2}$.}
  \label{fig:histogram_stats_coding}
\end{figure}

\end{document}